\documentclass[runningheads]{llncs}

\usepackage[T1]{fontenc}
\usepackage{lmodern}
\usepackage{cite}
\usepackage{booktabs} 
\usepackage[ruled,noend,linesnumbered]{algorithm2e} 
\usepackage{graphicx} 
\usepackage{array}
\usepackage{xcolor}
\newcommand{\safeincludegraphics}[2][]{%
  \IfFileExists{#2}{%
    \includegraphics[#1]{#2}%
  }{%
    \fbox{%
      \begin{minipage}[c][0.28\textwidth][c]{0.9\textwidth}
        \centering\small Missing figure file: \texttt{#2}
      \end{minipage}%
    }%
  }%
}

\SetAlFnt{\small}
\SetAlCapFnt{\small}
\SetAlCapNameFnt{\small}
\SetAlCapHSkip{0pt}
\IncMargin{-\parindent}

\definecolor{revcolor}{RGB}{160,0,0}

\usepackage{xspace}
\usepackage{todonotes}
\usepackage{mathtools}
\usepackage{amssymb}
\usepackage[hidelinks]{hyperref}

\usepackage[a4paper, left=1in, right=1in, top=1in, bottom=1in]{geometry}

\usepackage{tikz}
\usetikzlibrary{decorations.pathreplacing}

\let\lncsproof\proof
\let\endlncsproof\endproof
\renewenvironment{proof}{\lncsproof}{\qed\endlncsproof}

\newcommand{\SW}{\mathrm{SW}}

\newcommand{\fratio}{fair-ratio\xspace}

\newcommand{\AR}{\mathrm{FR}}
\newcommand{\FR}{\ensuremath{\mathrm{FR}}}

\newcommand{\DR}{\ensuremath{\mathrm{DRF}}\xspace}
\newcommand{\FOne}{\ensuremath{\mathrm{UNB}}\xspace}

\newcommand{\FTwos}{\ensuremath{\mathrm{BAL}^*}\xspace}
\newcommand{\ASF}{\ensuremath{\mathrm{ASF}}\xspace}
\newcommand{\AOne}{\ensuremath{\mathrm{ASF_1}}\xspace}
\newcommand{\ATwo}{\ensuremath{\mathrm{ASF_2}}\xspace}
\newcommand{\Hyb}{\ensuremath{\mathrm{Hybrid}}\xspace}

\newcommand{\opt}{\ensuremath{\mathrm{OPT}}\xspace}

\newcommand{\citep}[1]{\cite{#1}}
\newcommand{\citet}[1]{\cite{#1}}

\title{Strategy-proof Multi-Resource Allocation in Cloud Computing via Adaptive-Speed Fairness}
\titlerunning{Strategy-proof Multi-Resource Allocation in Cloud Computing via Adaptive-Speed Fairness}

\author{
Yunpeng Lou 
\and
Junjie Luo
}
\authorrunning{Y. Lou and J. Luo}
\institute{School of Mathematics and Statistics, Beijing Jiaotong University, Beijing 100044, China\\
\email{25121474@bjtu.edu.cn, jjluo1@bjtu.edu.cn}}

\begin{document}

\maketitle

\begin{abstract}
We study fair and strategy-proof allocation of multiple divisible resources with Leontief utilities, motivated by cloud computing. The canonical mechanism, Dominant Resource Fairness (DRF), satisfies sharing incentive (SI), envy-freeness (EF), strategy-proofness (SP), and Pareto optimality (PO), but can be highly inefficient in terms of utilitarian social welfare. Under the classical approximation benchmark, no mechanism satisfying even one of SI, EF, and SP can improve on the trivial worst-case guarantee. We therefore adopt the recently introduced \emph{fair-ratio} benchmark, which compares a mechanism only with the welfare-maximizing allocation that itself satisfies SI and EF. 
For two resources, we introduce Adaptive-Speed Fairness (\ASF), a unified parametric framework that captures previous mechanisms as special or boundary cases. Every \ASF mechanism satisfies SI, EF, and PO, and we derive a general sufficient condition that guarantees SP. Optimizing within this framework yields a strategy-proof mechanism with asymptotic fair-ratio \(2/(2\sqrt2-1)\approx1.09384\), substantially improving the previous best guarantee $3-\sqrt{3} \approx 1.268$. We complement this upper bound with a lower bound of $1.07894$ for all \ASF mechanisms, showing that our best mechanism is close to optimal within this framework. Experiments on synthetic and Google trace-generated instances support the theory and demonstrate strong empirical performance. 
Finally, we establish a sharp dimensional boundary. For the general setting with \(m\ge3\) resources, every mechanism satisfying SI and SP has \fratio exactly \(m\). The same factor-$m$ lower bound continues to hold for randomized mechanisms satisfying ex-post SI and truthfulness in expectation.

\keywords{Fair Division \and Mechanism Design \and Leontief Preference \and Cloud
Computing
}
\end{abstract}

\section{Introduction}

A central challenge in multi-resource allocation is to balance fairness, incentive compatibility, and efficiency when agents are strategic. This challenge is particularly salient in cloud computing, where users submit jobs with heterogeneous demands across resources such as CPU, memory, and bandwidth. An allocation rule should provide meaningful fairness guarantees and remain robust to strategic misreporting of these demands, without unnecessarily sacrificing efficiency. Designing mechanisms that simultaneously address these requirements is therefore a fundamental problem in multi-resource allocation.

A canonical solution to this problem is the Dominant Resource Fairness (DRF) mechanism proposed by Ghodsi et al.~\citep{Ghodsi2011}. DRF assumes Leontief preferences, meaning that each agent requires different resources in fixed proportions. For each agent, the resource demanded in the largest fraction of its total supply is called its dominant resource. Under such preferences, DRF generalizes max-min fairness to the multi-resource setting by maximizing and equalizing agents' dominant-resource shares. Remarkably, DRF simultaneously satisfies four fundamental properties for cloud resource allocation:
(1) Sharing incentive (SI): every user receives at least as much utility as under an equal division of every resource;
(2) Envy-freeness (EF): no user prefers another user's allocation to its own;
(3) Strategy-proofness (SP): no user can benefit from misreporting its resource demands; and
(4) Pareto optimality (PO): no feasible allocation can make one user better off without making another worse off.
Consequently, DRF has become a standard benchmark for multi-resource fairness, and SI, EF, SP, and PO have become the principal desiderata for evaluating alternative mechanisms.  

Despite these strong fairness and incentive guarantees, DRF can be highly inefficient in terms of utilitarian social welfare, defined as the sum of agents' utilities. 
Under the classical benchmark against the unconstrained social-welfare optimum, \citet{Parkes2015} showed that DRF can lose a factor $m$, where $m$ is the number of resource types. 
More fundamentally, this loss is unavoidable: no mechanism satisfying even one of SI, EF, and SP can guarantee an approximation ratio better than \(m\).
Thus, under the classical benchmark, retaining even a basic fairness or incentive requirement leaves no room for improving the worst-case welfare guarantee of DRF. 

To overcome this barrier, \citet{BeiLL22,BeiLL26} introduced the \fratio benchmark, which compares a mechanism with the welfare-maximizing allocation that is itself required to satisfy SI and EF. 
The key motivation is that fairness is typically a hard constraint in applications. Benchmarking a fair mechanism against an unconstrained optimum that need not satisfy the same fairness requirements can obscure meaningful efficiency differences among fair mechanisms.
Indeed, under the classical benchmark, every mechanism satisfying SI has worst-case approximation ratio of $m$, whereas under the \fratio benchmark, mechanisms satisfying the same four properties as DRF can have sharply different guarantees.
By restricting the benchmark to fair allocations, the \fratio therefore provides a more informative performance measure for mechanism design and enables meaningful efficiency comparisons among fair and strategy-proof mechanisms.

Within this framework, \citet{BeiLL26} showed that DRF can be far from optimal: its fair-ratio remains \(m\) for \(m\) resource types. 
How much this factor-\(m\) guarantee can be improved, however, depends sharply on the number of resources.
With two resources, substantial improvements are possible.
\citet{BeiLL26} proposed \(\FOne\) and \(\FTwos\), with asymptotic fair-ratios \(3/2\) and \(4/3\), respectively, and further combined them into the \Hyb mechanism, which achieves $3-\sqrt{3}\approx1.268$.
In contrast, for every \(m\ge 4\), no mechanism satisfying SI, EF, SP, and PO can improve upon the factor-\(m\) guarantee of DRF. The only unresolved case is \(m=3\), for which the best known lower bound is \(2\), leaving a gap to the factor-\(3\) upper bound.

These results leave two complementary questions. First, how closely can a mechanism satisfying SI, EF, SP, and PO approximate the fair optimum with two resources? Existing mechanisms show that substantial improvements over DRF are possible, but they do not provide a general design principle for exploiting the structure of two-resource instances while preserving strategy-proofness, and the best known fair-ratio remains noticeably above \(1\). Second, where exactly does the higher-dimensional impossibility begin? Does the full factor-\(m\) barrier already hold for \(m=3\), and can randomization circumvent this barrier for \(m\ge 3\)?

\subsection{Our Contributions}

We study mechanisms that retain the four standard properties of DRF (SI, EF, SP, and PO) while improving efficiency under the \fratio benchmark. Our results sharpen the picture initiated by \citet{BeiLL26}: the two-resource case admits substantially better strategy-proof mechanisms, whereas beyond two resources no nontrivial improvement over the trivial guarantee is possible.

\emph{A unified adaptive-speed framework for two resources.}
We first focus on the two-resource setting, which is both practically central and theoretically special. It captures the dominant CPU--memory abstraction in cloud computing, as well as coupled CPU--GPU architectures~\cite{Tang2016}. We introduce the Adaptive-Speed Fairness (\ASF) framework, which retains the two-step structure underlying prior mechanisms. Step~1 gives every agent the baseline SI allocation. Step~2 allocates the remaining resources by increasing the dominant shares of the two groups at a prescribed relative speed. The key new feature is to choose this speed ratio adaptively as a function $\Theta(\alpha,R_1^*,R_2^*)$,
where \(\alpha\) measures population imbalance and \(R_1^*,R_2^*\) capture the remaining-resource imbalance after Step~1. The earlier mechanisms \(\FOne\), \(\FTwos\), and the hybrid rule \(\Hyb\) can be viewed as special or boundary cases of this framework. We prove that every \(\ASF\) mechanism satisfies SI, EF, and PO, and derive a general sufficient condition on \(\Theta\) that guarantees SP. 
Thus, \ASF reduces the design of fair and strategy-proof mechanisms to choosing an adaptive speed rule subject to an explicit strategy-proofness condition.

\emph{Near-optimal mechanisms within the adaptive-speed framework.}
We develop two mechanisms with increasing levels of state dependence. The first, \(\AOne\), chooses its normalized speed according to population imbalance \(\alpha\) and achieves asymptotic \fratio \(6/5\). Our main two-resource mechanism, \(\ATwo\), jointly exploits population and remaining-resource information and achieves asymptotic \fratio \(2/(2\sqrt2-1)\approx1.09384\).
Both mechanisms satisfy SI, EF, SP, and PO. The guarantee of $\ATwo$  substantially improves the previous best asymptotic guarantee \(3-\sqrt{3}\approx1.268\) achieved by the hybrid mechanism. We complement this upper bound with a lower bound of \(1.07894\) for every \(\ASF\) mechanism whose Step~2 speed ratio depends only on \((\alpha,R_1^*,R_2^*)\). 
Thus, $\ATwo$ leaves a gap of less than $0.015$ to the best possible guarantee obtainable from this natural state information.
Experiments on synthetic and Google trace-generated instances further demonstrate the empirical effectiveness of \(\ATwo\).

\emph{A sharp boundary beyond two resources.}
Finally, we show that the positive two-resource results do not extend to higher dimensions. For every \(m\ge3\), every mechanism satisfying SI and SP has \fratio exactly \(m\). 
The \(m=3\) case was independently resolved in concurrent work by \citet{fithiansolving}.
Our proof gives a unified construction for all \(m\ge3\).
We further extend the impossibility to randomized mechanisms: every mechanism satisfying ex-post SI and truthfulness in expectation also has \fratio exactly \(m\).
Thus, randomization cannot overcome the \(m\)-barrier.
Together with the two-resource mechanisms above, these results establish a sharp dimensional dichotomy: nontrivial fair-ratio improvements are possible within two resources, whereas for every \(m\ge3\) the trivial \(m\)-guarantee is unavoidable.

\paragraph{Organization.}
The remainder of the paper is organized as follows. After reviewing further related work in the rest of this section, Section~\ref{sec:prelim} introduces the multi-resource allocation model, the four standard properties, and the \fratio benchmark. Section~\ref{sec:asf} develops the \ASF framework for two resources. Section~\ref{sec:mechanisms} instantiates this framework to obtain \(\AOne\) and \(\ATwo\), proves their \fratio guarantees, establishes the lower bound, and reports experimental results. Section~\ref{sec:beyond-two} proves the sharp impossibility result for \(m\ge3\) resources. Section~\ref{sec:con} concludes.

\subsection{Further Related Work}
\label{sec:related-work}

Fair and efficient multi-resource allocation has attracted substantial attention across computer science, economics, and operations research  since the seminal work of \citet{Ghodsi2011}. 
A large body of work improves efficiency by relaxing or rebalancing the fairness objectives of DRF \citep{Joe-Wong2013,Bonald2014,Bonald2015,Jiang2021,Jin2016,Tang2016,Tang2020,Grandl2014}. 
These mechanisms address the fairness--efficiency trade-off in important practical settings, and many preserve properties such as SI, EF, and PO, but generally sacrifice SP. 
For Leontief utilities, alternative fairness notions and mechanisms have also been studied \citep{Dolev2012,Gutman2012}, together with axiomatic characterizations of related solution concepts \citep{Nicolo2004,Friedman2011,Li2013}. 
The classical DRF model has also been extended in several directions, including multiple machines \citep{Tahir2015,Wang2016,Friedman2014,Wang2014}, bounded demands \citep{Li2017,Narayana2021}, and dynamic settings \citep{Kash2014,fikioris2024incentives,Friedman2017a}. 
In contrast to these extensions, \citet{BeiLL26} revisited the basic DRF model itself and introduced the \fratio benchmark, showing that strategy-proof mechanisms can substantially outperform DRF when evaluated against the best fair allocation. 
We build on this perspective by remaining in the classical DRF setting and seeking stronger efficiency guarantees under the \fratio benchmark.

Our work is also related to the broader literature on approximate mechanism design without money, initiated by \citet{ProcacciaT13}. 
Several papers study strategy-proof allocation without payments or priors under additive or linear utilities \citep{GuoC10,HanSTZ11,Cheung16}. 
Closer to our preference model, \citet{ColeGG13} study divisible-good allocation with linear or Leontief utilities and introduce the Partial Allocation mechanism, which achieves an \(e\)-approximation for Nash social welfare. 
Strategy-proof mechanisms combining fairness and efficiency have also been studied and characterized in indivisible-resource settings \citep{Amanatidis2017Truthful,BabaioffMorag2025Truthful}. 
These works share our focus on incentive compatibility without monetary transfers, but differ from ours in both the allocation model and efficiency objective.
We consider divisible multi-resource allocation with Leontief utilities, require the DRF properties SI, EF, SP, and PO simultaneously, and evaluate utilitarian efficiency relative to the best SI--EF allocation.

\paragraph{Concurrent work.}
Fithian, Niazadeh, and Nuti~\cite{fithiansolving} independently resolve the \(m=3\) case, proving that every deterministic mechanism satisfying SI and SP has fair-ratio \(3\).
Their result was developed as part of an AI-assisted project for solving open problems in operations research.
We obtained our \(m=3\) result independently and became aware of their work only after submitting the present paper to WINE 2026.
Interestingly, both proofs were developed with AI assistance and independently arrived at essentially the same harmonic-weight argument.
Our treatment additionally provides a unified construction for all \(m\ge3\) and extends the impossibility to randomized mechanisms satisfying ex-post SI and truthfulness in expectation.

\section{Preliminaries}\label{sec:prelim}

We first introduce the standard multi-resource allocation model and the notation used throughout the paper, following the conventions of the DRF literature (e.g., \citet{Parkes2015,BeiLL26}).

\paragraph{Multi-resource allocation.}
Let $N=\{1,\dots,n\}$ be the set of agents and let $R$ be the set of resource types, where $|R|=m$.
Each resource is divisible and has total supply normalized to one.
Each agent $i$ has a (normalized) \emph{demand vector} $\mathbf d_i=(d_{i1},\dots,d_{im})\in(0,1]^m$, with $\max_{r\in R}d_{ir}=1$.
A \emph{dominant resource} of agent $i$ is any resource $r_i^*\in\arg\max_{r\in R}d_{ir}$, and hence $d_{i r_i^*}=1$.
Throughout the formal model, we assume \(d_{ir}>0\) for every agent \(i\) and resource \(r\). In examples, we sometimes write zero entries for notational simplicity; these should be understood as limits as the corresponding positive demands tend to zero.
An instance is represented by an $n\times m$ matrix $\mathbf{I}$ whose $i$th row is $\mathbf{d}_i$.

An \emph{allocation} is an $n\times m$ matrix $\mathbf{A}$, where $A_{ir}$ denotes the fraction of resource $r$ assigned to agent~$i$.
An allocation is \emph{feasible} if
$\sum_{i\in N} A_{ir}\le 1$ for every resource $r\in R$.
Agents have \emph{Leontief utilities}: the utility of agent $i$ under bundle 
$\mathbf{A}_i$ is $u_i(\mathbf{A}_i)=\min_{r \in R} A_{ir}/d_{ir}$.
An allocation is \emph{non-wasteful} if each agent receives resources in proportion to its demand vector, i.e., for every $i$ there exists $y_i\ge 0$ such that $A_{ir}=y_i d_{ir}$ for all $r\in R$.

\paragraph{Mechanism.}
A \emph{mechanism} \(M\) is a function that maps each instance to a feasible allocation.
We write \(M_i(\mathbf{I})\) for the bundle assigned to agent \(i\) on instance \(\mathbf{I}\).
Except for Section~\ref{sec:beyond-two}, we always consider deterministic mechanisms.
We restrict attention to mechanisms that always output non-wasteful allocations.
The \emph{Dominant Resource Fairness} (DRF) rule of \citet{Ghodsi2011} selects the allocation that equalizes agents' dominant shares while respecting all resource constraints.
Let $x$ denote the common dominant share assigned to every agent.
DRF chooses the largest feasible value of $x$, namely $x^*=1/\max_{r\in R}\sum_{i\in N}d_{ir}$, and assigns bundle $x^*\mathbf{d}_i$ to each agent $i$.

\paragraph{Example~1.}
Consider an example with two resource types and three agents whose demand vectors are \(\mathbf d_1=(1,2/5)\), \(\mathbf d_2=(1,1/5)\), and \(\mathbf d_3=(1/5,1)\).
Agents 1 and 2 have resource~1 as their dominant resource, while agent 3 has resource~2 as its dominant resource. Under \DR, the common dominant share is \(x^*=5/11\), so the final allocation is
\(\mathbf A_1=(5/11,2/11)\), \(\mathbf A_2=(5/11,1/11)\), and \(\mathbf A_3=(1/11,5/11)\), with each agent receiving utility $5/11$ and the first resource exhausted.

\paragraph{Properties.}
We evaluate mechanisms using four standard axioms that are central to DRF and multi-resource fairness.
As shown by \citet{Ghodsi2011}, DRF satisfies all four.

\begin{definition}[Share Incentive (SI)]
An allocation $\mathbf{A}$ satisfies \emph{sharing incentive} if every agent receives utility at least $1/n$, i.e., $u_i(\mathbf{A}_i)\ge 1/n$ for all $i\in N$.
\end{definition}

\begin{definition}[Envy-Freeness (EF)]
An allocation $\mathbf{A}$ is \emph{envy-free} if no agent prefers another agent's bundle, i.e., $u_i(\mathbf{A}_i)\ge u_i(\mathbf{A}_j)$ for all $i,j\in N$.
\end{definition}

\begin{definition}[Pareto Optimality (PO)]
An allocation $\mathbf{A}$ is \emph{Pareto optimal} if there is no feasible allocation $\mathbf{A}'$ that makes some agent strictly better off without making any agent worse off.
\end{definition}

A mechanism is SI, EF, or PO if, on every instance, its output satisfies the corresponding property.

\begin{definition}[Strategy-Proofness (SP)]
A mechanism \(M\) is \emph{strategy-proof} if no agent can improve its utility by misreporting its demand vector.
Formally, for every instance $\mathbf{I}$, every agent $i$, and every alternative report $\mathbf{d}'_i\in(0,1]^m$, we have $u_i(M_i(\mathbf{I})) \ge u_i\bigl(M_i(\mathbf{I}')\bigr)$,
where $\mathbf{I}'$ is obtained from $\mathbf{I}$ by replacing agent $i$'s row with $\mathbf{d}'_i$.
The utility after misreporting is evaluated using the true demand vector.
\end{definition}

\paragraph{Fair-ratio benchmark.}
  
The \emph{social welfare} of an allocation $\mathbf{A}$ is the total utility obtained by all agents: $\SW(\mathbf{A}) = \sum_{i\in N} u_i(\mathbf{A}_i)$.
We use the \fratio benchmark of \citet{BeiLL26}, which compares a mechanism with the best allocation that satisfies the fairness constraints SI and EF, rather than with the unconstrained welfare optimum.

\begin{definition}
The \emph{\fratio} of a mechanism \(M\) is the worst-case ratio, over all instances $\mathbf{I}$, between the maximum social welfare achievable by any allocation satisfying SI and EF and the social welfare achieved by \(M(\mathbf{I})\):
\[
\AR(M)
  = \max_{\mathbf{I}\in\mathcal{I}}
    \frac{\max_{\mathbf{A}:\text{SI and EF}} \SW(\mathbf{A})}
    {\SW(M(\mathbf{I}))}.
\]
\end{definition}

\section{Adaptive-Speed Fairness Framework}
\label{sec:asf}

This section develops the Adaptive-Speed Fairness (\ASF) framework for the two-resource setting. Section~\ref{sec:framework} defines the framework. Section~\ref{sec:sp-condition} derives a sufficient condition under which the chosen speed rule also guarantees SP. Section~\ref{sec:fair-ratio-tools} then develops the comparison lemmas used to guide the design of efficient \ASF mechanisms and to analyze their \fratio.

\subsection{A Unified Parametric Mechanism Framework}
\label{sec:framework}

We now define the \ASF framework. The framework keeps the two-step structure of the prior mechanisms of \citet{BeiLL26}: Step~1 gives every agent the baseline SI bundle, and Step~2 allocates the remaining resources by increasing the two dominant-resource groups. The design choice is the relative speed at which the two groups grow in Step~2. The key idea of \ASF is to make this speed depend on the state of the instance.

Consider two resources.
Let $G_1=\{i\in N\mid d_{i,1}=1\}$ and $G_2=\{i\in N\mid d_{i,1}<1\}$ denote the two dominant-resource groups. 
Agents whose demand vector is $(1,1)$ are included in $G_1$. 
After relabeling resources if necessary,
we assume $|G_1|\ge |G_2|$ and define the minority population ratio
\[
    \alpha:=|G_2|/n.
\]
If one group is empty, then the baseline allocation already exhausts the
common dominant resource, so no Step~2 allocation is possible. We therefore
treat this as a degenerate case and assume both groups are nonempty in the
definition of \ASF, so that $\alpha\in(0,1/2]$.

\emph{Step 1: baseline allocation.}
Every agent receives the baseline bundle $\mathbf d_i/n$, which guarantees SI.
Let $R_1$ and $R_2$ be the remaining amounts of resources 1 and 2 after this
baseline allocation, i.e.,
\[
    R_1
    =
    1-\frac{|G_1|}{n}
    -\frac{1}{n}\sum_{i\in G_2}d_{i,1},
    \qquad
    R_2
    =
    1-\frac{|G_2|}{n}
    -\frac{1}{n}\sum_{i\in G_1}d_{i,2}.
\]
Following \FTwos, we use the augmented remaining resources
\[
    R_1^*
    :=
    R_1+\frac{1}{n}\min_{i\in G_2}d_{i,1},
    \qquad
    R_2^*
    :=
    R_2+\frac{1}{n}\min_{i\in G_1}d_{i,2},
\]
which is important to guarantee SP for \FTwos.
The state variables used by the framework in Step~2 will be $\alpha$, $R_1^*$, and $R_2^*$.

\emph{Step 2: adaptive-speed water filling.}
The central problem is how to allocate the remaining resources in Step~2
while maintaining SI, EF, PO, and SP. 
An \ASF speed rule is a positive function
\[
    \Theta=\Theta(\alpha,R_1^*,R_2^*)
\]
defined on the feasible state space. Given an instance, the mechanism sets $\theta=\Theta(\alpha,R_1^*,R_2^*)$
and allocates the remaining resources so that the total Step~2 dominant-share increments \(\Delta S_1\) and \(\Delta S_2\) of the two groups satisfy
\begin{equation}
\label{eq:asf-speed}
    \frac{\Delta S_1}{\Delta S_2}=\theta.
\end{equation}
A larger value of \(\theta\) shifts Step~2 growth toward \(G_1\), whereas a smaller value shifts growth toward \(G_2\). Accordingly, holding \(\alpha\) fixed, we require \(\Theta\) to be nondecreasing in \(R_1^*\) and nonincreasing in \(R_2^*\).
We call any speed rule satisfying these two conditions \emph{monotone}.

Within each group, Step~2 is implemented by water filling. The mechanism repeatedly increases agents with the smallest current amount of non-dominant resource in their group. This rule allocates each group’s aggregate dominant-share increase in the most resource-efficient order and preserves EF within the group. Algorithm~\ref{alg:ASF} gives the full implementation; Lines~9--15 compute the next step length by accounting for both the next active-set expansion and the remaining-resource constraints.

\begin{algorithm}[t]
\DontPrintSemicolon
\SetAlFnt{\small\linespread{1.2}\selectfont}
$C\leftarrow (c_1, c_2) = (1,1)$  \tcp*{remaining resources}
$G_1\leftarrow\{i \mid d_{i,1}=1\}$; $G_2\leftarrow\{i \mid d_{i,1}<1\}$; $\alpha \leftarrow |G_2|/n$ \\
\ForEach{$i \in N$}
{
  $\mathbf{A}_i\leftarrow \mathbf{d}_i$/n \tcp*{baseline SI allocation in Step 1}
  $C \leftarrow C-\mathbf{A}_i$
}
$R_1^* \leftarrow c_1+\min\limits_{i\in G_2}d_{i,1}/n$; $R_2^* \leftarrow c_2+\min\limits_{i\in G_1}d_{i,2}/n$ \tcp*{augmented remaining resources}
$\theta \leftarrow \Theta(\alpha,R_1^*,R_2^*)$ \tcp*{adaptive speed ratio for Step 2}
\While{$c_1>0$ and $c_2>0$}
{
  \ForEach{$k = 1,2$}
  {
    $P_k \leftarrow \arg\min_{i \in G_k} A_{i,3-k}$ \tcp*{active set in group $k$}
    $D_k \leftarrow \sum_{i \in P_k} \frac{1}{d_{i,3-k}}$ \tcp*{aggregate demand ratio over the active set}
    $\delta_k \leftarrow \min\limits_{i \in G_k \setminus P_k} A_{i,3-k}-\min\limits_{i \in P_k} A_{i,3-k}$ \tcp*{next water-filling breakpoint}
  }
  $\overline{\delta_1} \leftarrow \frac{c_2}{|P_1|+D_1/\theta}$; $\overline{\delta_2} \leftarrow \frac{c_1}{|P_2|+\theta D_2}$ \tcp*{resource caps}
  $\lambda \leftarrow \min\left\{\frac{\delta_1D_1}{\theta},\delta_2D_2,\frac{\overline{\delta_1}D_1}{\theta},\overline{\delta_2}D_2\right\}$ \tcp*{next feasible step}
  $\delta_1^* \leftarrow \theta\lambda/D_1$; $\delta_2^* \leftarrow \lambda/D_2$  \tcp*{increase steps satisfying $\frac{\delta_1^* D_1}{\delta_2^* D_2}=\theta$}
  \ForEach{$k\in\{1,2\}$ and $i \in P_k$}
  {
    $\mathbf{A}_i \leftarrow \mathbf{A}_i+ \frac{\delta_k^*}{d_{i,3-k}}\mathbf{d}_i$
    \tcp*{increase the non-dominant resource by the same $\delta_k^*$}
    $C \leftarrow C-\frac{\delta_k^*}{d_{i,3-k}}\mathbf{d}_i$
  }
}
\Return $\mathbf{A}$
\caption{$\mathrm{ASF}_{\Theta}(\mathbf{d}_1,\mathbf{d}_2,\dots,\mathbf{d}_n)$}
\label{alg:ASF}
\end{algorithm}

\begin{remark}
The \ASF framework captures the previous two-resource mechanisms through their Step~2 speed rules. The mechanism \(\FTwos\) is obtained by choosing $\Theta(\alpha,R_1^*,R_2^*)=R_1^*/R_2^*$, so the speed depends only on the augmented remaining-resource imbalance. The mechanism \(\FOne\) is the boundary rule that assigns all Step~2 growth to the minority group \(G_2\), corresponding formally to \(\Theta(\alpha,R_1^*,R_2^*)=0\). Since \ASF speed rules are required to be positive, \(\FOne\) is a boundary case rather than a proper \ASF mechanism. The hybrid mechanism \(\Hyb\) is a piecewise boundary rule: depending only on \(\alpha\), it selects either the \(\FOne\) rule or the \(\FTwos\) rule. Thus \ASF replaces fixed or threshold-based speed choices with a state-dependent design space in which the Step~2 speed can depend simultaneously on \(\alpha\), \(R_1^*\), and \(R_2^*\).
\end{remark}

\paragraph{Example~1 continued.}
We illustrate the Step~2 rule on the instance from Example~1; see Figure~\ref{fig:example}. Recall that \(\mathbf d_1=(1,2/5)\), \(\mathbf d_2=(1,1/5)\), and \(\mathbf d_3=(1/5,1)\). Thus \(G_1=\{1,2\}\), \(G_2=\{3\}\), and \(\alpha=1/3\). After Step~1, the remaining resources are \((R_1,R_2)=(4/15,7/15)\), and the augmented remaining resources are \(R_1^*=1/3\) and \(R_2^*=8/15\).
The \(\DR\) allocation admits an equivalent two-step interpretation: after the
baseline SI allocation in Step~1, Step~2 increases all agents' dominant shares
at the same rate until a resource is exhausted. Unlike \(\ASF\), however, this
second step does not use within-group water filling. The resulting utility is
\(5/11\) for every agent, giving social welfare \(15/11\approx 1.36\).
The boundary rule \(\FOne\) assigns all Step~2 growth to agent~3, giving \(\mathbf A_3=(4/25,4/5)\) and social welfare \(22/15\approx 1.47\).
Next consider \(\FTwos\), which sets \(\theta=R_1^*/R_2^*=5/8\). Initially, the active sets are \(P_1=\{2\}\) and \(P_2=\{3\}\). Hence Step~2 increases agents 2 and 3 so that the ratio of their increase in dominant resource is $5/8$. Resource~1 is exhausted before agent~1 becomes active. Thus \(\FTwos\) outputs \(\mathbf A_1=(1/3,2/15)\), \(\mathbf A_2=(53/99,53/495)\), and \(\mathbf A_3=(13/99,65/99)\), with social welfare \(151/99\approx1.53\).
The \(\ASF\) mechanism introduced later in Section~\ref{sec:mechanisms} chooses a smaller speed ratio on this instance, shifting more Step~2 growth to agent~3 and increasing the social welfare to approximately \(1.55\).

\begin{figure}[t]
    \centering

    \begingroup
    \def\deepred{red!80!black!100}
    \def\lightred{red!45}
    \def\deepblue{blue!70}
    \def\lightblue{blue!45}
    \def\deepgreen{green!55!black}
    \def\lightgreen{green!35}

    \begin{tikzpicture}[scale=3.85, every node/.style={font=\scriptsize}]
        \def\w{0.22}
        \def\rOneBot{0.2/3}
        \def\rOneTop{1/3}
        \def\rTwoBot{1/3}
        \def\rTwoTop{0.8}
        \def\sqw{0.13}
        \def\sqh{0.08}

        \begin{scope}[shift={(-0.12,0)}]
            \filldraw[fill=\lightred] (0,0) rectangle (\w,0.2/3);
            \filldraw[fill=white] (0,0.2/3) rectangle (\w,1/3);
            \filldraw[fill=\deepgreen] (0,1/3) rectangle (\w,2/3);
            \filldraw[fill=\deepblue] (0,2/3) rectangle (\w,1);

            \filldraw[fill=\deepred] (\w,0) rectangle (2*\w,1/3);
            \filldraw[fill=white] (\w,1/3) rectangle (2*\w,0.8);
            \filldraw[fill=\lightgreen] (\w,0.8) rectangle (2*\w,0.8+0.2/3);
            \filldraw[fill=\lightblue] (\w,0.8+0.2/3) rectangle (2*\w,1);

            \node[anchor=south,font=\footnotesize] at (\w,1.025) {(a) Step 1};

            \draw[decorate,decoration={brace,raise=3pt},line width=0.35pt] (0,0.2/3) -- (0,1/3);
            \node[anchor=east,font=\scriptsize] at (-0.035,0.205) {$R_1$};

            \draw[decorate,decoration={brace,raise=17pt},line width=0.35pt] (0,0) -- (0,1/3);
            \node[anchor=east,font=\scriptsize] at (-0.155,0.16) {$R^*_1$};

            \draw[decorate,decoration={brace,raise=3pt},line width=0.35pt] (0,1/3) -- (0,1);
            \node[anchor=east,font=\scriptsize] at (-0.035,2/3) {$1-\alpha$};

            \draw[decorate,decoration={brace,mirror,raise=3pt},line width=0.35pt] (2*\w,1/3) -- (2*\w,0.8);
            \node[anchor=west,font=\scriptsize] at (2*\w+0.035,0.57) {$R_2$};

            \draw[decorate,decoration={brace,mirror,raise=17pt},line width=0.35pt] (2*\w,1/3) -- (2*\w,0.8+0.2/3);
            \node[anchor=west,font=\scriptsize] at (2*\w+0.155,0.60) {$R^*_2$};

            \draw[decorate,decoration={brace,mirror,raise=3pt},line width=0.35pt] (2*\w,0) -- (2*\w,1/3);
            \node[anchor=west,font=\scriptsize] at (2*\w+0.035,1/6) {$\alpha$};

            \filldraw[fill=\deepblue] (-0.20,-0.16) rectangle (-0.10,-0.06);
            \filldraw[fill=\lightblue] (-0.10,-0.16) rectangle (0,-0.06);
            \node[anchor=north,font=\scriptsize] at (-0.10,-0.18) {Agent $1$};

            \filldraw[fill=\deepgreen] ({\w-0.10},-0.16) rectangle (\w,-0.06);
            \filldraw[fill=\lightgreen] (\w,-0.16) rectangle ({\w+0.10},-0.06);
            \node[anchor=north,font=\scriptsize] at (\w,-0.18) {Agent $2$};

            \filldraw[fill=\lightred] (2*\w,-0.16) rectangle ({2*\w+0.10},-0.06);
            \filldraw[fill=\deepred] ({2*\w+0.10},-0.16) rectangle ({2*\w+0.20},-0.06);
            \node[anchor=north,font=\scriptsize] at ({2*\w+0.10},-0.18) {Agent $3$};
        \end{scope}

        \def\residualboxes{%
            \filldraw[fill=white] (0,\rOneBot) rectangle (\w,\rOneTop);
            \filldraw[fill=white] (\w,\rTwoBot) rectangle (2*\w,\rTwoTop);
        }
        \def\sq#1#2#3{\filldraw[fill=#3] (#1,#2) rectangle +(\sqw,\sqh);}

        \begin{scope}[shift={(0.92,0)}]
            \residualboxes

            \filldraw[fill=\lightred] (0,0.2/3) rectangle (\w,1/11);
            \filldraw[fill=\deepgreen] (0,1/11) rectangle (\w,7/33);
            \filldraw[fill=\deepblue] (0,7/33) rectangle (\w,1/3);

            \filldraw[fill=\deepred] (\w,1/3) rectangle (2*\w,5/11);
            \filldraw[fill=\lightgreen] (\w,12/17) rectangle (2*\w,0.8-8/165);
            \filldraw[fill=\lightblue] (\w,0.8-8/165) rectangle (2*\w,0.8);

            \node[anchor=south,font=\footnotesize] at (\w,1.025) {(b) DRF Step 2};

            \sq{-0.065}{-0.168}{\deepblue}
            \node[anchor=base,font=\scriptsize] at (0.11,-0.144) {$=$};
            \sq{0.155}{-0.168}{\deepgreen}
            \node[anchor=base,font=\scriptsize] at (0.33,-0.144) {$=$};
            \sq{0.375}{-0.168}{\deepred}
        \end{scope}

        \begin{scope}[shift={(1.62,0)}]
            \residualboxes

            \filldraw[fill=\lightred] (0,0.2/3) rectangle (\w,0.16);
            \filldraw[fill=\deepred] (\w,1/3) rectangle (2*\w,0.8);

            \node[anchor=south,font=\footnotesize] at (\w,1.025) {(c) UNB Step 2};

            \begin{scope}[shift={(\sqw/2,0)}]
                \sq{\w-0.22-\sqw/2}{-0.168}{\deepblue}
                \node[anchor=base,font=\scriptsize] at ({\w-0.11},-0.144) {$=$};
                \sq{\w-\sqw/2}{-0.168}{\deepgreen}
                \node[anchor=base,font=\scriptsize] at ({\w+0.11},-0.144) {$=$};
                \node[anchor=base west,inner sep=0pt,outer sep=0pt,font=\scriptsize] at ({\w+0.22-\sqw/2},-0.144) {$0$};
            \end{scope}
        \end{scope}

        \begin{scope}[shift={(2.32,0)}]
            \residualboxes

            \filldraw[fill=\lightred] (0,0.2/3) rectangle (\w,13/99);
            \filldraw[fill=\deepgreen] (0,13/99) rectangle (\w,1/3);

            \filldraw[fill=\deepred] (\w,1/3) rectangle (2*\w,65/99);
            \filldraw[fill=\lightgreen] (\w,0.8-4/99) rectangle (2*\w,0.8);

            \node[anchor=south,font=\footnotesize] at (\w,1.025) {(d) BAL$^*$ Step 2};

            \sq{0.020}{-0.095}{\deepgreen}
            \draw[line width=0.4pt] (0.010,-0.128) -- (0.160,-0.128);
            \sq{0.020}{-0.241}{\deepred}
            \node[anchor=base,inner sep=0pt,outer sep=0pt,font=\scriptsize] at (\w,-0.144) {$=$};
            \node[anchor=base west,inner sep=0pt,outer sep=0pt,font=\scriptsize] at ({\w+0.06},-0.144) {$\dfrac{R_1^*}{R_2^*}$};
        \end{scope}

        \begin{scope}[shift={(3.02,0)}]
            \residualboxes



            \filldraw[fill=\lightred] (0,0.2/3) rectangle (\w,0.1384665561453043);
            \filldraw[fill=\deepgreen] (0,0.1384665561453043) rectangle (\w,1/3);

            \filldraw[fill=\deepred] (\w,1/3) rectangle (2*\w,0.6923327807265214);
            \filldraw[fill=\lightgreen] (\w,0.7610266445623942) rectangle (2*\w,0.8);
            
            \node[anchor=south,font=\footnotesize] at (\w,1.025) {(e) ASF Step 2};

            \sq{0.020}{-0.095}{\deepgreen}
            \draw[line width=0.4pt] (0.010,-0.128) -- (0.160,-0.128);
            \sq{0.020}{-0.241}{\deepred}
            \node[anchor=base,inner sep=0pt,outer sep=0pt,font=\scriptsize] at (\w,-0.144) {$=$};
            \node[anchor=base west,inner sep=0pt,outer sep=0pt,font=\scriptsize] at ({\w+0.06},-0.144)
                {$\Theta(\alpha,R_1^*,R_2^*)$};
        \end{scope}

    \end{tikzpicture}
    \endgroup

    \caption{
    Illustration of the two-step allocation process for Example~1 with
    $\mathbf d_1=(1,2/5)$, $\mathbf d_2=(1,1/5)$, and $\mathbf d_3=(1/5,1)$.
    Panel (a) shows the allocation after Step~1.
    Panels (b)--(e) illustrate the allocation of the remaining resources
    $R_1$ and $R_2$ in Step~2 under $\DR$, $\FOne$, $\FTwos$, and $\ASF$, respectively.
    }

    \label{fig:example}
\end{figure}

By construction, every \(\ASF\) mechanism satisfies SI, EF, and PO, and Algorithm~\ref{alg:ASF} is polynomial-time implementable given the speed ratio. The argument is essentially the same as for prior two-step mechanisms such as \(\FTwos\). 

\begin{lemma}
\label{lem:ASF-3-properties}
Every \(\ASF\) mechanism satisfies SI, EF, and PO. Given its speed ratio, Algorithm~\ref{alg:ASF} can be implemented in $O(n^2)$ time.
\end{lemma}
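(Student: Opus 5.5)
We prove the three properties separately, using only the structure of Algorithm~\ref{alg:ASF}: Step~1 allocation, within-group water filling, and termination when a resource is exhausted. The speed $\theta$ is positive and fixed throughout Step~2.

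\emph{SI.} Step~1 gives each agent $\mathbf d_i/n$, so every utility is $1/n$. Step~2 only adds positive multiples of $\mathbf d_i$. Feasibility is preserved because each step length $\lambda$ is capped by $\overline{\delta_1},\overline{\delta_2}$: a full step consumes at most $c_2$ of resource~2 and $c_1$ of resource~1. Hence every utility stays at least $1/n$.

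\emph{EF.} Every bundle is a multiple $y_i\mathbf d_i$, and agent $i$ values $y_j\mathbf d_j$ at $y_j\min_r d_{jr}/d_{ir}$. We split into two cases.
\begin{itemize}
\item Same group $G_1$. Here $d_{i1}=d_{j1}=1$, so agent $i$ values $j$'s bundle at $\min\{y_j,\,y_j d_{j2}/d_{i2}\}$. Water filling gives $y_k=\max\{1/n,\,h/d_{k2}\}$, where $h$ is the final water level of the non-dominant resource. If $y_j=1/n$, then $i$'s value for $j$'s bundle is at most $1/n\le y_i$. Otherwise $y_jd_{j2}=h\le y_id_{i2}$, so the value is at most $y_id_{i2}/d_{i2}=y_i$.
\item Across groups, $i\in G_1$ and $j\in G_2$. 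Agent $i$ values $j$'s bundle at most at $y_jd_{j1}$, which is $j$'s share of resource~1. It suffices to show this is at most $y_i$. This is where PO-type exhaustion and the augmented quantity matter. I would argue that Step~2 stops as soon as one resource is exhausted. Since agents in $G_1$ have $d_{\cdot 1}=1$, we have $y_jd_{j1}\le 1-\sum_{k\in G_1}y_k$ and $y_i\ge 1/n$. I expect this may not immediately give the bound, so the fallback is to follow the \FTwos argument of \citet{BeiLL26} verbatim.
\end{itemize}
The cross-group case is the main obstacle. The key fact there is that $i$ cannot envy $j$ because $j$'s bundle of $i$'s dominant resource never exceeds $i$'s own dominant share. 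I would try to derive this from the ordering of dominant shares and non-dominant amounts that water filling preserves, treating the two groups symmetrically. Since the argument is identical to the one for \FTwos, where only the value of $\theta$ differs, I would cite that proof and note that it never uses the specific form of $\theta$.

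\emph{PO.} The loop ends only when $c_1=0$ or $c_2=0$. Suppose resource~$r$ is exhausted. Every agent demands a positive amount of $r$, so any allocation raising some agent's utility without lowering others' would need strictly more of $r$ than is available. Non-wastefulness means no agent holds surplus of $r$ that could be redistributed. Hence the outcome is Pareto optimal.

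\emph{Running time.} Each loop iteration either enlarges an active set $P_k$ (at most $n$ times) or exhausts a resource (once). So there are $O(n)$ iterations, each costing $O(n)$ to recompute $P_k$, $D_k$, $\delta_k$ and update bundles. The total is $O(n^2)$.
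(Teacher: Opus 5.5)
Your SI, PO, within-group EF, and running-time arguments are correct and match the paper's. Your formula $y_k=\max\{1/n,h/d_{k2}\}$ is simply an explicit version of the paper's remark that water filling equalizes the group minimum. The genuine gap is the cross-group EF case, which you leave unproved. Your direct bound $y_jd_{j1}\le 1-\sum_{k\in G_1}y_k$ only gives $y_jd_{j1}\le |G_2|/n=\alpha$. That is not at most $1/n$ once $|G_2|\ge2$, and subtracting the other $G_2$ agents' Step~1 usage does not help, since their $d_{\ell1}/n$ can be arbitrarily small. Deferring to the \FTwos proof is a citation, not an argument. Also, neither PO-type exhaustion nor the augmented quantities $R_k^*$ play any role in this step.

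The missing idea is a short water-level feasibility argument showing that every agent receives at most $1/n$ of its non-dominant resource. Take $j\in G_2$. After Step~1, $A_{j1}=d_{j1}/n<1/n$. Step~2 only raises agents of $G_2$ to their common water level $h_2$ on resource~1, so $A_{j1}=\max\{d_{j1}/n,h_2\}$. Suppose $A_{j1}>1/n$. Then $h_2>1/n$, so every agent of $G_2$ holds more than $1/n$ of resource~1. Every agent of $G_1$ holds at least $1/n$ of resource~1, which is its dominant resource. Total use of resource~1 would then exceed $|G_1|/n+|G_2|/n=1$, contradicting feasibility. Hence $A_{j1}\le 1/n\le y_i$ for every $i\in G_1$. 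Since $i$ values $j$'s bundle at most $A_{j1}/d_{i1}=A_{j1}$, there is no envy. The case $i\in G_2$, $j\in G_1$ is identical with the resources exchanged, using $d_{i2}=1$ for $i\in G_2$. This is exactly the paper's argument. The same ``at most $1/n$ of the non-dominant resource'' fact is what Lemma~\ref{lem:no-group-switch} later relies on, so it is worth proving explicitly rather than citing.
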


\begin{proof}
We first analyze the running time. In each iteration of Step~2, either some resource is exhausted or at least one new agent joins the active set in one of the two groups. Hence there are at most \(n\) active-set expansion events before termination. Since each iteration can be implemented in $O(n)$ time, the total running time is $O(n^2)$.

SI holds after Step~1, since every agent receives utility \(1/n\). PO holds because the allocation is non-wasteful and Step~2 continues until at least one resource is exhausted; hence no feasible allocation can increase some agent's utility without decreasing another agent's utility.

It remains to prove EF. After Step~1, every agent receives dominant share \(1/n\), so no agent envies another agent. During Step~2, within each group the mechanism always increases agents with the smallest current amount of non-dominant resource. Thus the minimum non-dominant allocation in each group is equalized over time, and no within-group envy is created.

Finally, there is no cross-group envy. Consider group \(G_1\). If some agent in \(G_1\) received more than \(1/n\) of resource~2, then by the water-filling rule every agent in \(G_1\) would receive more than \(1/n\) of resource~2. But every agent in \(G_2\) already receives \(1/n\) of its dominant resource~2 after Step~1, contradicting feasibility of resource~2. Hence every agent in \(G_1\) receives at most \(1/n\) of resource~2, and therefore no agent in \(G_2\) envies any agent in \(G_1\). The same argument with the two resources exchanged shows that no agent in \(G_1\) envies any agent in \(G_2\). Thus the allocation is EF.
\end{proof}

The remaining nontrivial constraint is strategy-proofness.

\subsection{Sufficient Condition for SP}
\label{sec:sp-condition}

We next derive a sufficient condition on the speed rule \(\Theta(\alpha,R_1^*,R_2^*)\) to guarantee strategy-proofness. The argument
extends the SP proof for \(\FTwos\), but the resulting condition allows a
broader class of adaptive-speed rules.
Suppose that some agent has a profitable deviation. 
By symmetry, let the deviating agent be \(i_0\in G_1\), with true
demand vector \(\mathbf d_{i_0}=(1,d_{i_0,2})\). Let \(\mathbf d'_{i_0}\)
be a misreport that strictly increases its utility.

\begin{lemma}
\label{lem:no-group-switch}
A profitable deviation cannot switch the deviating agent's dominant-resource group. 
\end{lemma}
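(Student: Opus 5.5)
We argue by contradiction. Suppose $i_0\in G_1$, with true demand $\mathbf d_{i_0}=(1,d_{i_0,2})$, misreports some $\mathbf d'_{i_0}$ with $d'_{i_0,1}<1$, so that in the deviated instance $\mathbf I'$ it belongs to the group whose dominant resource is resource~2. (Because the resources are relabeled so that $|G_1|\ge|G_2|$, the group labels in $\mathbf I'$ may be swapped. We therefore phrase everything in terms of which resource is the deviator's reported dominant resource, not in terms of group indices.) The aim is to show $u_{i_0}(M_{i_0}(\mathbf I'))\le 1/n$. Since Lemma~\ref{lem:ASF-3-properties} gives $u_{i_0}(M_{i_0}(\mathbf I))\ge 1/n$ by SI on the truthful instance, the deviation cannot be strictly profitable.

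The main step is to bound the amount of resource~1 that $i_0$ receives in $\mathbf I'$. In $\mathbf I'$, agent $i_0$ reports resource~2 as dominant. We reuse the cross-group envy argument from the proof of Lemma~\ref{lem:ASF-3-properties}. Every agent whose reported dominant resource is resource~1 already receives $1/n$ of resource~1 in Step~1, and the set of such agents is nonempty. Suppose $i_0$ received more than $1/n$ of its reported non-dominant resource, namely resource~1. Within-group water filling raises the minimum non-dominant allocation, so every agent in the deviator's reported group would also receive more than $1/n$ of resource~1. Summing over everyone then exceeds $1$ for resource~1, contradicting feasibility. Hence $A'_{i_0,1}\le 1/n$.

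We then convert this into a utility bound. Under its true Leontief utility with $d_{i_0,1}=1$, agent $i_0$ gets $u_{i_0}(\mathbf A'_{i_0})=\min\{A'_{i_0,1},\,A'_{i_0,2}/d_{i_0,2}\}\le A'_{i_0,1}\le 1/n$. This is at most its truthful utility, which gives the contradiction.

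The main obstacle is the degenerate case where the misreport empties a group. This happens when $i_0$ was the only other-dominant agent, or when it makes every agent share the same dominant resource. In that case the framework's Step~2 is not defined, and we must use the convention stated after the definition of \ASF: the baseline allocation exhausts the common dominant resource, so $i_0$ receives exactly $\mathbf d'_{i_0}/n$ and hence at most $1/n$ of resource~1. The same utility bound then applies. We would also check that the water-filling claim holds when the deviator is not in the active set; in that case it only strengthens the bound. The argument for $i_0\in G_2$ is symmetric, and the $(1,1)$ tie convention does not affect it, because a report of $(1,1)$ keeps the agent in $G_1$.
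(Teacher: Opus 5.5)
Your proof is correct and follows the paper's argument. After a group switch, the true dominant resource becomes the reported non-dominant resource, which \ASF caps at $1/n$ by the same water-filling/feasibility argument as in Lemma~\ref{lem:ASF-3-properties}, while SI already gives utility at least $1/n$ under truthful reporting; your explicit treatment of the degenerate empty-group case and of the relabeling adds detail that the paper's one-paragraph proof leaves implicit.
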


\begin{proof}
Under \ASF, every agent receives at least \(1/n\) of its dominant resource and at most \(1/n\) of its non-dominant resource. If an agent reports a different dominant resource, then its true dominant resource is treated as non-dominant under the reported type and hence can be allocated at most \(1/n\). Since the truthful allocation already gives the agent at least \(1/n\) of its true dominant resource, such a report cannot strictly improve its utility, a contradiction.
\end{proof}

By Lemma~\ref{lem:no-group-switch}, \(i_0\) remains in \(G_1\), so \(\alpha\) is unchanged and we may write $\mathbf{d}'_{i_0}=(1,d'_{i_0,2})$.
Let $\mathbf{A}$ and $\mathbf{A}'$ denote the allocations under truthful and manipulated reports, respectively.
Let $\Delta S_1,\Delta S_2$ (resp.\ $\Delta S'_1,\Delta S'_2$) be the total increases of dominant shares for $G_1$ and $G_2$ in Step~2 under $\mathbf{A}$ (resp.\ $\mathbf{A}'$).
Define
\[
x  = \min_{i\in G_1} A_{i,2}, 
\qquad
x' = \min_{i\in G_1} A'_{i,2}.
\]

We first show that a profitable deviation must raise the final
within-group water level of \(G_1\).

\begin{lemma}
\label{lem:x-increase}
Any profitable deviation by \(i_0\) satisfies \(x'>x\).
\end{lemma}

\begin{proof}
Since $i_0$'s utility increases under $\mathbf{A}'$, we must have $A'_{i_0,1}>A_{i_0,1}$ and $A'_{i_0,2}>A_{i_0,2}$.
Because $A_{i_0,1}\ge 1/n$, the increase $A'_{i_0,1}>A_{i_0,1}$ must come from Step~2, so $A'_{i_0,2}$ equals the minimum non-dominant share in $G_1$, i.e., $A'_{i_0,2}=x'$.
By definition of $x$, we have $A_{i_0,2}\ge x$, hence
\[
x' = A'_{i_0,2} > A_{i_0,2} \ge x.
\]
\end{proof}

We next use the monotonicity of \(\Theta\) to identify the direction of the
deviation. Let \(R_1^{\prime *}\) and \(R_2^{\prime *}\) denote the
augmented remaining resources under the deviated report.

\begin{lemma}
\label{lem:F2sext-SP-d-increase}
Any profitable deviation by \(i_0\) satisfies
\[
    d'_{i_0,2}>d_{i_0,2}
    \qquad\text{and}\qquad
    R_2^{\prime *}<R_2^*.
\]
Moreover, \(i_0\) is not a minimizer of the non-dominant demand in \(G_1\)
after the deviation.
\end{lemma}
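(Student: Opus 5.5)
The plan is to exploit the fact that, for a deviator \(i_0\in G_1\), the only state variable its report can affect is \(R_2^*\). Indeed, \(R_1^*\) depends only on the demands of \(G_2\), and \(\alpha\) is fixed by Lemma~\ref{lem:no-group-switch}. Writing
\[
R_2^*=1-\alpha-\tfrac1n\Bigl(\textstyle\sum_{i\in G_1}d_{i,2}-\min_{i\in G_1}d_{i,2}\Bigr),
\]
the bracket is a ``sum minus minimum'', which is nondecreasing in each coordinate. Hence \(d'_{i_0,2}\le d_{i_0,2}\) gives \(R_2^{\prime *}\ge R_2^*\), and then \(\theta'\le\theta\) by monotonicity of \(\Theta\). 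If \(d'_{i_0,2}>d_{i_0,2}\) and \(i_0\) is not a minimizer after the deviation, then the reported increase stays inside the sum-minus-minimum. So \(R_2^{\prime*}=R_2^*-(d'_{i_0,2}-d_{i_0,2})/n<R_2^*\) strictly, and \(\theta'\le\theta\) again. It therefore suffices to rule out two cases: (a) \(d'_{i_0,2}\le d_{i_0,2}\), and (b) \(d'_{i_0,2}>d_{i_0,2}\) with \(i_0\) a minimizer in \(G_1\) after the deviation.

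\emph{Case (a).} Suppose the deviation is profitable, so \(x'>x\) by Lemma~\ref{lem:x-increase}, and \(A'_{i_0,2}=x'\) (from its proof). Express the Step~2 dominant-share gain of \(G_1\) at water level \(t\) as \(f(t)=\sum_{i\in G_1}\max\{0,\,t/d_{i,2}-1/n\}\). This is increasing in \(t\), and it weakly increases when \(d_{i_0,2}\) is lowered. Hence \(\Delta S'_1=f'(x')>f(x)=\Delta S_1\), and then \(\Delta S'_2=\Delta S'_1/\theta'\ge\Delta S'_1/\theta>\Delta S_2\). By the water-filling structure in \(G_2\), the \(G_2\) water level and the \(G_2\) consumption of both resources weakly increase.
\begin{itemize}
\item If resource~1 was exhausted under \(\mathbf A\): \(G_1\) now uses strictly more resource~1 (its dominant shares grew), and \(G_2\) uses weakly more. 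This contradicts feasibility.
\item If resource~2 was exhausted under \(\mathbf A\): every other agent of \(G_1\) receives \(\max\{x',d_{i,2}/n\}\ge\max\{x,d_{i,2}/n\}\) of resource~2. Agent \(i_0\) receives \(x'>A_{i_0,2}\). \(G_2\) uses weakly more resource~2. Again this contradicts feasibility.
\end{itemize}

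\emph{Case (b).} Here \(d_{i_0,2}<d'_{i_0,2}\le d_{i,2}\) for all other \(i\in G_1\), so \(i_0\) was also the minimizer before the deviation. Its term drops out of the sum-minus-minimum in both instances, giving \(R_2^{\prime*}=R_2^*\) and \(\theta'=\theta\). I would run the same resource-accounting argument with equal speeds. Profitability now requires \(x'/d'_{i_0,2}>u_{i_0}(\mathbf A_{i_0})\ge x/d_{i_0,2}\) (using the true utility \(\min\{A'_{i_0,1},A'_{i_0,2}/d_{i_0,2}\}=x'/d'_{i_0,2}\)), which forces \(x'>x\). The goal is to show that the total dominant gain of \(G_1\) and the consumption of \(G_2\) cannot both rise while the exhausted resource stays feasible.

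I expect case (b) to be the main obstacle. With \(d'>d\), agent \(i_0\)'s own dominant gain at a given level shrinks, so the clean monotone comparison of case (a) breaks. My first attempt would be to compare consumption directly. At level \(t\), \(i_0\)'s resource-2 use is \(\max\{t,d/n\}\) regardless of its report, while its resource-1 use is \(t/d'\). Then I would check that reaching the level \(x'\) with \(x'/d'>x/d\) uses weakly more of both resources than the truthful terminal state, contradicting exhaustion.
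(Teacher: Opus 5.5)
\textbf{Verdict: gap in case (b).} Your reduction to cases (a) and (b) is sound. Case (a) is correct; it is essentially the contrapositive of the paper's argument. Case (b), $d_{i_0,2}<d'_{i_0,2}\le\min_{j\in G_1\setminus\{i_0\}}d_{j,2}$, is not proved. You only describe a ``first attempt'', and it starts from a false claim. Under the report $\mathbf d'_{i_0}$, the Step~1 bundle of $i_0$ already contains $d'_{i_0,2}/n$ of resource~2. So its resource-2 use at level $t$ is $\max\{t,d'_{i_0,2}/n\}$, not $\max\{t,d_{i_0,2}/n\}$ regardless of the report. As written, the proposal leaves the lemma open in exactly the case where $R_2^*$ and $\theta$ are unchanged.

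\textbf{The missing step.} The obstacle you anticipate is not real, and the fix is one you already use in the resource-2 bullet of case (a): compare final bundles agent by agent instead of through the level function $f$.
\begin{enumerate}
\item Because the truthful bundle is non-wasteful, $u_{i_0}(\mathbf A_{i_0})=A_{i_0,1}=A_{i_0,2}/d_{i_0,2}$. Profitability, $\min\{A'_{i_0,1},A'_{i_0,2}/d_{i_0,2}\}>A_{i_0,1}$, therefore gives $A'_{i_0,1}>A_{i_0,1}$ and $A'_{i_0,2}>A_{i_0,2}$, whatever the direction of the misreport.
\item Since $x'>x$, every other $j\in G_1$ has $A'_{j,k}\ge A_{j,k}$ for $k=1,2$.
\item Hence $\Delta S'_1>\Delta S_1$, and $G_1$ uses strictly more of both resources.
\item $G_2$ must then receive strictly less of the resource exhausted under $\mathbf A$. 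Its unchanged water-filling trajectory gives $\Delta S'_2<\Delta S_2$.
\item Thus $\theta'>\theta$, contradicting $\theta'=\theta$ in case (b).
\end{enumerate}

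\textbf{The case split is unnecessary.} This argument never uses the sign of $d'_{i_0,2}-d_{i_0,2}$, so it shows $\theta'>\theta$ for every profitable deviation. Monotonicity of $\Theta$ in its third argument then gives $R_2^{\prime *}<R_2^*$ at once. Your sum-minus-minimum observation converts this into $d'_{i_0,2}>d_{i_0,2}$ and $d'_{i_0,2}>\min_{j\in G_1\setminus\{i_0\}}d_{j,2}$. This is precisely the paper's proof.

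\textbf{Minor point.} Suppose $i_0$ is the unique minimizer before the deviation but not after. Then the drop in $R_2^*$ is $(d'_{i_0,2}-\min_{j\in G_1\setminus\{i_0\}}d_{j,2})/n$, not $(d'_{i_0,2}-d_{i_0,2})/n$. Strictness is unaffected.
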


\begin{proof}
By Lemma~\ref{lem:x-increase}, \(x'>x\). So every agent in $G_1\setminus\{i_0\}$ receives at least as much of both resources in $\mathbf{A}'$ as in $\mathbf{A}$, and $i_0$ itself is strictly better off.
Consequently, $G_1$ as a whole receives more of both resources under $\mathbf{A}'$. Since the truthful outcome exhausts a resource, $G_2$ must receive less of that resource. Its unchanged within-group water-filling trajectory then implies that it receives less of both resources.
Therefore $\Delta S'_1>\Delta S_1$ and $\Delta S'_2<\Delta S_2$, so
\[
\frac{\Delta S'_1}{\Delta S'_2}
\;>\;
\frac{\Delta S_1}{\Delta S_2}.
\]
Since \(\alpha\) and \(R_1^*\) are unchanged, the speed condition
$\tfrac{\Delta S_1}{\Delta S_2}
=\Theta(\alpha,R_1^*,R_2^*)$ and the monotonicity of $\Theta$ in
its third coordinate imply that $R_2^{\prime *}<R_2^*$.
By the definition of $R_2^*$, this can only happen if \(d'_{i_0,2}>d_{i_0,2}\) and $d'_{i_0,2}>\min_{j\in G_1\setminus\{i_0\}} d_{j,2}$.
\end{proof}

We now quantify the decrease in $R_2^*$.
If $i_0$ is not initially a minimizer of the non-dominant demand, the decrease is $(d'_{i_0,2}-d_{i_0,2})/n$.
If $i_0$ is initially a minimizer, increasing its report up to $\min_{j\in G_1\setminus\{i_0\}}d_{j,2}$ leaves $R_2^*$ unchanged; only the increase beyond this threshold contributes to the decrease.
Combining these two cases gives
\begin{equation}
\label{eq:delta}
    \delta \coloneqq R_2^*-R_2^{\prime *} =
 \frac{1}{n}\Bigl(d'_{i_0,2} - \max\{d_{i_0,2},\min_{j\in G_1\setminus\{i_0\}} d_{j,2}\}\Bigr)
 >0.
\end{equation}

Next we analyze the difference between the total Step~2 dominant-share increments of the two groups before and after the manipulation.
\cite{BeiLL22} show that for \FTwos it holds that $\Delta S_1' > \Delta S_1$ and $\Delta S_2' \le \Delta S_2 -\delta$.
We strengthen this result via a refined analysis that amplifies the slack change.
To this end, we lower bound the ratio between the dominant-share increment over the non-dominant-resource increment in each group in Step 2.
Define
\begin{equation}
\label{eq:finite-c12}
    c_1
    :=
    \left(
        1-\frac{R_2^*-1/n}{1-\alpha}
    \right)^{-1},
    \qquad
    c_2
    :=
    \left(
        1-\frac{R_1^*-1/n}{\alpha}
    \right)^{-1}.
\end{equation}
The next lemma gives a uniform lower bound on the dominant-share gain over any interval of the within-group water-filling process.

\begin{lemma}
\label{lem:wf-marginal-bound}
For each group \(k\in\{1,2\}\), let \(\Delta S_k(b)\) denote the total Step~2 dominant-share increment after allocating \(b\) units of its non-dominant resource. For any \(0\le b<b'\le R_{3-k}^*\),
\[
\Delta S_k(b')-\Delta S_k(b)\ge c_k(b'-b).
\]
\end{lemma}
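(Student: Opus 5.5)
The plan is to prove the bound pointwise for the marginal rate of the within-group water-filling process and then integrate over $[b,b']$. By symmetry I treat $k=1$: group $G_1$ with demands $(1,d_{i,2})$ and non-dominant resource~2. The case $k=2$ is identical with the resources swapped and $\alpha$ in place of $1-\alpha$.

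\emph{Step 1: the marginal rate.} View $\Delta S_1(b)$ as the group-$1$ water-filling trajectory parametrized by the amount $b$ of resource~2 spent in Step~2. This trajectory is independent of the speed $\theta$, so it is well defined for every $b\le R_2^*$ even past the actual stopping point. While the active set is $P_1$, spending $db$ of resource~2 gives each active agent $db/|P_1|$ of resource~2. This raises the agent's dominant share by $db/(|P_1|\,d_{i,2})$. Hence $\Delta S_1$ is piecewise linear in $b$ with slope
\[
\frac{1}{|P_1|}\sum_{i\in P_1}\frac{1}{d_{i,2}} \;\ge\; \Bigl(\frac{1}{|P_1|}\sum_{i\in P_1} d_{i,2}\Bigr)^{-1},
\]
where the inequality is AM--HM. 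It therefore suffices to show that every active set $P_1$ arising along the trajectory has average non-dominant demand at most $c_1^{-1}$.

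\emph{Step 2: bounding the active-set average.} After Step~1 agent $i$ holds $d_{i,2}/n$ of resource~2, and water filling always raises the lowest levels. So agents join the active set in increasing order of $d_{i,2}$. Consequently, for any $b$, the set $P_1$ consists of the $|P_1|$ agents of $G_1$ with the smallest $d_{i,2}$. The average demand over such a prefix is at most the average over all of $G_1$. From the definition of $R_2$ we have $\sum_{i\in G_1} d_{i,2}=n(1-\alpha-R_2)$ and $|G_1|=n(1-\alpha)$. Hence the group average equals $1-R_2/(1-\alpha)$.

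\emph{Step 3: comparing with $c_1$.} Since $\min_{i\in G_1}d_{i,2}\le 1$, we get $R_2^*-1/n=R_2+(\min_{i\in G_1}d_{i,2}-1)/n\le R_2$. Therefore
\[
c_1^{-1}=1-\frac{R_2^*-1/n}{1-\alpha}\;\ge\;1-\frac{R_2}{1-\alpha}.
\]
Combining Steps~1--3, the slope of $\Delta S_1$ is at least $c_1$ on every linear piece. Integrating from $b$ to $b'$ gives the claim.

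The step needing most care is Step~2: justifying that the active set is always a prefix in the sorted order of $d_{i,2}$, including ties and breakpoints where several agents join at once. This follows from the water levels starting at $d_{i,2}/n$ and the process being monotone. A minor point is the case $b'$ beyond the actually allocated amount, which is handled by defining the trajectory independently of the other group, as in Step~1. Zero-demand limits cause no trouble, since the slope bound only improves as $d_{i,2}\to 0$.
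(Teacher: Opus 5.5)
Your proof is correct and follows essentially the same route as the paper. Both arguments apply AM--HM to the active set, use the water-filling prefix property to bound the active-set average by the group average $1-R_{3-k}/\gamma_k$, invoke $\min_i d_{i,3-k}\le 1$ to compare this with $c_k^{-1}$, and then integrate the slope bound from $b$ to $b'$.
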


\begin{proof}
Fix \(k\in\{1,2\}\). 
Let $\gamma_k:=\frac{|G_k|}{n}$,
so that \(\gamma_1=1-\alpha\) and \(\gamma_2=\alpha\).
By the definition of \(R_{3-k}^*\),
\begin{equation}\label{eq:average-demand-bound}
    \frac1{|G_k|}\sum_{i\in G_k}d_{i,3-k}
    =
    1-\frac{R_{3-k}}{\gamma_k}
    =
    1-\frac{R_{3-k}^*-\frac1n\min_{i\in G_k}d_{i,3-k}}{\gamma_k}
    \le
    1-\frac{R_{3-k}^*-1/n}{\gamma_k}
    =
    \frac1{c_k}.
\end{equation}
At any point during the within-group water filling, the active set \(P \subseteq G_k\) consists of agents with the smallest non-dominant demands. Hence,
\[
    \frac1{|P|}\sum_{i\in P}d_{i,3-k}
    \le
    \frac1{|G_k|}\sum_{i\in G_k}d_{i,3-k}
    \le
    \frac1{c_k},
\]
where the second inequality follows from (\ref{eq:average-demand-bound}).
If one additional unit of the non-dominant resource is distributed equally
among the agents in \(P\), the resulting increase in total dominant share is
\[
    \frac{1}{|P|}\sum_{i\in P}\frac1{d_{i,3-k}}.
\]
By the AM-HM inequality,
\[
    \frac{1}{|P|}\sum_{i\in P}\frac1{d_{i,3-k}}
    \ge
    \left(\frac1{|P|}\sum_{i\in P}d_{i,3-k}\right)^{-1}
    \ge
    c_k.
\]
Thus the marginal dominant-share increase per unit of non-dominant resource is at least \(c_k\) throughout the water-filling process. Integrating from \(b\) to \(b'\) gives the claimed inequality.
\end{proof}

Define $\bar c \coloneqq \min\{c_1c_2,2\}$.
We next derive sharper bounds on \(\Delta S_k'\) relative to \(\Delta S_k\) than those established in \cite{BeiLL22}.

\begin{lemma}
\label{lem:amplified-slack-change}
\(\Delta S_1'>\Delta S_1+c_1\delta\) and \(\Delta S_2'<\Delta S_2-\bar c\delta\).
\end{lemma}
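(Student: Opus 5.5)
The plan is to track, group by group, how much Step~2 non-dominant resource each group consumes before and after the deviation, and to convert resource amounts into dominant-share amounts using Lemma~\ref{lem:wf-marginal-bound}. By Lemmas~\ref{lem:no-group-switch}--\ref{lem:F2sext-SP-d-increase}, we may assume the following: \(i_0\) stays in \(G_1\); \(d'_{i_0,2}>d_{i_0,2}\); \(x'>x\); \(R_2^{\prime*}=R_2^*-\delta\); \(R_1^*\) is unchanged; and \(G_2\)'s demand profile and its water-filling curve \(\Delta S_2(\cdot)\) are unchanged.

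\emph{Step 1: the bound for \(G_1\).} I would parametrize \(G_1\)'s Step~2 process by the amount \(b_1\) of resource~2 it absorbs, measured relative to the augmented baseline. This is the point of using \(R_2^*\) rather than \(R_2\): the augmentation places every agent of \(G_1\) on a common water-filling curve starting from level \(\min_{i\in G_1}d_{i,2}/n\). Under the deviation, part of \(i_0\)'s resource~2 that truthfully would be acquired in Step~2 is already charged in Step~1. By~\eqref{eq:delta}, the non-dominant part charged "for free" to Step~1 is exactly \(\delta\) more than before. Since \(x'>x\), every agent in \(G_1\) ends with at least its truthful amount of resource~2, and \(i_0\) ends with strictly more. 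So the total resource~2 held by \(G_1\) at the end increases, while the Step~1 overhead attributable to the reported vector rose by exactly \(n\delta/n=\delta\) beyond the threshold in~\eqref{eq:delta}.

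I then want to show that the Step~2 budget satisfies \(b_1'\ge b_1+\delta\), with strict inequality from \(x'>x\). Applying Lemma~\ref{lem:wf-marginal-bound} on \([b_1,b_1']\) then gives \(\Delta S_1'-\Delta S_1>c_1\delta\). One subtlety must be checked. The constant \(c_1\) is defined via \(R_2^*\), while the deviated process has \(R_2^{\prime*}<R_2^*\). This only increases the corresponding \(c\), but I must verify that the deviated demand profile still satisfies the average-demand bound~\eqref{eq:average-demand-bound} on the relevant range. Alternatively, I can apply the lemma to the truthful curve extended, since the active sets are nested.

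\emph{Step 2: the bound for \(G_2\).} I split into cases according to which resource is exhausted in the deviated outcome; PO guarantees that at least one is.

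\emph{Case (a): resource~1 binds.} \(G_1\) gains at least \(c_1\delta\) extra of its dominant resource~1. Since the pool of resource~1 is unchanged (\(R_1^*\) is unchanged), \(G_2\) must absorb at least \(c_1\delta\) less of its non-dominant resource~1. Applying Lemma~\ref{lem:wf-marginal-bound} to \(G_2\)'s unchanged curve gives \(\Delta S_2-\Delta S_2'>c_2c_1\delta\ge\bar c\,\delta\).

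\emph{Case (b): resource~2 binds.} Resource~2 available for Step~2 dropped by \(\delta\), and \(G_1\) consumes at least \(\delta\) more of it in Step~2 by Step~1 above. \(G_2\) takes resource~2 as its dominant resource, so \(\Delta S_2'\le\Delta S_2-2\delta\), with strictness inherited from \(x'>x\). Since \(\bar c=\min\{c_1c_2,2\}\), both cases give \(\Delta S_2'<\Delta S_2-\bar c\delta\). One more check is needed when the binding resource differs between the truthful and deviated outcomes. There I would use that the truthful outcome's binding resource still bounds \(G_2\) from above, and reduce to whichever case applies in \(\mathbf A'\).

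\emph{Main obstacle.} The hardest step is Step~1: the precise accounting showing that \(G_1\)'s Step~2 consumption rises by at least \(\delta\) in augmented coordinates. This requires care when \(i_0\) is initially the unique minimizer, because part of the increase of \(d'_{i_0,2}\) is absorbed by the \(\min\) in \(R_2^*\). I would first handle the case where \(i_0\) is not a minimizer, where the shift is exactly \((d'_{i_0,2}-d_{i_0,2})/n\). I would then treat the minimizer case by comparing both processes against the common threshold \(\max\{d_{i_0,2},\min_{j\ne i_0}d_{j,2}\}\).
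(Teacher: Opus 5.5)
Your \(G_2\) computations match the paper's. Step~1, however, has a genuine gap: you flag it as the main obstacle yourself, and the route you sketch does not close it. The accounting you give points the wrong way. If \(G_1\)'s final holding of resource~2 rises and \(i_0\)'s Step-1 charge rises by \(\delta\), then \(G_1\)'s Step-2 consumption rises by the total increase \emph{minus} that charge. That yields nothing like \(b_1'\ge b_1+\delta\).

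The missing idea is to use profitability quantitatively. Write \(a=d_{i_0,2}\) and \(a'=d'_{i_0,2}\), and let \(j\) minimize \(d_{\cdot,2}\) over \(G_1\setminus\{i_0\}\), with \(b=d_{j,2}\). By~\eqref{eq:delta}, \(a'\ge a+n\delta\) and \(a'\ge b+n\delta\). Since \(i_0\) is active after deviating, \(x'=a'A'_{i_0,1}\). Together with \(A'_{i_0,1}>A_{i_0,1}\ge1/n\), this gives
\[
x'>(a+n\delta)A_{i_0,1}\ge A_{i_0,2}+\delta\ge x+\delta,
\qquad
x'>\frac{a'}{n}\ge\frac bn+\delta .
\]
Because \(A_{j,2}=\max\{b/n,x\}\) and \(A'_{j,2}=x'\), both \(j\) and \(i_0\) gain more than \(\delta\) units of resource~2. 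This estimate is exactly what would make your inequality \(b_1'>b_1+\delta\) true.

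Even granting that inequality, you cannot apply Lemma~\ref{lem:wf-marginal-bound} on \([b_1,b_1']\). That lemma concerns the curve of one fixed demand profile, while \(\Delta S_1\) and \(\Delta S_1'\) lie on curves for different profiles: \(i_0\) has a different coefficient and joins the active set at a different level. Your remark that the constant ``only increases'' is also backwards. Replacing \(R_2^*\) by \(R_2^*-\delta\) in~\eqref{eq:finite-c12} makes \(c_1\) smaller.

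The paper avoids curves for \(G_1\) altogether and converts only \(j\)'s gain. It uses
\[
\frac1{c_1}=\frac{1+\sum_{i\in G_1}d_{i,2}-\min_{i\in G_1}d_{i,2}}{|G_1|}\ge b,
\]
which holds because all demands except the removed minimum are at least \(b\le1\). So \(j\) alone gains more than \(\delta/b\ge c_1\delta\) of resource~1. A curve argument restricted to the unchanged subgroup \(G_1\setminus\{i_0\}\) would also work, but it still needs the estimate above.

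Finally, split the \(G_2\) bound by which resource the \emph{truthful} outcome exhausts, not the deviated one. Exhaustion in \(\mathbf A'\) only bounds \(G_2\)'s post-deviation usage from below in terms of its truthful usage, which is the wrong direction. Truthful exhaustion plus feasibility of \(\mathbf A'\) forces \(G_2\)'s usage to drop, whatever binds in \(\mathbf A'\). It drops by more than \(c_1\delta\) of resource~1 in the first case. In the second case it drops by more than \(2\delta\) of resource~2, since \(i_0\) and \(j\) each gain more than \(\delta\).
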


\begin{proof}
Write $a=d_{i_0,2}$ and $a'=d'_{i_0,2}$, and choose $j\in G_1\setminus\{i_0\}$ with the smallest non-dominant demand $b=d_{j,2}$ among these agents. By the definition of $\delta$ in~\eqref{eq:delta}, we have 
\[a'\ge a+n\delta, \qquad a'\ge b+n\delta.\]
Since the deviation is profitable, $A'_{i_0,1}>A_{i_0,1}\ge1/n$, and $i_0$ is active after the deviation. Hence
\[
x'=a'A'_{i_0,1}>(a+n\delta)A_{i_0,1}
\ge A_{i_0,2}+\delta\ge x+\delta,
\]
and
\[
x'=a'A'_{i_0,1}>a'/n\ge b/n+\delta
\]
The water-filling rule gives $A_{j,2}=\max\{b/n,x\}$ and $A'_{j,2}=x'$. 
It follows from the preceding two inequalities that
\[
    A'_{j,2}-A_{j,2}
    =
    x'-\max\{b/n,x\}
    >\delta.
\]
Thus both $i_0$ and $j$ receive more than $\delta$ additional units of resource~2. Every other agent in $G_1$ receives weakly more of both resources.

For the first bound, we translate agent $j$'s increase in resource~2 into
an increase in its dominant resource. Let $s=|G_1|=(1-\alpha)n$. 
Recall that
\[
\begin{aligned}
\frac1{c_1}
&=
1-\frac{R_2^*-1/n}{s/n}\\
&=
1-
\frac{
\frac{s}{n}
-\frac1n\sum_{i\in G_1}d_{i,2}
+\frac1n\min_{i\in G_1}d_{i,2}
-\frac1n
}{
s/n
}\\
&=
\frac{
1+\sum_{i\in G_1}d_{i,2}
-\min_{i\in G_1}d_{i,2}
}{s}.
\end{aligned}
\]
We claim that
\[
    \sum_{i\in G_1}d_{i,2}
    -\min_{i\in G_1}d_{i,2}
    \ge (s-1)b.
\]
Indeed, if $i_0$ is a minimizer, remove the demand of $i_0$; all remaining
agents belong to $G_1\setminus\{i_0\}$ and hence have demand at least $b$.
If $i_0$ is not a minimizer, then $j$ is a minimizer of $G_1$; remove the
demand of $j$, and every remaining demand, including that of $i_0$, is at
least $b$. Therefore, since $b\le1$,
\[
    \frac1{c_1}
    \ge
    \frac{1+(s-1)b}{s}
    =
    b+\frac{1-b}{s}
    \ge b.
\]
Agent $j$ therefore receives more than $\delta/b\ge c_1\delta$ additional units of resource~1. All other agents in $G_1$ receive weakly more of resource~1. Hence
$\Delta S_1'-\Delta S_1>c_1\delta$.

For the second bound, we distinguish which resource is exhausted under truthful reporting.
Suppose first that resource~1 is exhausted. Since $\Delta S_1'-\Delta S_1>c_1\delta$, feasibility after the deviation forces the resource~1 allocation to $G_2$ to decrease by more than $c_1\delta$. The reports of agents in $G_2$ are unchanged, so Lemma~\ref{lem:wf-marginal-bound}, applied to the interval between its two water-filling outcomes, gives
$\Delta S_2-\Delta S_2'>c_1c_2\delta$.
Suppose instead that resource~2 is exhausted. The inequalities above show
that both $i_0$ and $j$ receive more than $\delta$ additional units of
resource~2, while every other agent in $G_1$ receives weakly more.
Hence, the resource~2 allocation to $G_2$ decreases by more than $2\delta$. 
Since resource~2 is dominant for $G_2$, we obtain
$\Delta S_2-\Delta S_2'>2\delta$.
In either case, $\Delta S_2'<\Delta S_2-\bar c\delta$.
\end{proof}

We can now state a sufficient condition for strategy-proofness of \ASF.

\begin{lemma}[Sufficient condition for SP]
\label{lem:sufficient-SP}
Let $\Theta(\alpha,R_1^*,R_2^*) > 0$ be a monotone speed rule and define \(c_1,c_2\) by~\eqref{eq:finite-c12} and \(\bar c=\min\{c_1c_2,2\}\).
Suppose the following two conditions hold.
\begin{enumerate}
\item For deviations by agents in \(G_1\), for every feasible \(\delta>0\) with \(\bar c\delta<R_2^*\),
\[
\frac{\Theta(\alpha,R_1^*,R_2^*-\delta)}
     {\Theta(\alpha,R_1^*,R_2^*)}
\le
\frac{1+c_1\delta/R_1^*}
     {1-\bar c\delta/R_2^*}.
\]
\item Symmetrically, for deviations by agents in \(G_2\), for every feasible \(\delta>0\) with
\(\bar c\delta<R_1^*\),
\[
\frac{\Theta(\alpha,R_1^*,R_2^*)}
     {\Theta(\alpha,R_1^*-\delta,R_2^*)}
\le
\frac{1+c_2\delta/R_2^*}
     {1-\bar c\delta/R_1^*}.
\]
\end{enumerate}
Then the corresponding \ASF mechanism is strategy-proof.
\end{lemma}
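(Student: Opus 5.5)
We argue by contradiction and treat only a deviation by some \(i_0\in G_1\); a deviation by an agent of \(G_2\) is handled identically with the roles of the resources swapped, using condition~2. Assume \(i_0\) has a profitable misreport \(\mathbf d'_{i_0}\). By Lemma~\ref{lem:no-group-switch}, \(i_0\) stays in \(G_1\), so \(\alpha\) does not change. By Lemma~\ref{lem:F2sext-SP-d-increase}, the report raises \(d_{i_0,2}\) and leaves \(R_1^*\) unchanged, since \(R_1^*\) depends only on the rows of \(G_2\) and on \(|G_1|\). It also lowers \(R_2^*\) by exactly the amount \(\delta>0\) in~\eqref{eq:delta}. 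Lemma~\ref{lem:amplified-slack-change} then gives
\[
\Delta S_1'>\Delta S_1+c_1\delta,\qquad \Delta S_2'<\Delta S_2-\bar c\delta .
\]
The speed condition~\eqref{eq:asf-speed} holds both before and after the deviation. Hence
\[
\frac{\Theta(\alpha,R_1^*,R_2^*-\delta)}{\Theta(\alpha,R_1^*,R_2^*)}=\frac{\Delta S_1'/\Delta S_2'}{\Delta S_1/\Delta S_2}>\frac{1+c_1\delta/\Delta S_1}{1-\bar c\delta/\Delta S_2}.
\]
Here \(\Delta S_2'>0\) forces \(\bar c\delta<\Delta S_2\). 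The case \(\Delta S_2'=0\) is impossible because \(\Theta>0\).

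The key step is to replace \(\Delta S_1,\Delta S_2\) on the right by \(R_1^*,R_2^*\). The right-hand side is decreasing in \(\Delta S_1\) and increasing in \(\Delta S_2\). So I need \(\Delta S_1\le R_1^*\) and \(\Delta S_2\ge R_2^*\), or at least inequalities strong enough to make the final comparison go through.

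The first bound should be easy. The Step~2 dominant-share increment of \(G_1\) is exactly the amount of resource~1 given to \(G_1\) in Step~2. This is at most \(R_1\), and \(R_1\le R_1^*\).

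The second bound is the main obstacle, because \(\Delta S_2\) counts resource~2 and need not dominate \(R_2^*\) directly. My first attempt is to use the exhaustion structure. If resource~2 is exhausted under truthful reports, then \(G_2\)'s increment equals \(R_2\) minus \(G_1\)'s Step~2 consumption of resource~2. If that does not close the gap, I will instead rewrite the whole chain with the true quantities \(R_1\) and the consumed resource~2, and track how the \(\tfrac1n\min\) augmentation terms enter \(\delta\). The definition of \(\delta\) deliberately discounts the portion of the report below the second-smallest demand, and this should be what absorbs the slack.

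In the remaining case, where resource~1 is exhausted, I would argue directly. The condition \(\bar c\delta<R_2^*\), together with the strict inequalities in Lemma~\ref{lem:amplified-slack-change}, should yield \(\frac{\Delta S_1'}{\Delta S_2'}\big/\frac{\Delta S_1}{\Delta S_2}>\frac{1+c_1\delta/R_1^*}{1-\bar c\delta/R_2^*}\). This contradicts condition~1.

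Feasibility of \(\delta\) is immediate, because \(R_2^*-\delta=R_2'^*\) is an attained state. This gives the contradiction, so no profitable deviation exists.
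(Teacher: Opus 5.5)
There is a genuine gap, and it comes from a sign error in the key step. The map \(x\mapsto 1/(1-\bar c\delta/x)\) is \emph{decreasing} on \(x>\bar c\delta\). So the right-hand side \(\frac{1+c_1\delta/\Delta S_1}{1-\bar c\delta/\Delta S_2}\) is decreasing in \(\Delta S_2\), not increasing. This led you to conclude that you need \(\Delta S_2\ge R_2^*\), and that inequality is never true. \(\Delta S_2\) is the amount of resource~2 given to \(G_2\) in Step~2, so \(\Delta S_2\le R_2<R_2^*\), because \(R_2^*=R_2+\frac1n\min_{i\in G_1}d_{i,2}\) and all demands are positive. No case split on which resource is exhausted, and no bookkeeping of the augmentation terms inside \(\delta\), can establish it. The passages where you plan to ``argue directly'' or let \(\delta\) ``absorb the slack'' are not arguments, and they target a false inequality. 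As written, the proposal does not reach the contradiction with condition~1.

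The fix is immediate. You already observed that \(\bar c\delta<\Delta S_2\). Combining it with the trivial bound \(\Delta S_2\le R_2\le R_2^*\) gives \(\bar c\delta<R_2^*\). This is the hypothesis needed to invoke condition~1, which you only assert. The same chain also gives \(0<1-\bar c\delta/\Delta S_2\le 1-\bar c\delta/R_2^*\). Together with \(\Delta S_1\le R_1\le R_1^*\), this yields
\[
\frac{\Theta(\alpha,R_1^*,R_2^*-\delta)}{\Theta(\alpha,R_1^*,R_2^*)}>\frac{1+c_1\delta/\Delta S_1}{1-\bar c\delta/\Delta S_2}\ge\frac{1+c_1\delta/R_1^*}{1-\bar c\delta/R_2^*},
\]
which contradicts condition~1. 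With this correction your outline is exactly the paper's proof, which uses only the trivial bounds \(\Delta S_1\le R_1^*\) and \(\Delta S_2\le R_2^*\).
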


\begin{proof}
Suppose, toward a contradiction, that some agent $i_0$ has a profitable deviation.
We consider the case \(i_0\in G_1\) and use the first condition; the case \(i_0\in G_2\) is analogous and
uses the second condition.
By Lemma~\ref{lem:no-group-switch}, $i_0$ remains in $G_1$ and \(\alpha\) is unchanged.
By Lemma~\ref{lem:F2sext-SP-d-increase}, its misreport has the form \((1,d'_{i_0,2})\) with \(d'_{i_0,2}>d_{i_0,2}\).
This decreases \(R_2^*\) to \(R_2^*-\delta\) for \(\delta>0\) defined by \eqref{eq:delta}, while leaving \(R_1^*\) unchanged.
Hence the new speed ratio becomes $\Theta(\alpha,R_1^*,R_2^*-\delta)$.
By Lemma~\ref{lem:amplified-slack-change}, \(\Delta S_1'>\Delta S_1+c_1\delta\), and \(\Delta S_2'<\Delta S_2-\bar c\delta\).
If \(\bar c\delta\ge R_2^*\), then \(\Delta S_2\le R_2^*\) implies \(\Delta S_2'<0\), a contradiction.
Therefore, \(\bar c\delta<R_2^*\), so the first condition applies.
On the other hand, 
\[
    \frac{
        \Theta(\alpha,R_1^*,R_2^*-\delta)
    }{
        \Theta(\alpha,R_1^*,R_2^*)
    }
    =
    \frac{\Delta S_1'/\Delta S_2'}
         {\Delta S_1/\Delta S_2}
    >
    \frac{
        1+c_1\delta/\Delta S_1
    }{
        1-\bar c\delta/\Delta S_2
    }
    \ge
    \frac{
        1+c_1\delta/R_1^*
    }{
        1-\bar c\delta/R_2^*
    },
\]
where the strict inequality follows from Lemma~\ref{lem:amplified-slack-change}, and the last inequality follows from \(\Delta S_1\le R_1^*\) and \(\Delta S_2\le R_2^*\).
This contradicts the first condition.
\end{proof}

Since Lemma~\ref{lem:no-group-switch} implies that \(\alpha\) is unchanged under any profitable deviation, multiplying a speed rule \(\Theta(\alpha,R_1^*,R_2^*)\) by any positive function of \(\alpha\) cancels out from the ratios in Lemma~\ref{lem:sufficient-SP}. This yields additional freedom in the design of strategy-proof mechanisms.

\begin{corollary}
\label{cor:population-factor-sp}
Suppose that a speed rule \(\Theta(\alpha,R_1^*,R_2^*)\) satisfies the conditions of Lemma~\ref{lem:sufficient-SP}.
Let \(f(\alpha)>0\) be any positive function, and define $\widetilde{\Theta}(\alpha,R_1^*,R_2^*)\coloneqq f(\alpha)\Theta(\alpha,R_1^*,R_2^*)$.
Then the corresponding \ASF mechanism with speed rule $\widetilde{\Theta}$ is strategy-proof.
\end{corollary}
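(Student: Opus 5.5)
The plan is to check directly that $\widetilde\Theta$ satisfies the hypotheses of Lemma~\ref{lem:sufficient-SP}, and then invoke that lemma.

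First, $\widetilde\Theta$ is a valid \ASF speed rule. Since $f(\alpha)>0$ and $\Theta>0$, the product $\widetilde\Theta$ is positive. For fixed $\alpha$, multiplying by the constant $f(\alpha)>0$ preserves being nondecreasing in $R_1^*$ and nonincreasing in $R_2^*$. So $\widetilde\Theta$ is monotone.

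Second, the quantities appearing on the right-hand sides of the two conditions do not depend on the speed rule at all. The constants $c_1$, $c_2$ in~\eqref{eq:finite-c12}, and hence $\bar c=\min\{c_1c_2,2\}$, are functions of $(\alpha,R_1^*,R_2^*)$ and $n$ only. The ranges of admissible $\delta$ are likewise unchanged.

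Third, the left-hand sides are unchanged. In each condition, both speeds are evaluated at the same $\alpha$, so
\[
\frac{\widetilde\Theta(\alpha,R_1^*,R_2^*-\delta)}{\widetilde\Theta(\alpha,R_1^*,R_2^*)}
=\frac{f(\alpha)\Theta(\alpha,R_1^*,R_2^*-\delta)}{f(\alpha)\Theta(\alpha,R_1^*,R_2^*)}
=\frac{\Theta(\alpha,R_1^*,R_2^*-\delta)}{\Theta(\alpha,R_1^*,R_2^*)},
\]
and similarly for the second condition. Hence both inequalities hold for $\widetilde\Theta$ exactly because they hold for $\Theta$.

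Lemma~\ref{lem:sufficient-SP} then gives strategy-proofness of $\mathrm{ASF}_{\widetilde\Theta}$.

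The only point needing care is that the conditions compare speeds at a common $\alpha$. This is justified by the proof of Lemma~\ref{lem:sufficient-SP}, which uses Lemma~\ref{lem:no-group-switch}: a profitable deviator stays in its group, so $\alpha$ is unchanged. Beyond this, there is no real obstacle.
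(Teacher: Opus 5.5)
Your proposal is correct and takes essentially the same approach as the paper. Positivity and monotonicity survive multiplication by $f(\alpha)>0$, and $\alpha$ is common to both speeds in each condition (Lemma~\ref{lem:no-group-switch}), so $f(\alpha)$ cancels from the ratios; since the right-hand sides do not depend on the speed rule, Lemma~\ref{lem:sufficient-SP} applies directly.
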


\begin{proof}
Since \(f(\alpha)>0\), multiplying by \(f(\alpha)\) preserves the monotonicity
of \(\Theta\) in \(R_1^*\) and \(R_2^*\). Moreover, \(\alpha\) is unchanged
under any profitable deviation by Lemma~\ref{lem:no-group-switch}, so
\(f(\alpha)\) cancels from both multiplicative ratios in
Lemma~\ref{lem:sufficient-SP}. Hence \(\widetilde{\Theta}\) satisfies the
same sufficient conditions as \(\Theta\).
\end{proof}

The sufficient condition also recovers the SP of previous mechanisms naturally within the \ASF framework. 
For example, \FTwos uses the speed rule $\Theta(\alpha,R_1^*,R_2^*)=\frac{R_1^*}{R_2^*}$. For a deviation by an agent in \(G_1\), 
\[
\frac{
    \Theta(\alpha,R_1^*,R_2^*-\delta)
}{
    \Theta(\alpha,R_1^*,R_2^*)
}
=
\frac{R_2^*}{R_2^*-\delta}
=
\frac{1}{1-\delta/R_2^*}.
\]
Since \(c_1,c_2\ge1\) by~\eqref{eq:finite-c12}, we have \(\bar c=\min\{c_1c_2,2\}\ge1\). Therefore, whenever \(\bar c\delta<R_2^*\),
\[
\frac{1}{1-\delta/R_2^*}
\le
\frac{1}{1-\bar c\delta/R_2^*}
\le
\frac{
    1+c_1\delta/R_1^*
}{
    1-\bar c\delta/R_2^*
}.
\]

\subsection{Technical Lemmas for Fair-Ratio Analysis}
\label{sec:fair-ratio-tools}

Before introducing concrete mechanisms, we develop the comparison tools used throughout the fair-ratio analysis. We first show that an optimal SI--EF allocation can be obtained by the same two-step water-filling procedure as \(\ASF\), with its own choice of speed ratio. We then compare the Step~2 gains of these two allocations, and derive a general \fratio upper bound for \(\ASF\) mechanisms, which provides the main design intuition for Section~\ref{sec:mechanisms}.

We first show that the fair optimum can be chosen to have the same within-group water-filling structure as \(\ASF\). The only difference is that the optimum may use a different Step~2 speed ratio. See Appendix~\ref{app:lem-the-optimal} for the proof.

\begin{lemma}
\label{lem:the-optimal}
For any instance, there exists a possibly boundary speed ratio
\(\theta^*\in[0,\infty]\) such that the corresponding two-step water-filling rule
returns an optimal allocation satisfying SI and EF.
\end{lemma}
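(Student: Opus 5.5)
We start from an arbitrary optimal SI--EF allocation \(\mathbf A^{\mathrm{opt}}\); one exists because the SI--EF feasible set is compact and social welfare is continuous. We then transform it, without lowering welfare and without leaving the SI--EF set, until it has the two-step water-filling form. As in the other two-step arguments, we may take it non-wasteful.

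\textbf{Step 1: the baseline bundle is contained in the optimum.} SI gives every agent dominant share at least \(1/n\). Hence \(\mathbf A^{\mathrm{opt}}_i\ge \mathbf d_i/n\) componentwise, and the optimum equals the Step~1 bundle plus a Step~2 increment.

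\textbf{Step 2: what EF forces.} Agents in \(G_2\) hold at least \(1/n\) of resource~2 each. Envy-freeness of \(G_2\) toward \(G_1\) then caps every \(G_1\) agent's resource~2 amount at a level determined by \(G_2\)'s bundles. We do not need the exact form; we only need the caps that the water-filling output also satisfies, as in the EF proof of Lemma~\ref{lem:ASF-3-properties}. In the other direction, we keep every cross-group bound of the form ``each agent's non-dominant amount is at most the other group's minimum dominant share.''

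\textbf{Step 3: exchange argument within a group.} Let \(b_k\) be the total non-dominant resource given to \(G_k\) in Step~2 of the optimum. Among all ways to split \(b_k\) over \(G_k\) subject to within-group EF, we show the water-filling split maximizes the group's total dominant share. Giving one unit of non-dominant resource to agent \(i\) yields \(1/d_{i,3-k}\) units of utility, so welfare favors agents with small non-dominant demand. Water filling raises exactly these agents first, lifting the minimal non-dominant level. Within-group EF forbids any agent's non-dominant amount from exceeding that of an agent with a smaller demand once that agent is above the baseline, and water filling is the greedy optimum subject to this ordering. We prove this by a pairwise transfer: if \(i\) has the smaller demand but receives strictly less non-dominant resource than some \(j\) above baseline, move non-dominant resource from \(j\) to \(i\). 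Welfare weakly increases, and EF is kept provided we stop at equality.

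The replacement also weakly lowers each group's maximum non-dominant amount, since water filling equalizes levels, and it keeps each group's dominant amounts at least baseline. Hence cross-group EF and feasibility still hold. Replacing each group's Step~2 allocation by the water-filling allocation with the same budget \(b_k\) therefore gives an SI--EF allocation with weakly larger welfare, so it is also optimal.

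\textbf{Step 4: extract \(\theta^*\).} Set \(\theta^*=\Delta S_1/\Delta S_2\) using the resulting group increments, with \(\theta^*=0\) or \(\infty\) when one increment vanishes. We must check that the two-step rule run at speed \(\theta^*\) reproduces exactly this allocation, namely that it stops at the same budgets. If the optimum leaves both resources unexhausted, both groups can be scaled up proportionally. This preserves SI and within-group EF, and it keeps the cross-group EF argument valid until some resource is exhausted, so welfare rises, contradicting optimality; hence some resource is exhausted. The water-filling rule at speed \(\theta^*\) follows the same monotone trajectory in each group and stops exactly at that point.

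The main obstacle is Step 3, specifically checking that the transfers preserve cross-group EF. The cleanest route is to show that water filling minimizes each group's maximum non-dominant holding for a given welfare gain, using the cross-group EF bound of Lemma~\ref{lem:ASF-3-properties}.
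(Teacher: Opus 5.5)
\textbf{There is a genuine gap in Step~3.} It comes from the direction of your exchange.

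You fix each group's non-dominant budget $b_k$ and replace the group's Step~2 allocation by the water-filling allocation with that budget, which maximizes the group's dominant-share gain. But every extra unit of dominant share for $G_k$ costs an extra unit of resource~$k$. So the replacement raises each group's consumption of its \emph{dominant} resource. Your feasibility justification (maximum non-dominant holdings weakly drop, dominant holdings stay above baseline) addresses only EF and SI. It says nothing about the supply of resource~$k$, and at an optimum some resource is exhausted.

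In fact the step is self-defeating. Suppose the replacement strictly increased some $\Delta S_k$ while staying feasible and SI--EF. Then welfare $1+\Delta S_1+\Delta S_2$ would strictly exceed that of $\mathbf A^{\mathrm{opt}}$, contradicting optimality. So whenever $\mathbf A^{\mathrm{opt}}$ is not already water-filling-efficient within a group, your replacement must be infeasible or non-EF. Showing that it never changes anything is essentially the content of the lemma. Your pairwise transfer has the same defect: moving $\epsilon$ of non-dominant resource from $j$ to $i$ with $d_{i,3-k}<d_{j,3-k}$ raises dominant consumption by $\epsilon(1/d_{i,3-k}-1/d_{j,3-k})>0$.

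\textbf{The repair, which is the paper's proof, is to reverse the roles.} Write $\mathbf A^{\mathrm{opt}}_i=(1/n+z_i)\mathbf d_i$ and fix the increments $\Delta S_k^*=\sum_{i\in G_k}z_i$. Show by the exchange argument that, among within-group EF allocations reaching increment $\Delta S_k^*$, water filling uses the \emph{least} non-dominant resource. Now run the two-step rule at $\theta^*=\Delta S_1^*/\Delta S_2^*$, using the boundary rule if one increment vanishes. Both groups reach $(\Delta S_1^*,\Delta S_2^*)$ simultaneously, using exactly $\Delta S_k^*$ of each dominant resource and no more non-dominant resource than $\mathbf A^{\mathrm{opt}}$. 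So the run is feasible up to that point, and both resource usages weakly drop.

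Cross-group EF then needs no separate treatment; it follows from the feasibility argument in Lemma~\ref{lem:ASF-3-properties}. So your Step~2 and the ``main obstacle'' you identify disappear, and your Step~4 reasoning (if the rule could continue, welfare would exceed the optimum) finishes the proof. Your closing remark, water filling as the most economical allocation ``for a given welfare gain,'' is exactly the right formulation. It should replace Step~3, not serve as an auxiliary tool for EF.
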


Next we compare the \(\ASF\) allocation and the above optimal fair allocation.
Fix an instance \(\mathbf I\), the allocation \(\mathbf A\) produced by \(\ASF\) with speed \(\theta\), and an optimal allocation \(\mathbf A^*\) selected by Lemma~\ref{lem:the-optimal} with speed \(\theta^*\).
Let \(\Delta S_k\) and \(\Delta S_k^*\) be their total Step~2 dominant-share increments in group \(G_k\).
Let
\[
    r=\frac{R_1^*}{R_2^*},
    \qquad
    \theta=\frac{\Delta S_1}{\Delta S_2},
    \qquad
    q=\frac{\theta}{r}
    =
    \frac{\Delta S_1/R_1^*}{\Delta S_2/R_2^*}.
\]
Here \(q\) is the speed ratio normalized by the remaining-resource ratio.
The next lemma compares the Step~2 gains of the mechanism and the fair optimum.
See Appendix~\ref{app:lem-s2-s2-star} for the proof.

\begin{lemma}
\label{lem:S2/S2*}
If resource~1 is exhausted by the \(\ASF\) allocation, then \(\theta\ge\theta^*\), so \(\Delta S_1\ge\Delta S_1^*\) and \(\Delta S_2\le\Delta S_2^*\). The loss on group \(G_2\) is bounded by \(\Delta S_2\ge\Delta S_2^*/(1+q)-1/n\).
Symmetrically, if resource~2 is exhausted by the \(\ASF\) allocation, then \(\theta\le\theta^*\), so \(\Delta S_1\le\Delta S_1^*\) and \(\Delta S_2\ge\Delta S_2^*\). The loss on group \(G_1\) is bounded by \(\Delta S_1\ge\Delta S_1^*/(1+1/q)-1/n\).
\end{lemma}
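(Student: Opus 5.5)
We treat the case where resource~1 is exhausted by the \ASF allocation; the other case follows by exchanging the roles of the two resources and replacing \(q\) with \(1/q\). The plan has three parts. First we show \(\theta\ge\theta^*\) by a monotonicity argument. Then we read off the comparison of increments. Finally we prove the quantitative bound on \(\Delta S_2\) from resource accounting.

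\emph{Step 1: \(\theta\ge\theta^*\).} Both allocations come from the same two-step water-filling rule and differ only in the speed. Within a group, \(\Delta S_k\) is a strictly increasing continuous function of the amount of that group's non-dominant resource consumed (its water level). The Step~2 path is therefore parametrized by the pair \((\Delta S_1,\Delta S_2)\).

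Suppose \(\theta^*>\theta\). By Lemma~\ref{lem:ASF-3-properties} (PO), \(\mathbf A^*\) stops when some resource is exhausted.
\begin{itemize}
\item If \(\Delta S_1^*\ge\Delta S_1\), then \(G_1\) uses at least as much resource~1 as under \(\mathbf A\). The ratio \(\theta^*>\theta\) then forces \(\Delta S_2^*<\Delta S_1^*/\theta\). We then need to check that \(\mathbf A^*\) is still feasible for resource~1, which pins down the direction of the inequalities.
\item Cleaner is a direct comparison. Since resource~1 is exhausted under \(\mathbf A\), any allocation on the same family with larger \(\Delta S_1\) must have smaller \(\Delta S_2\), because \(G_2\)'s resource~1 usage increases in \(\Delta S_2\). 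Hence \(\Delta S_1^*>\Delta S_1\) and \(\Delta S_2^*\ge\Delta S_2\) cannot both hold. Likewise, \(\Delta S_1^*\le\Delta S_1\) together with \(\Delta S_2^*<\Delta S_2\) contradicts optimality of \(\mathbf A^*\) against \(\mathbf A\), which is itself SI and EF.
\end{itemize}
Combining these with the speed ratios gives \(\theta^*\le\theta\), and hence \(\Delta S_1\ge\Delta S_1^*\) and \(\Delta S_2\le\Delta S_2^*\). I expect this to be mostly bookkeeping. The subtle point is handling boundary speeds \(\theta^*\in\{0,\infty\}\).

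\emph{Step 2: bound on \(G_2\)'s loss.} Let \(b_2\) be the amount of resource~1 consumed in Step~2 by \(G_2\) under \(\mathbf A\), and \(b_2^*\) the corresponding amount under \(\mathbf A^*\). Let \(a_1,a_1^*\) be the Step~2 resource~1 consumption of \(G_1\), so that \(a_1=\Delta S_1\).

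Exhaustion gives \(\Delta S_1+b_2=R_1\). Feasibility gives \(\Delta S_1^*+b_2^*\le R_1\), so \(\Delta S_1^*\le R_1\le R_1^*\).

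Next I would lower-bound \(\Delta S_2\). Using \(\theta=qR_1^*/R_2^*\), we have \(\Delta S_2=\Delta S_1/\theta=\Delta S_1R_2^*/(qR_1^*)\). Also \(\Delta S_2^*\le R_2^*\), since resource~2 is dominant for \(G_2\) and \(R_2\le R_2^*\). There are two cases.
\begin{itemize}
\item If \(\Delta S_1\ge qR_1^*/(1+q)\), then \(\Delta S_2\ge R_2^*/(1+q)\ge \Delta S_2^*/(1+q)\).
\item Otherwise \(b_2=R_1-\Delta S_1> R_1-qR_1^*/(1+q)\ge R_1^*/(1+q)-1/n\). Here I use \(R_1\ge R_1^*-1/n\), which follows from \(\min d_{i,1}\le 1\). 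Then I need to convert resource~1 usage \(b_2\) of \(G_2\) into a dominant-share gain \(\Delta S_2\) comparable to \(\Delta S_2^*\), whose usage is \(b_2^*\le R_1\le R_1^*\).
\end{itemize}

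\emph{Main obstacle.} The key is concavity of the water-filling curve. \(\Delta S_2(b)\) is concave in the non-dominant consumption \(b\), because later-activated agents have larger non-dominant demand and so lower marginal gain. Concavity with \(\Delta S_2(0)=0\) gives \(\Delta S_2(b)\ge (b/b^*)\,\Delta S_2(b^*)\) for \(b\le b^*\). Hence
\[
\Delta S_2\ge \frac{R_1^*/(1+q)-1/n}{R_1^*}\,\Delta S_2^*\ge \frac{\Delta S_2^*}{1+q}-\frac1n,
\]
using \(\Delta S_2^*\le R_2^*\le 1\). Verifying the concavity claim and making the two cases mesh into the single stated inequality is the step I expect to require the most care.
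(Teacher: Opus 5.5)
Your Step~1 does not establish \(\theta\ge\theta^*\). Your two observations exclude the sign patterns (\(\Delta S_1^*>\Delta S_1\), \(\Delta S_2^*\ge\Delta S_2\)) and (\(\Delta S_1^*\le\Delta S_1\), \(\Delta S_2^*<\Delta S_2\)). Two patterns survive: (\(\Delta S_1^*\le\Delta S_1\), \(\Delta S_2^*\ge\Delta S_2\)) and (\(\Delta S_1^*>\Delta S_1\), \(\Delta S_2^*<\Delta S_2\)). The latter is precisely \(\theta^*>\theta\), so ``combining these with the speed ratios'' is circular. Ruling out that pattern is the real content of the first claim, and it needs an exchange-rate argument you never invoke. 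Agents in \(G_1\) convert one unit of resource~1 into exactly one unit of dominant share, whereas every \(j\in G_2\) has \(d_{j,1}<1\) and so converts it into more than one. Concretely, let \(C_2(y)\) be the resource-1 consumption of \(G_2\)'s water filling at increment \(y\). Exhaustion under \(\mathbf A\) and feasibility of \(\mathbf A^*\) give \(\Delta S_1^*-\Delta S_1\le C_2(\Delta S_2)-C_2(\Delta S_2^*)\). Since \(C_2\) has slope at most \(\max_{j\in G_2}d_{j,1}<1\), the right side is strictly less than \(\Delta S_2-\Delta S_2^*\) whenever \(\Delta S_1^*>\Delta S_1\). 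Hence \(\SW(\mathbf A)>\SW(\mathbf A^*)\), contradicting optimality, since \(\mathbf A\) is itself SI and EF; this is the paper's argument. Your Step~2 uses \(b_2\le b_2^*\), i.e.\ \(\Delta S_2\le\Delta S_2^*\), so it inherits this gap.

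Your Step~2 also fails at its final inequality. Your display actually yields \(\Delta S_2\ge\Delta S_2^*/(1+q)-\Delta S_2^*/(nR_1^*)\). Bounding the error by \(1/n\) would require \(\Delta S_2^*\le R_1^*\); the bound \(\Delta S_2^*\le 1\) does not suffice. In the regime of Example~2 (and the tightness instance for \AOne), \(R_1^*=\Theta(1/n)\) while \(\Delta S_2^*=\Theta(1)\), so your error term is \(\Theta(1)\). That would wreck every downstream \(O(1/n)\) fair-ratio bound.

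The culprit is the crude estimate \(R_1\ge R_1^*-1/n\). Instead, let \(\mu=\min_{j\in G_2}d_{j,1}\) and use:
the exact identity \(R_1^*-R_1=\mu/n\);
the bound \(b_2^*\ge\mu\,\Delta S_2^*\), since each unit of \(G_2\)'s growth costs at least \(\mu\) of resource~1;
and \(b_2^*\) rather than \(R_1^*\) in the denominator of the error term.
Your second case then gives \(\Delta S_2\ge(b_2/b_2^*)\Delta S_2^*\ge\Delta S_2^*/(1+q)-\mu\Delta S_2^*/(nb_2^*)\ge\Delta S_2^*/(1+q)-1/n\), with \(b_2^*\le R_1^*\) handling the main term. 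The paper does the equivalent without a case split. It sets \(\beta=(R_1-\Delta S_1)/\Delta S_2\ge\mu\) and uses convexity of \(C_2\) (your concavity observation) to get \((\theta+\beta)\Delta S_2^*\le(1+q)R_1^*\). It then uses \(R_1^*-R_1\le\beta/n\), so the additive loss after dividing by \(\theta+\beta\) is at most \(1/n\).
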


Using the above loss bounds, we can derive the following general upper bound on \fratio for \ASF mechanisms in terms of $\alpha$ and $q=\theta/r$.
See Appendix~\ref{app:lem:general-ASF-upper-bound} for the proof.

\begin{lemma}
\label{lem:general-ASF-upper-bound}
For every instance $\mathbf I$, the ratio of its optimal SI--EF welfare to the welfare returned by an \ASF mechanism is at most
\[
\max \left\{\frac{2-\alpha}{1+ \frac{1}{1+q}(1-\alpha)},\frac{1+\alpha}{1+ \frac{q}{1+q}\alpha}\right\}+O\left(\frac{1}{n}\right),
\]
where the two terms correspond to the cases in which resource~1 or resource~2 is exhausted, respectively.
\end{lemma}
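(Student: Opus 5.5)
We prove Lemma~\ref{lem:general-ASF-upper-bound} by decomposing both welfare values into the Step~1 baseline and the Step~2 gains, and then using Lemma~\ref{lem:S2/S2*}. Both the \(\ASF\) allocation and the fair optimum \(\mathbf A^*\) of Lemma~\ref{lem:the-optimal} start from the same baseline, which contributes total welfare \(n\cdot\frac1n=1\). Hence
\[
\SW(\mathbf A)=1+\Delta S_1+\Delta S_2,\qquad \SW(\mathbf A^*)=1+\Delta S_1^*+\Delta S_2^*.
\]
In Step~2, the dominant-share increment of group \(G_k\) is bounded by the remaining supply of its dominant resource. More precisely, Lemma~\ref{lem:wf-marginal-bound}'s setting gives \(\Delta S_1^*\le R_1\le 1-(1-\alpha)\), since \(G_1\) already holds \((1-\alpha)\) of resource~1. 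Similarly \(\Delta S_2^*\le 1-\alpha\). I would first record these crude caps, \(\Delta S_1^*\le \alpha\) and \(\Delta S_2^*\le 1-\alpha\), up to \(O(1/n)\) terms.

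\textbf{Case 1: resource~1 is exhausted by \(\ASF\).} By Lemma~\ref{lem:S2/S2*}, \(\Delta S_1\ge \Delta S_1^*\) and \(\Delta S_2\ge \Delta S_2^*/(1+q)-1/n\). Write \(t=\Delta S_1^*\ge 0\) and \(s=\Delta S_2^*\in[0,1-\alpha]\). Then
\[
\frac{\SW(\mathbf A^*)}{\SW(\mathbf A)}\le \frac{1+t+s}{1+t+s/(1+q)-1/n}.
\]
This quotient has the form \((a+s)/(a+\lambda s)\) with \(\lambda<1\), so it is nonincreasing in \(a=1+t\) and nondecreasing in \(s\). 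The worst case is therefore \(t=0\) and \(s=1-\alpha\), which gives
\[
\frac{2-\alpha}{1+\frac{1}{1+q}(1-\alpha)}+O(1/n).
\]
The \(-1/n\) in the denominator is absorbed because the denominator is at least \(1\).

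\textbf{Case 2: resource~2 is exhausted.} The argument is symmetric. Lemma~\ref{lem:S2/S2*} gives \(\Delta S_2\ge\Delta S_2^*\) and \(\Delta S_1\ge \Delta S_1^*/(1+1/q)-1/n=\frac{q}{1+q}\Delta S_1^*-1/n\). Setting \(\Delta S_2^*=0\) and \(\Delta S_1^*=\alpha\) gives the worst case
\[
\frac{1+\alpha}{1+\frac{q}{1+q}\alpha}+O(1/n).
\]
Taking the maximum of the two cases yields the statement.

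The main obstacle I expect is justifying the caps \(\Delta S_1^*\le\alpha\) and \(\Delta S_2^*\le 1-\alpha\) precisely, together with the correct \(O(1/n)\) bookkeeping. The augmented quantities \(R_k^*\) exceed \(R_k\) by at most \(1/n\), and \(\Delta S_k^*\) is bounded by the leftover dominant resource \(R_k\le |G_{3-k}|/n\cdot 1\). To handle this, I would argue directly from feasibility: the dominant resource of \(G_1\) (resource~1) has supply \(1\), of which \(G_1\) receives at least \((1-\alpha)\) in Step~1. Any increment in \(G_1\)'s dominant shares therefore uses resource~1 that is otherwise at most \(1-(1-\alpha)=\alpha\). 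This gives the cap without needing \(R^*\).

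Finally, I must check that the monotonicity step is legitimate even when \(t\) or \(s\) are constrained jointly, for example by feasibility coupling them. Relaxing to independent caps only enlarges the feasible set, so the resulting bound is still an upper bound.
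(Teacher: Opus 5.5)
Your proposal is correct and follows essentially the same route as the paper's proof. Both arguments apply Lemma~\ref{lem:S2/S2*} in each exhaustion case, use the feasibility caps $\Delta S_2^*\le R_2\le 1-\alpha$ and $\Delta S_1^*\le R_1\le\alpha$, use monotonicity of the resulting ratio, and absorb the $-1/n$ into a $2/n$ error because the numerator is at most $2$ and the denominator is at least $1$; the paper discards $\Delta S_1^*$ and $\Delta S_1$ via $\frac{1+\Delta S_1^*+\Delta S_2^*}{1+\Delta S_1+\Delta S_2}\le\frac{1+\Delta S_2^*}{1+\Delta S_2}$, which is equivalent to your setting $t=0$. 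The one ordering point is that you should peel off that $2/n$ error before invoking monotonicity in $s$: with the $-1/n$ still in the denominator, the quotient is nondecreasing in $s$ only when $(1+t)\frac{q}{1+q}\ge\frac1n$, which can fail for very small $q$.
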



Lemma~\ref{lem:general-ASF-upper-bound} exposes the main design principle: when resource~1 is exhausted, smaller \(q\) allows \ASF to capture a larger fraction of the group-\(G_2\) improvement available to \opt; when resource~2 is exhausted, larger \(q\) plays the symmetric role.
This intuition will guide the choices of speed ratios in Section~4.

\section{Mechanisms}
\label{sec:mechanisms}

We now instantiate the \(\ASF\) framework. The central design question is how the Step~2 speed rule \(\Theta(\alpha,R_1^*,R_2^*)\) should use the available state information to improve the \fratio while preserving SP. We develop two mechanisms. The first adapts only to the population imbalance parameter $\alpha$ and achieves asymptotic \fratio \(6/5\). The second jointly uses population and remaining-resource information, improving the guarantee to \(2/(2\sqrt2-1)\). We then show that \(\ATwo\) is close to the best guarantee attainable within this framework by establishing a lower bound for all \(\ASF\) mechanisms. Finally, we report experiments on synthetic instances and Google cluster traces.

\subsection{Adaptation by Population Imbalance}
\label{sec:population}

We first study how to exploit the population imbalance parameter $\alpha$. To this end, in this subsection we consider the following special form of the speed rule
\[
\Theta(\alpha,R_1^*,R_2^*)=f(\alpha)r
\]
where $f(\alpha)$ controls the response to population imbalance and $r=R_1^*/R_2^*$ is the same as the previous mechanisms of \citet{BeiLL26}.
The question is how to set $f(\alpha)$.

Recall that the mechanisms \(\FOne\) and \(\FTwos\) correspond to the two extreme choices of $f(\alpha)$. 
The rule \(\FOne\) is the boundary choice \(f(\alpha)=0\): it allocates all Step~2 growth to the minority dominant-resource group, and therefore performs well when the two groups are highly imbalanced. The rule \(\FTwos\) corresponds to \(f(\alpha)=1\): it grows the two groups in proportion to the augmented remaining resources, which is more effective when the group sizes are balanced.
The previous best-known mechanism \(\Hyb\) switches at the population threshold \(\alpha_0=2-\sqrt{3}\approx0.268\): it uses \(\FOne\) when \(\alpha\le\alpha_0\), and \(\FTwos\) otherwise. In the present notation, this is a discontinuous population-dependent rule with $f(\alpha)=0$ if $\alpha \le 2-\sqrt{3}$, and 1 otherwise.

We can improve the performance of \(\Hyb\) by choosing \(f(\alpha)\) continuously. By Corollary~\ref{cor:population-factor-sp}, since the base rule \(\Theta(\alpha,R_1^*,R_2^*)=r\) satisfies the two conditions in Lemma~\ref{lem:sufficient-SP}, multiplying it by any positive population factor \(f(\alpha)\) preserves SP. Thus we may interpolate smoothly between the two extremes. The desired behavior is \(f(\alpha)\to 0\) as \(\alpha\to 0\) and \(f(\alpha)\to 1\) as \(\alpha\to 1/2\).

More precisely, Lemma~\ref{lem:general-ASF-upper-bound} gives a fair-ratio upper bound in terms of the normalized speed \(q\) and $\alpha$. Since we set $\Theta(\alpha,R_1^*,R_2^*)=f(\alpha)r$, \(q=\Theta(\alpha,R_1^*,R_2^*)/r=f(\alpha)\). To optimize the worst-case guarantee, we choose \(f(\alpha)\) by balancing the two terms in that upper bound. This yields \[f_1(\alpha) \coloneqq \frac{2\alpha-\alpha^2}{1-\alpha^2}.\]
We call the resulting \(\ASF\) mechanism \(\AOne\) and show that it improves the asymptotic guarantee from \(3-\sqrt{3}\approx1.268\) to \(6/5=1.2\). The proof is deferred to Appendix~\ref{app:thm:ASF1-ratio}.

\begin{theorem}
\label{thm:ASF1-ratio}
With two resources, \AOne satisfies SI, EF, PO, and SP. Moreover, $\AR(\AOne)=\frac{6}{5}+O\!\left(\frac{1}{n}\right)$.
\end{theorem}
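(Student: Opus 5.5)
The plan has three parts: the axioms, the upper bound from Lemma~\ref{lem:general-ASF-upper-bound}, and a matching lower-bound family. Since \AOne is the \ASF mechanism with speed rule $\Theta(\alpha,R_1^*,R_2^*)=f_1(\alpha)\,r$ and $f_1(\alpha)>0$ on $(0,1/2]$, SI, EF and PO follow directly from Lemma~\ref{lem:ASF-3-properties}. For SP, the paper has already checked that the base rule $\Theta=R_1^*/R_2^*$ is monotone and satisfies both conditions of Lemma~\ref{lem:sufficient-SP}. Corollary~\ref{cor:population-factor-sp} with $f=f_1$ then gives SP. The degenerate case where one group is empty is handled separately: the allocation is just the baseline, and it is trivially fair and optimal.

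For the upper bound I would apply Lemma~\ref{lem:general-ASF-upper-bound} with $q=\theta/r=f_1(\alpha)$. Write $g_1(\alpha,q)=\frac{(2-\alpha)(1+q)}{1+q+1-\alpha}$, which is increasing in $q$, and $g_2(\alpha,q)=\frac{(1+\alpha)(1+q)}{1+q+q\alpha}$, which is decreasing in $q$. I will verify that $f_1$ is exactly the balancing point $g_1=g_2$. Cross-multiplying $(2-\alpha)(1+q+q\alpha)=(1+\alpha)(2-\alpha+q)$ gives $q(1-\alpha^2)=2\alpha-\alpha^2$ after expanding, which confirms this. Substituting back gives $1+q=\frac{1+2\alpha-2\alpha^2}{1-\alpha^2}$, so the common value is
\[
h(\alpha)=\frac{(1+\alpha)(1+2\alpha-2\alpha^2)}{1+2\alpha-2\alpha^2+2\alpha-\alpha^2-\alpha^3}.
\]
It remains to maximize $h$ over $\alpha\in(0,1/2]$. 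I expect the maximum at an interior point. Evaluating at $\alpha=1/2$ gives $(3/2)(3/2)/(3/2+1-1/4-1/8)=(9/4)/(17/8)=18/17$, which is too small, so the maximizer is interior. I would set $h'(\alpha)=0$ and hope for a clean root such as $\alpha=1/3$. Indeed, at $\alpha=1/3$ the numerator is $(4/3)(1+2/3-2/9)=(4/3)(13/9)=52/27$. The denominator is $13/9+2/3-1/9-1/27=(39+18-3-1)/27=53/27$. That gives only $52/53$, which is less than $1$, so I have mis-simplified somewhere. I would redo the algebra carefully and expect the simplified expression to maximize at $6/5$, with the maximizer found by the derivative computation. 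This calculus step, together with correctly simplifying the balanced expression, is the main computational obstacle. The $O(1/n)$ terms carry through unchanged.

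For tightness, I need instances whose ratio approaches $6/5$. I would take $\alpha$ equal to the maximizer and build the instance that makes the bound of Lemma~\ref{lem:general-ASF-upper-bound} tight in the resource-1-exhausted case. Each $G_1$ agent has demand $(1,\varepsilon)$, which forces $R_2^*\approx 1-\alpha$. The $G_2$ agents are chosen so that Lemma~\ref{lem:S2/S2*} is tight: one $G_2$ agent with a tiny resource-1 demand, and the rest with demand $(1-\varepsilon',1)$. Then \opt allocates all growth to $G_2$, while \AOne, by its speed $f_1(\alpha)r$, wastes a $q/(1+q)$ fraction on $G_1$. Computing both welfares as $n\to\infty$ should give $h(\alpha)\to 6/5$. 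Getting the tightness instance to match the lemma's bound exactly, using the limits of zero demands, is the second delicate step.
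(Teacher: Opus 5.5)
Your handling of SI, EF, PO and SP matches the paper exactly, and so does your overall plan: substitute $q=f_1(\alpha)$ into Lemma~\ref{lem:general-ASF-upper-bound}, then build a special-agent instance. However, both computational steps go wrong in ways that change the conclusion. Your balancing equation $q(1-\alpha^2)=2\alpha-\alpha^2$ is right, but you mis-simplify the common value. In fact $(1+q+q\alpha)(1-\alpha^2)=(1+2\alpha-2\alpha^2)+(2\alpha^2-\alpha^3)=1+2\alpha-\alpha^3=(1+\alpha)(1+\alpha-\alpha^2)$, so the common value is $\phi(\alpha)=\frac{1+2\alpha-2\alpha^2}{1+\alpha-\alpha^2}$. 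Since $\phi'(\alpha)=\frac{1-2\alpha}{(1+\alpha-\alpha^2)^2}\ge0$, $\phi$ is increasing on $(0,1/2]$ and its maximum is at the boundary, $\phi(1/2)=6/5$; for comparison, $\phi(1/3)=13/11$. Your claim that the maximizer is interior rests on the wrong value $18/17$ at $\alpha=1/2$ and is false. As written, the proposal never actually establishes the $6/5$ upper bound.

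For tightness you therefore need $\alpha\to1/2$ (the paper uses $\alpha=\lfloor n/2\rfloor/n$), since any fixed $\alpha<1/2$ yields only $\phi(\alpha)<6/5$. More importantly, the special agent's resource-1 coefficient must not be a vanishing (``zero-limit'') demand. Suppose $d_{i^*,1}=o(1/n)$. Then $R_1^*\approx R_1\approx1/n$, $R_2^*\approx1-\alpha$, and $\theta\approx f_1(\alpha)/[n(1-\alpha)]$. While $G_2$ grows by $1-\alpha$, group $G_1$ consumes only about $f_1(\alpha)/n\le1/n$ of resource~1, because $f_1\le1$. So \AOne also gives $i^*$ nearly all of the remaining resource~2, and the ratio tends to $1$. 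Your ``wasted fraction $q/(1+q)$'' requires $d_{i^*,1}\approx r\approx1/[n(1-\alpha)]$, which is exactly the paper's choice of demand $(1/[n(1-\alpha)],1)$. With that choice, $\bigl(\theta+\frac{1}{n(1-\alpha)}\bigr)\Delta S_2\le R_1$ gives $\SW(\AOne)\le1+\frac{1-\alpha}{1+f_1(\alpha)}+O(1/n)$. You also still owe an explicit SI--EF comparison allocation with welfare $2-\alpha-O(1/n)$. The same calibration makes this work: giving $i^*$ utility $(1-\varepsilon)(1-\alpha)$ gives it exactly the ordinary $G_2$ agents' resource-1 amount $(1-\varepsilon)/n$, which rules out envy toward $i^*$.
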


\paragraph{Example~2.}
We illustrate the worst-case upper bound $6/5$ of \(\AOne\), attained when \(\alpha=1/2\), by an example shown in  Figure~\ref{fig:example2}. Let all agents in \(G_1\) have demand \((1,\varepsilon)\), all but one agent in \(G_2\) have demand \((1-\varepsilon,1)\), and one special agent \(i^*\in G_2\) has demand \((2/n,1)\), where \(\varepsilon=o(1/n)\). After Step~1, resource~1 is almost exhausted while about \(1/2\) unit of resource~2 remains. The special agent is the only agent who can efficiently use the remaining resource~2, since she needs only a negligible amount of resource~1. Hence an optimal SI--EF allocation gives almost all the remaining resource~2 to \(i^*\), obtaining \(\SW(\opt)\approx3/2\). In contrast, \(f_1(1/2)=1\), so \(\AOne\) behaves like the balanced rule and gives \(G_2\) only about one half of this opportunity before resource~1 becomes binding. The special agent therefore receives only about \(1/4\) additional dominant share, so \(\SW(\AOne)\approx5/4\). Consequently, the ratio is about \((3/2)/(5/4)=6/5\).

\begin{figure}[t]
    \centering
    \begingroup
    \def\deepblue{blue!70}
    \def\lightblue{blue!45}
    \def\deepred{red!80!black!100}
    \def\lightred{red!45}
    \def\deepgreen{green!55!black}
    \def\lightgreen{green!35}

    \begin{tikzpicture}[scale=3.85, every node/.style={font=\scriptsize}]
        \def\w{0.22}

        \def\redTop{0.445}
        \def\greenOneTop{0.458}
        \def\rOneBot{0.458}
        \def\rOneTop{0.500}
        \def\rOneHalf{0.479}
        \def\rOneTinyBlue{0.496}
        \def\rTwoBot{0.500}
        \def\rTwoTop{0.990}

        \def\asfOneTop{0.745}   
        \def\asfTwoTop{0.975} 

        \def\residualboxes{%
            \filldraw[fill=white] (0,\rOneBot) rectangle (\w,\rOneTop);
            \filldraw[fill=white] (\w,\rTwoBot) rectangle (2*\w,\rTwoTop);
        }

        \def\drawbracelabel#1#2#3#4{%
            \draw[decorate,decoration={brace,mirror,raise=3pt},line width=0.35pt]
                (2*\w,#1) -- (2*\w,#2);
            \node[anchor=west,font=\scriptsize] at (2*\w+0.045,#3) {#4};
        }

        \begin{scope}[shift={(0,0)}]
            \filldraw[fill=\lightred] (0,0) rectangle (\w,\redTop);
            \filldraw[fill=\lightgreen] (0,\redTop) rectangle (\w,\greenOneTop);
            \filldraw[fill=white] (0,\greenOneTop) rectangle (\w,0.500);
            \filldraw[fill=\deepblue] (0,0.500) rectangle (\w,1.000);

            \filldraw[fill=\deepred] (\w,0) rectangle (2*\w,\redTop);
            \filldraw[fill=\deepgreen] (\w,\redTop) rectangle (2*\w,0.500);
            \filldraw[fill=white] (\w,0.500) rectangle (2*\w,\rTwoTop);
            \filldraw[fill=\lightblue] (\w,\rTwoTop) rectangle (2*\w,1.000);

            \node[anchor=south,font=\footnotesize] at (\w,1.025) {(a) Step 1};
            \filldraw[fill=\deepblue] (-0.20,-0.16) rectangle (-0.10,-0.06);
            \filldraw[fill=\lightblue] (-0.10,-0.16) rectangle (0,-0.06);
            \node[anchor=north,font=\scriptsize] at (-0.10,-0.18) {$G_1$};

            \filldraw[fill=\lightgreen] ({\w-0.10},-0.16) rectangle (\w,-0.06);
            \filldraw[fill=\deepgreen] (\w,-0.16) rectangle ({\w+0.10},-0.06);
            \node[anchor=north,font=\scriptsize] at (\w,-0.18) {$i^*$};

            \filldraw[fill=\lightred] (2*\w,-0.16) rectangle ({2*\w+0.10},-0.06);
            \filldraw[fill=\deepred] ({2*\w+0.10},-0.16) rectangle ({2*\w+0.20},-0.06);
            \node[anchor=north,font=\scriptsize] at ({2*\w+0.10},-0.18) {$G_2-i^*$};
        \end{scope}

        \begin{scope}[shift={(1.00,0)}]
            \residualboxes
            \filldraw[fill=\lightgreen] (0,\rOneBot) rectangle (\w,\rOneTop);
            \filldraw[fill=\deepgreen] (\w,\rTwoBot) rectangle (2*\w,\rTwoTop);

            \node[anchor=south,font=\footnotesize] at (\w,1.025) {(b) OPT Step 2};
            \drawbracelabel{\rTwoBot}{\rTwoTop}{0.745}{$R_2^*$}
        \end{scope}

        \begin{scope}[shift={(2.00,0)}]
            \residualboxes
            \filldraw[fill=\lightgreen] (0,\rOneBot) rectangle (\w,\rOneHalf);
            \filldraw[fill=\deepblue] (0,\rOneHalf) rectangle (\w,\rOneTop);
            \filldraw[fill=\deepgreen] (\w,\rTwoBot) rectangle (2*\w,\asfOneTop);

            \node[anchor=south,font=\footnotesize] at (\w,1.025) {(c) \AOne Step 2};
            \drawbracelabel{\rTwoBot}{\asfOneTop}{0.623}{$R_2^*/2$}
        \end{scope}

        \begin{scope}[shift={(3.00,0)}]
            \residualboxes
            \filldraw[fill=\lightgreen] (0,\rOneBot) rectangle (\w,\rOneTinyBlue);
            \filldraw[fill=\deepblue] (0,\rOneTinyBlue) rectangle (\w,\rOneTop);
            \filldraw[fill=\deepgreen] (\w,\rTwoBot) rectangle (2*\w,\asfTwoTop);

            \node[anchor=south,font=\footnotesize] at (\w,1.025) {(d) \ATwo Step 2};
            \drawbracelabel{\rTwoBot}{\asfTwoTop}{0.738}{$\approx R_2^*$}
        \end{scope}
    \end{tikzpicture}
    \endgroup

    \caption{Illustration of the comparison between OPT, \AOne, and \ATwo on Example~2.
    After Step~1, resource~1 is almost exhausted, while approximately half of resource~2 remains.
    OPT allocates almost all of the remaining resource~2 to \(i^*\), whereas \AOne allocates about half of it.
    In contrast, \ATwo almost matches OPT on this instance.
    }
    \label{fig:example2}
\end{figure}

\subsection{Adaptation by Population and Remaining Resources}
\label{sec:population-remaining}

Under \(\AOne\), the normalized speed \(q=f_1(\alpha)\) is determined entirely by the population imbalance. Example~2 shows that this can be too conservative when the remaining resources are highly imbalanced. In particular, when \(r=R_1^*/R_2^*<1\), it can be beneficial to shift more Step~2 growth toward \(G_2\); symmetrically, when \(r>1\), more growth should be shifted toward \(G_1\). A natural way to amplify this response is to replace the linear dependence on \(r\) by \(r^c\) for some \(c>1\), for example \(c=2\).

A fixed exponent such as \(2\), however, can be too aggressive in some states and violate the SP conditions in Lemma~\ref{lem:sufficient-SP}. The amount of admissible adaptation depends on the parameters \(c_1\) and \(c_2\) appearing in those conditions. This motivates choosing the exponent adaptively:
\[
\Theta_2(\alpha,R_1^*,R_2^*)
=
\begin{cases}
r^{\,2-1/c_1}, & r\le 1,\\[2mm]
r^{\,2-1/c_2}, & r\ge 1,
\end{cases}
\qquad
r=\frac{R_1^*}{R_2^*}.
\]
We call the resulting mechanism \(\ATwo\). The exponent \(2-1/c_k\) lies between \(1\) and \(2\), and becomes more aggressive as \(c_k\) increases. In particular, it approaches \(2\) in the extreme regime of Example~2. At the same time, since
\[
2-\frac1{c_k}
\le \min\{c_k,2\}
\le \bar c,
\]
the sensitivity remains controlled by the slack available in the SP condition. The following lemma verifies the required conditions formally.

\begin{lemma}
\label{lem:asf2-sp-condition}
\(\Theta_2\) is a positive monotone speed rule and satisfies the two conditions in Lemma~\ref{lem:sufficient-SP}.
\end{lemma}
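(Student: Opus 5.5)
We verify positivity and monotonicity directly, and then reduce each SP condition of Lemma~\ref{lem:sufficient-SP} to a logarithmic-derivative estimate along the segment from the deviated state to the truthful state. Positivity is clear, since $r>0$. The two branches agree at $r=1$, where both equal $1$, so $\Theta_2$ is continuous. Note that $c_1$ depends only on $(\alpha,R_2^*)$ and is increasing in $R_2^*$, while $c_2$ depends only on $(\alpha,R_1^*)$ and is increasing in $R_1^*$. For monotonicity in $R_2^*$ with $R_1^*$ fixed, we argue by branch. On the branch $r\ge1$ the exponent $2-1/c_2\ge1$ is fixed, so $r^{2-1/c_2}$ decreases as $R_2^*$ grows. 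On the branch $r\le1$, increasing $R_2^*$ both decreases $r\le 1$ and increases the exponent $2-1/c_1$, and both effects decrease $r^{2-1/c_1}$. Continuity at $r=1$ glues the two branches. Monotonicity in $R_1^*$ is symmetric.

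For condition~1 (a deviation in $G_1$, so $R_2^*$ drops to $R_2^*-\delta$ with $R_1^*$ fixed), we use Bernoulli's inequality $(1-x)^{\bar c}\ge 1-\bar c x$, valid since $\bar c\ge1$. This gives $1/(1-\bar c\delta/R_2^*)\ge (R_2^*/(R_2^*-\delta))^{\bar c}$. It therefore suffices to show
\[
\ln\Theta_2(\alpha,R_1^*,R_2^*-\delta)-\ln\Theta_2(\alpha,R_1^*,R_2^*)\le \bar c\,\ln\frac{R_2^*}{R_2^*-\delta}+\ln\Bigl(1+\frac{c_1\delta}{R_1^*}\Bigr).
\]
We write the left side as an integral of $-\partial\ln\Theta_2/\partial R_2$ over $t\in[R_2^*-\delta,R_2^*]$. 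If the segment crosses $r=1$, we split it there; continuity makes the pieces chain additively. Along the segment, $c_1(t)\le c_1$, and hence the local value of $\bar c$ is at most the original $\bar c$. That lets us compare pointwise against $\bar c/t$.

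On the branch $r\ge1$, the derivative is $(2-1/c_2)/t$. Here the paper's observation $2-1/c_k\le\min\{c_k,2\}\le\bar c$ closes the bound without using the numerator factor. On the branch $r\le1$, we use $d(1/c_1)/dR_2=-1/(1-\alpha)$. This gives
\[
-\frac{\partial \ln\Theta_2}{\partial R_2}=\frac{2-1/c_1(t)}{t}+\frac{\ln(t/R_1^*)}{1-\alpha}.
\]
The first term is again bounded by $\bar c/t$. The extra term is the main obstacle: it must be absorbed by the numerator $\ln(1+c_1\delta/R_1^*)$, and it can be large when $R_1^*$ is tiny.

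We plan to integrate the extra term, which gives at most $\delta\ln(R_2^*/R_1^*)/(1-\alpha)$. We then compare this with $\ln(1+c_1\delta/R_1^*)$, using two ingredients. First, $c_1\ge 1$, and $R_2^*\le (1-\alpha)+1/n$ roughly, so $R_2^*/(1-\alpha)$ is essentially at most $1$. Second, we use the elementary inequality $y\ln(1/x)\le \ln(1+y/x)$ for $y\le 1$, or its variant $\ln(1+y/x)\ge y\ln(1+1/x)$ for $0<y\le1$ by concavity. We apply it with $x=R_1^*/R_2^*$ and $y=c_1\delta/R_2^*$. The $1/n$ slack and the exact constant in $c_1$ will need care. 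If the crude comparison fails, we fall back on splitting off part of the $\bar c$-slack: $2-1/c_1\le\bar c-(\bar c-2+1/c_1)$, and we use $\bar c-2+1/c_1\ge0$ when $\bar c=c_1c_2$ is moderate.

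Condition~2 is handled by the mirror-image argument, exchanging the roles of the resources, of $c_1$ and $c_2$, and of the branches $r\le1$ and $r\ge1$.
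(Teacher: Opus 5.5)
Your plan is correct and is essentially the paper's argument written in integral form. The paper states it as an explicit factorization: when $r\le1$ at both states, the ratio equals $\bigl(R_2^*/(R_2^*-\delta)\bigr)^{2-1/c_1'}\bigl(R_2^*/R_1^*\bigr)^{\delta/(1-\alpha)}$, where $c_1'$ is the value of $c_1$ after the deviation. The first factor is bounded by Bernoulli exactly as you do. In the crossing case the paper needs no numerator at all, since both bases are at least one.

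Your last step needs one fix, and with it the fallback is unnecessary. Do not instantiate your inequality with $y=c_1\delta/R_2^*$: the lemma only assumes $\bar c\delta<R_2^*$, and $\bar c$ may equal $2<c_1$, so this $y$ can exceed $1$. Use instead $y=\lambda=\delta/(1-\alpha)$. This satisfies $\lambda\le1$ because $\delta<R_2^*/\bar c\le R_2^*\le1-\alpha$, and the bound $R_2^*\le 1-\alpha$ holds exactly, with no $1/n$ slack. Then
\[
\bigl(R_2^*/R_1^*\bigr)^{\lambda}\le 1+\lambda\bigl(R_2^*/R_1^*-1\bigr)\le 1+c_1\delta/R_1^*,
\]
where the last step uses $c_1\ge1$.
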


We now analyze the \fratio of \(\ATwo\). Its improvement over \(\AOne\) is most transparent on Example~2. Along that instance family, \(\alpha=1/2\), \(R_1^*\to0\), and \(R_2^*\to1/2\), so \(c_1\to\infty\), \(c_2\to1\), and \(2-1/c_1\to2\). Hence the normalized speed satisfies \(q=\Theta_2/r=r^{1-1/c_1}\to0\). Thus \(\ATwo\) allocates almost all remaining resource~2 to the special agent, essentially matching \(\opt\) on Example~2. This explains the behavior in Figure~\ref{fig:example2}: \(\AOne\) captures roughly one half of this opportunity, whereas \(\ATwo\) captures \(1-o(1)\).

The cost of this stronger adaptation is that \(\ATwo\) can overreact when substantial growth in both groups is desirable. 
For example, consider an instance in which agents in \(G_1\) have demand \((1,\varepsilon)\), agents in \(G_2\) have demand \((\varepsilon,1)\), and \(\alpha=1-1/\sqrt2\).
As \(\varepsilon\to0\), after Step~1, we have $R_1^* \to \alpha$ and $R_2^* \to 1-\alpha$, then \(r=R_1^*/R_2^* \to \alpha/(1-\alpha)=\sqrt2-1<1\).
Because the two groups have asymptotically disjoint demands, an optimal SI--EF allocation can nearly exhaust both resources, with Step-2 increment ratio approaching $r$.
In contrast, under \(\ATwo\), both $c_1$ and $c_2$ tend to infinity, so the ratio approaches $r^2$.
Thus \(\ATwo\) allocates too little Step-2 growth to \(G_1\). 
The resulting welfare ratio approaches \(2(1+r)/(2+r+r^2)=2/(2\sqrt2-1)\). The following theorem shows that this family is asymptotically worst-case.
The proof is deferred to Appendix~\ref{app:thm:ASF2-summary}.

\begin{theorem}
\label{thm:ASF2-summary}
With two resources, \ATwo satisfies SI, EF, PO, and SP. Moreover, $\AR(\ATwo)=\frac{2}{2\sqrt2-1}
    +O\!\left(\frac1n\right) \approx 1.09384 +O\!\left(\frac1n\right)$.
\end{theorem}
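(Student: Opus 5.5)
The axiomatic part is already in hand: Lemma~\ref{lem:ASF-3-properties} gives SI, EF and PO for every \ASF mechanism. Lemma~\ref{lem:asf2-sp-condition} shows that $\Theta_2$ is a positive monotone speed rule satisfying both conditions of Lemma~\ref{lem:sufficient-SP}, so SP follows. The substance of the proof is therefore the \fratio bound, which has an upper-bound part and a matching lower bound.

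\emph{Upper bound.} We apply Lemma~\ref{lem:general-ASF-upper-bound} with normalized speed $q=\Theta_2/r$, so $q=r^{1-1/c_1}$ when $r\le1$ and $q=r^{1-1/c_2}$ when $r\ge1$. By the symmetry exchanging the two resources we may focus on the hard case. Suppose $r\le 1$, so $q\le 1$. The case where resource~2 is exhausted is then the delicate one, because its term $\frac{1+\alpha}{1+\frac{q}{1+q}\alpha}$ grows as $q$ decreases. When resource~1 is exhausted, $q\le1$ only helps: the first term is at most $\frac{2-\alpha}{1+(1-\alpha)/2}\le\frac{2-\alpha}{(3-\alpha)/2}$. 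This is not below the target for all $\alpha$, so the bound from Lemma~\ref{lem:general-ASF-upper-bound} alone is too coarse. We therefore need to relate $q$, $\alpha$ and $r$ through feasibility.

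\emph{Key feasibility constraints.} I would use three facts. First, $R_1^*\le \alpha+1/n$ and $R_2^*\le 1-\alpha+1/n$. Second, $c_1\ge1$, and $c_1$ is small only when $R_2^*$ is small relative to $1-\alpha$. Third, when resource~2 is exhausted, the optimum's $G_1$-gain $\Delta S_1^*$ is at most about $R_1^*$. I would redo the estimate behind Lemma~\ref{lem:general-ASF-upper-bound} keeping $R_1^*,R_2^*$ explicit instead of worst-casing them. The welfare is $1+\Delta S_1+\Delta S_2$. Since $\Delta S_2\ge$ (the $G_2$-part exhausting resource~2) and $\Delta S_1=\theta\Delta S_2$, I expect a bound of the form
\[
\frac{1+\Delta S_1^*+\Delta S_2^*}{1+\Delta S_1+\Delta S_2}\lesssim \frac{1+R_1^*+R_2^*}{1+R_2^*(1+q r)/(1+\dots)}.
\]
In the extreme regime $c_1,c_2\to\infty$ (disjoint demands), this becomes $\frac{2(1+r)}{2+r+r^2}$ with $r\le\alpha/(1-\alpha)$. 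Maximizing over $r$ gives $r=\sqrt2-1$ and the value $2/(2\sqrt2-1)$. When $c_1$ is finite the exponent $2-1/c_1$ is smaller, which reduces over-reaction. At the same time, a large $R_2^*$ relative to $1-\alpha$ means $G_1$'s non-dominant demands are small, so $c_1$ is large. I would parametrize by $\rho=(R_2^*-1/n)/(1-\alpha)\in[0,1]$, so that $1/c_1=1-\rho$ and $q=r^{\rho}$. I would then show that $\rho\mapsto$ ratio is maximized at $\rho\to1$, by a monotonicity or convexity argument in the two variables $(r,\rho)$, with $O(1/n)$ absorbed.

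\emph{Lower bound and main obstacle.} For tightness I would take the instance family in the text: $G_1$ with demands $(1,\varepsilon)$, $G_2$ with demands $(\varepsilon,1)$, $\alpha=1-1/\sqrt2$, and $\varepsilon\to0$, $n\to\infty$. Computing both allocations directly gives the ratio $2(1+r)/(2+r+r^2)$ with $r=\sqrt2-1$. The main obstacle is the case-2 upper bound, where the optimum's gains in both groups are constrained jointly by $R_1^*,R_2^*$ and the within-group water-filling efficiency. I would first attempt a bound via Lemma~\ref{lem:S2/S2*} refined with Lemma~\ref{lem:wf-marginal-bound}, since $c_1$ lower-bounds the dominant gain per unit of resource~2. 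The resulting two-variable optimization would then be checked analytically at its boundary $\rho=1$.
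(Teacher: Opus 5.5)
Your handling of SI, EF, PO and SP, and your tightness family, match the paper. The gap is the upper bound, and it affects both exhaustion cases, not only the one you flag. \emph{Resource~1 exhausted ($r\le1$).} You note that Lemma~\ref{lem:general-ASF-upper-bound} is too coarse here but never replace it. Your plan to push $\rho\to1$ points the wrong way for this case: since $r\le1$, the first term grows with $q=r^{\rho}$, i.e.\ as $\rho\to0$.

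Two ingredients are needed. The first is the coupling $R_2^*=\rho(1-\alpha)+1/n$ together with $rR_2^*=R_1^*\le\alpha$, which gives $R_2^*\le(\rho+1/n)/(1+r\rho)$; a nearly balanced speed therefore forces little leftover resource~2. The second is exact resource-1 balance instead of Lemma~\ref{lem:S2/S2*}. Even with $R_2^*$ explicit, $(1+R_2^*)/(1+R_2^*/(1+q))$ equals $6/5$ at $r=\rho=1$, $R_2^*=1/2$, because it discards the mechanism's surplus $\Delta S_1-\Delta S_1^*$. The paper instead writes $\Delta S_1+\Delta S_2=(1+rq)R_1/(rq+\beta)$ with $\beta=C_2(\Delta S_2)/\Delta S_2$. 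It then bounds the optimum by $R_1^*+(1-\beta)\min\{R_2^*,R_1^*/\beta\}$ via convexity of the consumption curve $C_2$ (Lemma~\ref{lem:asf-resource-exhaustion-comparison}). Finally it needs a Bernstein-certificate argument to get $\Phi_1\le 12/11$ (Lemma~\ref{lem:asf2-new-scalar}). This is the technically heavier case, not the one you call the main obstacle.

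\emph{Resource~2 exhausted.} Your numerator $1+R_1^*+R_2^*$ lets the optimum's $G_1$ grow at no resource-2 cost. The only coupling you suggest, $\beta\le 1/c_1=1-\rho$ from Lemma~\ref{lem:wf-marginal-bound} with $\beta=C_1(\Delta S_1)/\Delta S_1$, does not rescue it. With denominator $1+R_2^*(1+rq)/(1+\beta rq)$, take $r=q=1$, $R_1^*=R_2^*=1/3$, $\beta=1/2$ (e.g.\ $\alpha=1/3$, $\rho=1/2$, up to $O(1/n)$). Your expression then equals $15/13>2/(2\sqrt2-1)$, so it is not maximized at $\rho\to1$.

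The missing idea is again convexity. Here $\Delta S_1^*\ge\Delta S_1$ by Lemma~\ref{lem:S2/S2*}, so the optimum's $G_1$ consumes resource~2 at average rate at least $\beta$. The numerator therefore becomes $1+R_2^*\min\{1+r(1-\beta),1/\beta\}$. With this numerator the bound is nonincreasing in $q$, so $q=r$ (your $\rho=1$) and $R_2^*=1/(1+r)$ are indeed worst. You must still maximize jointly over $(r,\beta)$, not only over $r$ at $\beta=0$. For example, at $r=1$, $\beta=0$ gives ratio $1$, while $\beta\approx0.45$ gives about $1.05$. The paper handles this joint maximization through the quadratic in $z=r\beta$: a completed square for $r\le3/5$ and a certificate on $[3/5,1]$.
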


\subsection{\ASF Lower Bound}
\label{sec:ASF-limit}

The preceding two mechanisms improve the asymptotic \fratio to \(2/(2\sqrt2-1)\). We next ask how much further this guarantee can be improved within the \(\ASF\) framework.
It turns out that the remaining room is small. We show that  any \ASF{} mechanism whose Step~2 speed ratio depends only on the state variables \((\alpha,R_1^*,R_2^*)\) has asymptotic fair-ratio at least \((191-2\sqrt{514})/135\approx 1.07894\). Thus \ATwo{} is close to the best possible guarantee obtainable from this state information alone.

\begin{theorem}
\label{thm:asymptotic-asf-lower-bound}
Every \ASF{} mechanism whose Step~2 speed ratio depends only on the state parameters \((\alpha,R_1^*,R_2^*)\) has a fair-ratio lower bound of \((191-2\sqrt{514})/135\approx1.07894\).
\end{theorem}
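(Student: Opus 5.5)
Since the speed ratio sees only the state \((\alpha,R_1^*,R_2^*)\), the plan is an indistinguishability argument. I will fix one state \(s=(\alpha,R_1^*,R_2^*)\) and build two instance families that both realize \(s\) but have different optimal Step~2 speeds. Because the mechanism must commit to a single \(\theta=\Theta(s)\) on both, the fair-ratio is at least \(\min_{\theta}\max\{\rho_A(\theta),\rho_B(\theta)\}\), where \(\rho_A,\rho_B\) are the asymptotic welfare ratios on the two families. Optimizing over the state parameters then gives the claimed constant.

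\emph{Family A (disjoint demands)} follows the worst case of \(\ATwo\). Take \(G_1\) agents with demand \((1,\varepsilon)\) and \(G_2\) agents with demand \((\varepsilon,1)\). Then \(R_1^*\approx\alpha\) and \(R_2^*\approx1-\alpha\), and \opt nearly exhausts both resources. For a target state with smaller \(R_1^*,R_2^*\), I would raise the non-dominant demands uniformly to \((1,a)\) and \((b,1)\); this tunes \(R_1^*,R_2^*\) while both groups remain "balanced". Any \(\theta\) other than the ideal ratio exhausts one resource early, and the welfare ratio is an explicit rational function of \(\theta\) and the state. It is increasing as \(\theta\) moves away from the ideal ratio in either direction.

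\emph{Family B (special agent)} follows Example~2. Uniform agents make Step~2 growth inefficient for one group, and a single special agent (or a vanishing fraction of agents) with tiny non-dominant demand can absorb the remaining dominant resource almost for free. Here \opt wants \(\theta^*\to0\) (or \(\infty\) in the mirrored version). By Lemma~\ref{lem:S2/S2*}, and by the tightness of the AM--HM step in Lemma~\ref{lem:wf-marginal-bound}, the mechanism with normalized speed \(q=\theta/r\) captures only about a \(1/(1+q)\) fraction of this gain. The realized ratio then matches the first term of Lemma~\ref{lem:general-ASF-upper-bound}, with the state adjusted so that \(R_1^*,R_2^*\) equal those of Family A. I will compute \(\rho_B(q)\) in closed form; it is decreasing in \(q\). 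I will also include the mirrored Family B', which punishes small \(q\).

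For fixed \(s\), \(\rho_A\) penalizes \(q\) far from its ideal value while \(\rho_B\) penalizes large \(q\). The best response is therefore the crossing point \(q^*(s)\) solving \(\rho_A(q)=\rho_B(q)\), which is a quadratic in \(q\). I then maximize the common value over \(\alpha\) and the free remaining-resource parameter. I expect the optimum at a specific interior state where the resulting expression simplifies to \((191-2\sqrt{514})/135\); the \(\sqrt{514}\) should come from the discriminant of that quadratic.

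The main obstacle is making the two families share exactly the same state \((\alpha,R_1^*,R_2^*)\). This matters especially because \(R_k^*\) includes the \(\min\) correction term, and the special agent's demand enters that minimum. The correction is \(O(1/n)\), so I would absorb it into the limit, using vanishing perturbations \(\varepsilon=o(1/n)\) and matching the parameters \(a,b\) to first order. I also need the two families to realize the extremes of the trade-off simultaneously, so that Family B is genuinely worst for large \(q\) at the same state where Family A is worst. First I will fix \(\alpha\) and a one-parameter slice of states, optimize symbolically, and only then verify that the optimum matches the stated constant.
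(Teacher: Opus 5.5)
Your high-level strategy is exactly the paper's. You take two instances with \emph{identical} state, so the mechanism is forced to use one $\theta$ on both. One instance punishes large $\theta$, the other punishes small $\theta$, and you balance the two bounds. The gap is that the concrete families you assign to these roles cannot coexist at a common state.

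Your Family B is a special $G_2$ agent with tiny resource-1 demand, with \opt wanting $\theta^*\to0$ and ratio equal to the first term of Lemma~\ref{lem:general-ASF-upper-bound}. It only exists when $R_1^*\to0$, as in Example~2 where $R_1^*=O(1/n)$. If $R_1^*$ stays bounded away from $0$, such a special agent makes \opt route \emph{all} leftover resource~1 to $G_1$. Each unit moved to $G_1$ adds one unit of welfare and removes only $d_{i,2}<1$ units from the special agent. So that family has an interior ideal speed and punishes \emph{small} $\theta$. Conversely, at a state with $R_1^*\to0$, your uniform Family A has $1-b\le R_1^*/\alpha\to0$. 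Then both \opt and the mechanism gain only $o(1)$ in Step~2, so the mechanism can take $q\to0$ at no cost and the pair yields only the trivial bound. (You also call $\rho_B$ decreasing in $q$ while saying it penalizes large $q$; the first term of that lemma is increasing in $q$.)

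Moreover, $\Theta$ is an arbitrary function of the state with no continuity. Matching states ``to first order'' or absorbing the $\min$-correction ``into the limit'' therefore forces nothing: the states must coincide exactly. Two uniform instances cannot achieve this, since the state determines $a$ and $b$.

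The paper reverses the roles and uses the $\min$-correction as the enabling device. Fix $\alpha=1/3$ and give every $G_1$ agent demand $(1,\varepsilon_n)$.
\begin{itemize}
\item In $I_n$ every $G_2$ agent has $(1/3,1)$. This is your Family A with $b=\alpha$. Its ideal speed is $0$, because $G_2$ alone exactly exhausts the leftovers $2/9$ and $\frac23(1-\varepsilon_n)$.
\item In $J_n$ one $G_2$ agent becomes $(\varepsilon_n,1)$. This lowers $\frac1n\sum_{G_2}d_{i,1}$ and $\frac1n\min_{G_2}d_{i,1}$ by the same amount, so $(\alpha,R_1^*,R_2^*)$ is exactly unchanged. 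Now the special agent absorbs resource~2 while $G_1$ takes resource~1, so the ideal speed is $1/3$.
\end{itemize}
Resource-1 feasibility in $I_n$ gives ratio at least $(15\theta+5)/(11\theta+5)$. Resource-2 feasibility in $J_n$ gives at least $17/(15+6\theta)$. Balancing yields $45\theta^2+34\theta-5=0$ and the stated constant. The constant is specific to this pair; it does not come from optimizing over states, as you anticipated.
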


\begin{proof}[Proof sketch]
The lower bound follows from an information limitation of the state parameters.
We construct two instances with identical state parameters \((\alpha,R_1^*,R_2^*)\) but for which good SI--EF allocations require
incompatible Step~2 speed ratios.
Let \(n\ge6\) be divisible by three and set $\varepsilon_n=n^{-2}$.
In both instances, \(G_1\) consists of \(2n/3\) agents with demand \((1,\varepsilon_n)\), and $G_2$ contains $n/3$ agents.
In the first instance, every \(G_2\) agent has demand \((1/3,1)\).
The second instance differs only in that one \(G_2\) agent has demand \((\varepsilon_n,1)\).
The definition of \(R_1^*\) excludes the smallest non-dominant demand in \(G_2\), so this replacement leaves \(R_1^*\) unchanged.
Since \(\alpha\) and \(R_2^*\) are also unchanged, the mechanism must choose the same speed $\theta$ on both instances.
However, these two instances favor very different allocations. 
As \(n\to\infty\), the first admits an SI--EF allocation in which essentially all Step~2 growth goes to \(G_2\), corresponding to speed \(0\).
The second admits an SI--EF allocation whose Step~2 increments in
\(G_1\) and \(G_2\) have asymptotic ratio \(1/3\).
Figure~\ref{fig:asf-lower-bound} illustrates these two comparison allocations.
These conflicting requirements on the common speed yield welfare-ratio lower bounds
\((15\theta+5)/(11\theta+5)\) and \(17/(15+6\theta)\), up to \(1/n^2\) terms.
Balancing the two bounds gives the stated constant. The full proof is deferred to Appendix~\ref{app:asymptotic-asf-lower-bound}.
\end{proof}

\begin{figure}[t]
    \centering
    \begingroup
    \def\deepblue{blue!70}
    \def\lightblue{blue!45}
    \def\deepred{red!80!black!100}
    \def\lightred{red!45}
    \def\deepgreen{green!55!black}
    \def\lightgreen{green!35}

    \begin{tikzpicture}[
        scale=3.85,
        every node/.style={font=\scriptsize},
       speedbox/.style={
            draw,
            line width=0.4pt,
            minimum width=1.694cm,
            minimum height=23pt,
            text height=12pt,
            text depth=6pt,
            inner xsep=3pt,
            inner ysep=2pt,
            anchor=north,
            font=\scriptsize
        }
    ]
        \def\w{0.22}

        \pgfmathsetmacro{\rOneTop}{1/3}
        \pgfmathsetmacro{\rTwoBot}{1/3}
        \def\rTwoTop{0.990}

        \pgfmathsetmacro{\firstOneBot}{1/9}

        \def\secondRedOneTop{0.090}
        \def\secondOneBot{0.104}
        \def\secondRedTwoTop{0.270}
        \def\secondGreenOneTop{0.139}
        \def\secondGreenTwoTop{0.965}

        \def\remaininglabels#1{%
            \draw[
                decorate,
                decoration={brace,raise=3pt},
                line width=0.35pt
            ]
                (0,#1) -- (0,\rOneTop);
            \node[anchor=east,font=\scriptsize]
                at (-0.045,{(#1+\rOneTop)/2})
                {$R_1^*$};

            \draw[
                decorate,
                decoration={brace,mirror,raise=3pt},
                line width=0.35pt
            ]
                (2*\w,\rTwoBot) -- (2*\w,\rTwoTop);
            \node[anchor=west,font=\scriptsize]
                at (2*\w+0.045,{(\rTwoBot+\rTwoTop)/2})
                {$R_2^*$};
        }

        \begin{scope}[shift={(0,0)}]
            \filldraw[fill=\lightred]
                (0,0)
                rectangle (\w,\firstOneBot);
            \filldraw[fill=white]
                (0,\firstOneBot)
                rectangle (\w,\rOneTop);
            \filldraw[fill=\deepblue]
                (0,\rOneTop)
                rectangle (\w,1);

            \filldraw[fill=\deepred]
                (\w,0)
                rectangle (2*\w,\rTwoBot);
            \filldraw[fill=white]
                (\w,\rTwoBot)
                rectangle (2*\w,\rTwoTop);
            \filldraw[fill=\lightblue]
                (\w,\rTwoTop)
                rectangle (2*\w,1);

            \draw
                (0,\secondRedOneTop) -- (\w,\secondRedOneTop);
            \draw
                (\w,\secondRedTwoTop) -- (2*\w,\secondRedTwoTop);
            
            \node[anchor=south,font=\footnotesize]
                at (\w,1.025) {(a1) Step 1};
                
            \remaininglabels{\secondRedOneTop}

            \filldraw[fill=\deepblue]
                (-0.10,-0.16) rectangle (0,-0.06);
            \filldraw[fill=\lightblue]
                (0,-0.16) rectangle (0.10,-0.06);
            \node[anchor=north,font=\scriptsize]
                at (0,-0.18) {$G_1$};

            \filldraw[fill=\lightred]
                ({2*\w-0.10},-0.16) rectangle (2*\w,-0.06);
            \filldraw[fill=\deepred]
                (2*\w,-0.16) rectangle ({2*\w+0.10},-0.06);
            \node[anchor=north,font=\scriptsize]
                at (2*\w,-0.18) {$G_2$};
        \end{scope}
        
        \begin{scope}[shift={(0.80,0)}]
            \filldraw[fill=\lightred]
                (0,\firstOneBot)
                rectangle (\w,\rOneTop);

            \filldraw[fill=\deepred]
                (\w,\rTwoBot)
                rectangle (2*\w,\rTwoTop);

            \node[anchor=south,font=\footnotesize]
                at (\w,1.025) {(b1) OPT Step 2};

            \node[speedbox]
                at (\w,-0.06) {$\theta^*=0$};
        \end{scope}

        \begin{scope}[shift={(2.20,0)}]
            \filldraw[fill=\lightred]
                (0,0)
                rectangle (\w,\secondRedOneTop);
            \filldraw[fill=\lightgreen]
                (0,\secondRedOneTop)
                rectangle (\w,\secondOneBot);
            \filldraw[fill=white]
                (0,\secondOneBot)
                rectangle (\w,\rOneTop);
            \filldraw[fill=\deepblue]
                (0,\rOneTop)
                rectangle (\w,1);

            \filldraw[fill=\deepred]
                (\w,0)
                rectangle (2*\w,\secondRedTwoTop);
            \filldraw[fill=\deepgreen]
                (\w,\secondRedTwoTop)
                rectangle (2*\w,\rTwoBot);
            \filldraw[fill=white]
                (\w,\rTwoBot)
                rectangle (2*\w,\rTwoTop);
            \filldraw[fill=\lightblue]
                (\w,\rTwoTop)
                rectangle (2*\w,1);

            \node[anchor=south,font=\footnotesize]
                at (\w,1.025) {(a2) Step 1};

            \remaininglabels{\secondRedOneTop}

            \filldraw[fill=\deepblue]
                (-0.20,-0.16) rectangle (-0.10,-0.06);
            \filldraw[fill=\lightblue]
                (-0.10,-0.16) rectangle (0,-0.06);
            \node[anchor=north,font=\scriptsize]
                at (-0.10,-0.18) {$G_1$};

            \filldraw[fill=\lightgreen]
                ({\w-0.10},-0.16) rectangle (\w,-0.06);
            \filldraw[fill=\deepgreen]
                (\w,-0.16) rectangle ({\w+0.10},-0.06);
            \node[anchor=north,font=\scriptsize]
                at (\w,-0.18) {$i^*$};

            \filldraw[fill=\lightred]
                (2*\w,-0.16) rectangle ({2*\w+0.10},-0.06);
            \filldraw[fill=\deepred]
                ({2*\w+0.10},-0.16) rectangle ({2*\w+0.20},-0.06);
            \node[anchor=north,font=\scriptsize]
                at ({2*\w+0.10},-0.18) {$G_2-i^*$};
        \end{scope}

        \begin{scope}[shift={(3.00,0)}]
            \filldraw[fill=\lightgreen]
                (0,\secondOneBot)
                rectangle (\w,\secondGreenOneTop);
            \filldraw[fill=\deepblue]
                (0,\secondGreenOneTop)
                rectangle (\w,\rOneTop);

            \filldraw[fill=\deepgreen]
                (\w,\rTwoBot)
                rectangle (2*\w,\secondGreenTwoTop);
            \filldraw[fill=\lightblue]
                (\w,\secondGreenTwoTop)
                rectangle (2*\w,\rTwoTop);

            \node[anchor=south,font=\footnotesize]
                at (\w,1.025) {(b2) OPT Step 2};

            \node[speedbox]
                at (\w,-0.06) {$\theta^*=\frac{1}{3}$};
        \end{scope}
    \end{tikzpicture}
    \endgroup

    \caption{Illustration of the two instances used in Theorem~\ref{thm:asymptotic-asf-lower-bound}, which have identical state parameters \((\alpha,R_1^*,R_2^*)\) after Step~1 but different OPT allocations in Step~2.
    In the first instance, OPT allocates all Step~2 growth to \(G_2\), resulting in \(\theta^*=0\).
    In the second instance, OPT allocates most of the remaining resource~1 to \(G_1\) and most of the remaining resource~2 to \(i^*\), resulting in \(\theta^*=1/3\).
    }
    \label{fig:asf-lower-bound}
\end{figure}

\subsection{Experimental Evaluation}

We complement our theoretical guarantees with experiments on synthetic instances and instances generated from Google cluster-usage traces~\cite{reiss2011google}, following the experimental setting of \cite{BeiLL22}. 
We compare our adaptive-speed mechanisms \(\AOne\) and \(\ATwo\) with \(\DR\), \(\FOne\), \(\FTwos\), and \(\Hyb\). 
For each parameter setting, we generate \(1{,}000\) instances.
For each instance, we compute the ratio between the optimal SI--EF social welfare and the welfare achieved by the mechanism, and report the arithmetic mean over the \(1{,}000\) ratios. Figure~\ref{fig:exp-sw} summarizes the results; lower values are better, and the ideal value is \(1\). 
Overall, \(\ATwo\) performs best on moderately imbalanced synthetic instances and attains the lowest observed mean ratio in most Google-trace settings. 

\begin{figure}[t]
    \centering
    \safeincludegraphics[width=\textwidth]{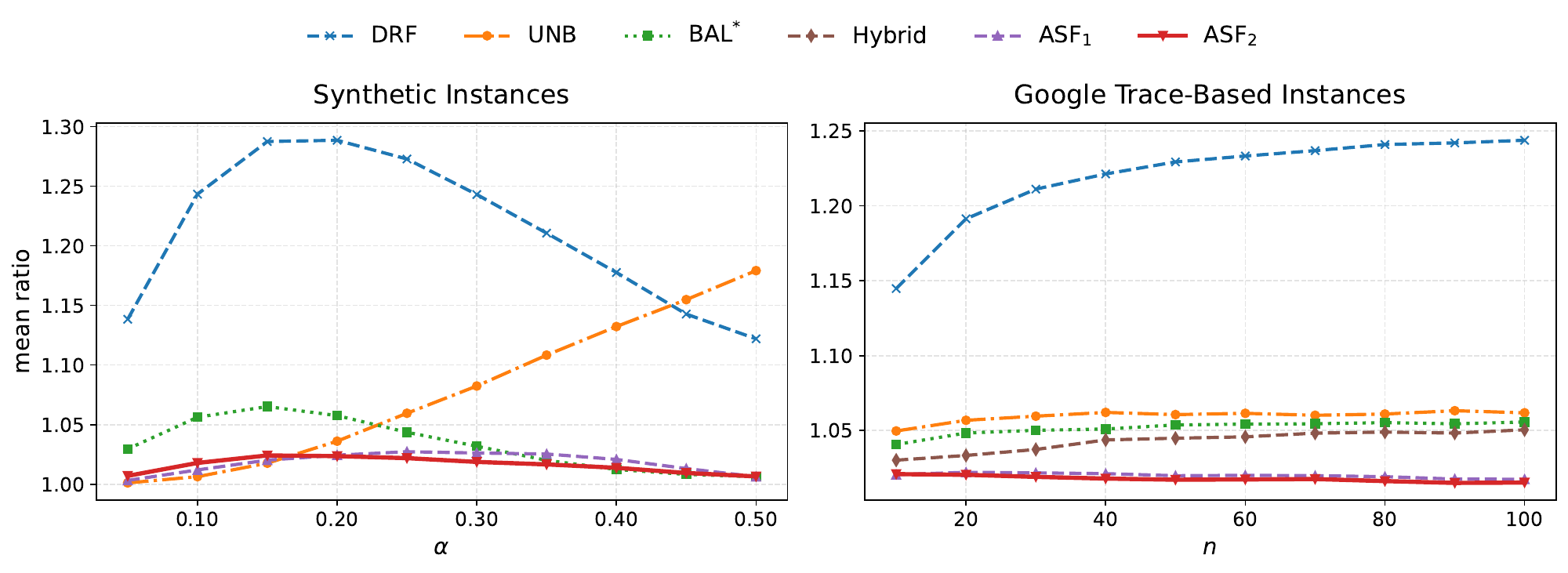}
    \caption{Mean welfare ratios on synthetic instances and Google cluster traces. Each point is the arithmetic mean, over \(1{,}000\) instances, of the optimal SI--EF social welfare divided by the mechanism's social welfare. Lower is better; the ideal value is \(1\).}
    \label{fig:exp-sw}
\end{figure}

\emph{Random instances with different $\alpha$.}
We first test the mechanisms on random instances with fixed $n=100$ and nominal minority-group proportion $\alpha\in\{0.05,0.10,\ldots,0.50\}$. For both groups, the non-dominant entries are sampled uniformly from\(\{0.01,0.02,\ldots,0.99\}\).
The results are shown in the left panel of Figure~\ref{fig:exp-sw}.
For visual clarity, we omit Hybrid, which coincides exactly with \(\FOne\) when \(\alpha\le 2-\sqrt{3}\) and with \(\FTwos\) otherwise.
For \(\alpha\in\{0.05,0.10,0.15\}\), both new mechanisms outperform \(\FTwos\) and closely match \(\FOne\), which coincides with \(\Hyb\) in this regime and attains the lowest observed mean ratio.
For \(\alpha\in\{0.20,0.25,0.30,0.35\}\), \(\ATwo\) attains the lowest observed mean ratio among all mechanisms. 
For \(\alpha\in\{0.40,0.45,0.50\}\), \(\FTwos\), which coincides with \(\Hyb\), attains the lowest observed mean ratio and slightly outperforms \(\ATwo\). 
Thus, \(\ATwo\) exhibits the strongest overall empirical performance across the tested population settings: it performs best in the moderately imbalanced regime and remains close to the best-performing prior mechanism under both stronger imbalance and near balance.

\emph{Instances generated from Google traces.}
We next generate trace-based instances from Google cluster-usage traces~\cite{reiss2011google}. The traces contain CPU and memory demands of submitted tasks. We normalize these demands into demand vectors and randomly sample \(n\) vectors to form each instance, for \(n\in\{10,20,\ldots,100\}\).
The results are shown in the right panel of Figure~\ref{fig:exp-sw}. Both new mechanisms achieve lower observed mean ratios than all prior mechanisms at every tested population size.
Their larger advantage relative to the preceding random-instance experiments appears to reflect the stronger asymmetry in the Google traces. In the normalized pool, about \(66.6\%\) of tasks are CPU-dominant and about \(33.4\%\) are memory-dominant, while the two groups also differ in their average non-dominant demands (\(0.417\) versus \(0.537\)). 
Consequently, sampled instances typically have \(\alpha\approx 1/3\) together with substantially imbalanced augmented remaining resources, with median \(R_1^*/R_2^*\) around \(0.4\). This regime is particularly favorable to $\ATwo$, which adapts jointly to population imbalance and remaining-resource imbalance.

\section{Beyond Two Resources}
\label{sec:beyond-two}

We now turn to the general case with \(m\ge 3\) resources. The positive results above show that, with two resources, strategy-proof mechanisms can achieve fair-ratio close to \(1\). We show that this phenomenon is specific to the two-resource setting.

\subsection{A Unified Lower Bound for \(m\ge3\)}
\label{sec:m-barrier}

\citet{BeiLL26} proved that, when \(m\ge 4\), every mechanism satisfying SI, EF, PO, and SP has fair-ratio \(m\). 
For \(m=3\), however, their lower bound was only \(2\), leaving open whether the same \(m\)-barrier holds. 
We give a unified lower-bound construction for every \(m\ge3\), proving the stronger statement that every deterministic mechanism satisfying only SI and SP has fair-ratio \(m\).

Our construction combines the prefix-compensation idea from the previous \(m=3\) lower-bound construction with the auxiliary-coordinate idea used for $m \ge 4$.
The key additional ingredient is a harmonic-weight counting argument, which replaces the ordinary counting argument and allows the construction to work uniformly for all \(m\ge3\).
The \(m=3\) case was independently resolved in concurrent work by
\citet{fithiansolving}, using essentially the same harmonic-weight argument; see Section~\ref{sec:related-work} for discussion.
The proof of the following theorem is deferred to Appendix~\ref{app:thm:PoSP}.

\begin{theorem}
\label{thm:PoSP}
With $m\ge 3$ resources, any mechanism $M$ satisfying SI and SP has
$\AR(M)=m$.
\end{theorem}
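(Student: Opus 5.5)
The upper bound $\AR(M)\le m$ holds for any SI mechanism. Every SI allocation gives each agent utility at least $1/n$, so $\SW(M(\mathbf I))\ge 1$. Conversely, every feasible allocation has welfare at most $m$. To see this, sum each agent's dominant-resource share: $u_i\le A_{i r_i^*}$, and summing over agents grouped by dominant resource gives at most one unit per resource. In particular the optimal SI--EF welfare is at most $m$. All of the work is therefore in the matching lower bound. For every $\varepsilon>0$ we must exhibit an instance on which $M$ gets welfare close to $1$ while some SI--EF allocation has welfare close to $m$.

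For the lower bound I would follow the strategy sketched in the text, namely prefix compensation combined with auxiliary coordinates and a harmonic-weight count. Fix $m\ge3$ and a large $n$. I would build a family of instances in which most agents' demand vectors are spread over all resources and look alike. A few special agents each have a single dominant resource and tiny demand elsewhere. The fair optimum can then give each of $m$ disjoint groups nearly a full unit of its own resource while staying SI and EF, so its welfare is about $m$.

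The key step is to show that $M$ must give the special agents only about $1/n$ each. SI alone forces every agent to receive at least $1/n$. For the rest, I would use SP through a chain of reports. Agent $i$ with true demand $\mathbf d_i$ is compared with a neighbouring instance in which it reports a vector that dominates (or is dominated by) the true one on a prefix of coordinates. The Leontief structure then transfers utility bounds across the chain: if $M$ gave agent $i$ much under the truthful report, it would have to give similarly much under the misreport. Summed over agents, this would violate feasibility on the instance where everyone reports ``spread'' demands. There SI already uses $1/n$ per agent and, by design, nearly all of every resource.

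The counting is where harmonic weights enter. Order the agents along the chain and let agent $k$'s modified demand put weight about $1/k$ on the auxiliary coordinate. Each agent then consumes roughly a $1/(k\,H_n)$-fraction of that resource, and the feasibility constraint reads $\sum_k$ (gain of agent $k$)$/k\lesssim$ slack. Because the slack is $O(1/H_n)$ or smaller, only $o(n)$ agents can receive more than a constant multiple of their SI share. The mechanism's welfare is therefore $1+o(1)$, while the fair optimum stays at $m-o(1)$. I expect this step to be the main obstacle. It requires calibrating the prefix-compensation demands so that every intermediate report is a valid instance and SP applies at each link. It also requires that the harmonic sum grows enough to defeat the $m-1$ extra resources, which is exactly where the old counting argument failed for $m=3$.

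Finally, I would check that the benchmark allocation is EF despite the special agents receiving large bundles. Their demands are concentrated on different resources, so none covets another's bundle. I would then take $n\to\infty$ to conclude $\AR(M)\ge m$.
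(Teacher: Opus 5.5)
Your upper bound is correct. The lower-bound plan has genuine gaps, however. You have taken the phrases prefix compensation, auxiliary coordinates and harmonic weights from the text but given them the wrong roles, and the argument as described does not close.

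\textbf{The mechanism side.} The SP step runs the wrong way. You say that if $M$ gives agent $i$ much under its truthful report, it must give as much under the misreport. SP gives the converse, and the converse is what bounds welfare.

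The working construction has three ingredients. A large group $G_1$ with demand $(1,\epsilon_1,\dots,\epsilon_1)$ absorbs $n_1/n=1-o(1)$ of resource~1 through SI alone, so most agents are concentrated on resource~1, not spread. There are $m-1$ groups $G_k$ of $n_2=\sqrt n$ agents, each needing $1/2$ of resource~1 per unit of utility. Exactly one candidate $i$ per group is modified, by scaling all its non-dominant coordinates by $\gamma_i=n_2/(in)$.

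Fix $C>(m-1)^2$ and $U=2C/\ln n_2$. Suppose the candidate gets utility above $U$ in the modified instance. Then in the unmodified instance the same agent could report $\mathbf d'_{k,i}\ge\gamma_i\mathbf d_{k,i}$ and obtain true utility above $\gamma_iU$. SP then forces it to hold more than $\gamma_iU/2=\frac{C}{\ln n_2}\cdot\frac{n_2}{in}$ of resource~1 when truthful. The budget $(m-1)n_2/n$ left after $G_1$'s SI shares therefore gives $\sum_{i\text{ bad}}1/i\le(m-1)\ln n_2/C$. So the $1/i$ lives in the misreport scaling, not in auxiliary demand weights.

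Your stated conclusion, that only $o(n)$ agents exceed a constant multiple of their SI share, cannot bound welfare: $m-1$ agents with utility near $1$ already give welfare near $m$. What you need is a single tuple $(i_2,\dots,i_m)$ in which all modified candidates are simultaneously good. The bad set in group $k$ depends on the choices in the other groups, so this requires the probabilistic step. Draw each $i_k$ independently with $\Pr[i_k=i]\propto 1/i$. Since $H_{n_2}>\ln n_2$, each draw is bad with conditional probability below $(m-1)/C$. A union bound over the $m-1$ groups then leaves positive probability that no candidate is bad. This yields welfare at most $1+2(m-1)n_2/n+(m-1)U=1+o(1)$, where each special agent is held to $O(1/\ln n_2)$, not the $1/n$ you aim for.

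\textbf{The benchmark side.} Your remark that ``their demands are concentrated on different resources'' only handles envy between special agents of different groups. The real difficulty is envy toward the one candidate of $G_k$ that receives almost all of resource $k$, together with possibly far more than $1/n$ of resource~1. This envy comes from the other agents of $G_k$, who share its dominant resource, and from agents in $G_1$.

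The paper handles this in two ways. First, the prefix agents $\ell<i_k$ get utility $x_k=\max\{1/n,\gamma_{i_k}s\}$. They then hold the same amount $x_k/2$ of resource~1 as the candidate and cannot envy it. The choice $\gamma_i\propto 1/i$ is exactly what keeps these $i_k$ bundles within the group's resource-1 budget $n_2/n$. Second, three kinds of envy are blocked through a resource $r\notin\{1,k\}$ on which the $G_k$ coefficients $a_i=\epsilon_2\beta^i$ are tiny and grow geometrically in $i$: suffix agents toward the candidate, the candidate toward prefix agents, and $G_1$ agents toward the candidate. Such a resource exists only when $m\ge 3$. Your proposal never uses $m\ge3$, which shows that the essential construction is missing.
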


Theorem~\ref{thm:PoSP} gives a sharp contrast between two resources and three or more resources. 
For \(m=2\), the mechanisms developed above achieve fair-ratio close to \(1\). For \(m\ge 3\), no deterministic mechanism satisfying SI, EF, PO, and SP can improve on the trivial \(m\) guarantee.

\subsection{Randomized Mechanisms}
\label{sec:randomized}

So far we have considered deterministic mechanisms. A natural next question is whether randomization can circumvent the above impossibility.
Randomization is known to improve approximation guarantees in related
problems, including facility location
\citep{ProcacciaT13,LuSunWangZhu10,Barak26}.
We show that this is not the case here. 

For a randomized mechanism \(M\), we define its fair-ratio using expected welfare:
\[
\AR(M)
  =
    \max_{\mathbf I\in\mathcal I}
    \frac{
        \max_{\mathbf A:\ \text{SI and EF}} \SW(\mathbf A)
    }{
        \mathbb E[\SW(M(\mathbf I))]
    },
\]
where the expectation is over the internal randomness of \(M\). We say that a randomized
mechanism $M$ satisfies SI, EF, or PO \emph{ex post} if the realized allocation satisfies
the corresponding property.

We consider two standard incentive notions for randomized mechanisms. A mechanism is
\emph{universally strategy-proof} if, for every realization \(\omega\) of its internal randomness, the induced deterministic mechanism \(M^\omega\) is strategy-proof:
\[
    u_i\bigl(M^\omega_i(\mathbf d_i,\mathbf d_{-i});\mathbf d_i\bigr)
    \ge
    u_i\bigl(M^\omega_i(\mathbf d'_i,\mathbf d_{-i});\mathbf d_i\bigr),
    \quad
    \forall i,\mathbf d_i,\mathbf d'_i,\mathbf d_{-i},\omega .
\]
A weaker requirement is \emph{truthful in expectation}, under which truthful reporting maximizes
expected utility:
\[
    \mathbb E\bigl[
        u_i\bigl(M_i(\mathbf d_i,\mathbf d_{-i});\mathbf d_i\bigr)
    \bigr]
    \ge
    \mathbb E\bigl[
        u_i\bigl(M_i(\mathbf d'_i,\mathbf d_{-i});\mathbf d_i\bigr)
    \bigr],
    \quad
    \forall i,\mathbf d_i,\mathbf d'_i,\mathbf d_{-i}.
\]
Universal strategy-proofness implies truthfulness in expectation.

Our deterministic lower-bound construction can be strengthened to this randomized setting: the same $m$-barrier holds even under the weaker requirement of truthfulness in expectation.
The proof is an expected-value analogue of Theorem~\ref{thm:PoSP}.
The only changes are that SP is replaced by truthfulness in expectation and the resource-budget argument is taken in expectation.
The proof is deferred to Appendix~\ref{app:randomized-lower-bound}.

\begin{theorem}
\label{thm:randomized-PoSP}
For every \(m\ge 3\), every randomized mechanism satisfying ex-post SI and truthfulness in expectation, has fair-ratio exactly \(m\). Consequently,
the same bound holds for universally strategy-proof randomized
mechanisms.
\end{theorem}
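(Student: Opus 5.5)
The upper bound $\AR(M)\le m$ will follow from ex-post SI alone. Every realized allocation gives each agent utility at least $1/n$, so $\mathbb E[\SW(M(\mathbf I))]\ge 1$. Conversely, any feasible allocation has welfare at most $m$: agent $i$'s utility is at most its dominant-resource share $A_{i r_i^*}$, and summing over agents grouped by dominant resource gives at most $m$. So the entire content is the matching lower bound. I plan to reuse the instance family from the deterministic construction behind Theorem~\ref{thm:PoSP}, replacing each deterministic inequality with its expectation.

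The family is built with the prefix-compensation and auxiliary-coordinate ideas. There is a base instance in which $n$ agents are symmetric up to a harmonic weighting, and a sequence of single-agent misreports that move along a chain of instances. Each agent $i$ has a truthful demand under which it could use a large amount of some resource efficiently if it were given its fair share, and in the fair optimum it receives roughly a $1/(\text{its rank})$-type share; these are the harmonic weights. The argument then runs in three steps.

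\emph{Step 1 (lower bound on expected utilities).} In the instance where agent $i$'s report is extreme, ex-post SI forces every realization to give agent $i$ at least $1/n$ of its reported dominant resource, and hence, by its true Leontief preference, a definite utility. Truthfulness in expectation transfers this to the adjacent instance: $\mathbb E[u_i]$ under truth is at least $\mathbb E[u_i]$ under the deviation. This is the only point where SP was used deterministically, and the inequality is already linear in utilities, so it passes to expectations unchanged.

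\emph{Step 2 (expected resource budget).} For each realization, feasibility bounds the total amount of each resource. Because utilities are Leontief, an agent's expected utility lower bound translates into a lower bound on the expected amount of the scarce (auxiliary) resource it consumes. This holds because $A_{ir}\ge d_{ir}u_i$ pointwise, and we then take expectations. Summing with the harmonic weights and using linearity of expectation should give the same contradiction as in the deterministic proof, unless the expected welfare on the target instance is about $1+o(1)$ while the SI--EF optimum is about $m-o(1)$.

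\emph{Step 3 (conclusion).} Letting the parameters tend to their limits gives ratio $m-o(1)$, which combined with the upper bound yields exactly $m$. Universal SP implies truthfulness in expectation, so the corollary for universally strategy-proof mechanisms is immediate.

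The main obstacle is Step~2. The deterministic harmonic counting argument may use a nonlinear step, for example a pigeonhole choice of which agent is ``large'' or a min over resources. Such a step does not commute with expectation. The fix I will try first is to avoid the per-realization choice entirely: rewrite the counting as a weighted sum of linear inequalities $\sum_i w_i A_{ir}\le 1$ with fixed harmonic weights $w_i$, which are valid in every realization. I would then take expectations of this single linear inequality. The only nonlinearity left is $u_i=\min_r A_{ir}/d_{ir}$, and there I only need the one-sided bound $A_{ir}\ge d_{ir}u_i$, which is preserved under expectation.
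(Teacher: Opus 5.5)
\textbf{Verdict: the proposal has a genuine gap.} Your outer frame matches the paper. The upper bound from ex-post SI is fine, and the paper's lower bound is indeed the expected-value analogue of the deterministic proof of Theorem~\ref{thm:PoSP}: SP is replaced by truthfulness in expectation, and the resource budget is taken in expectation.

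The engine you propose in Step~1, however, cannot work. Ex-post SI under a report $\mathbf d'_i$ only guarantees at least the bundle $\mathbf d'_i/n$. Let $r^*$ be the true dominant resource, so $d_{ir^*}=1$. Then the value of that bundle under the true demand is at most $d'_{ir^*}/n\le 1/n$. So ``SI on the extreme instance, transferred by truthfulness'' never yields more than the $1/n$ the agent already gets by reporting truthfully. In the paper's family, the changed report $\mathbf d'_{k,i}$ yields only $\gamma_i/n$. Summing such bounds in Step~2 produces no contradiction with any resource budget.

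The missing idea is a case split on the mechanism's own behavior, as follows.
\begin{itemize}
\item For fixed $k$ and $\mathbf t_{-k}$, call $i$ bad if the changed candidate's expected utility in $\mathbf I(\mathbf t_{-k},i)$ exceeds $U=2C/\ln n_2$.
\item Since $\mathbf d'_{k,i}\ge\gamma_i\mathbf d_{k,i}$ coordinate-wise, this misreport gives true expected utility above $\gamma_iU=2T_i$.
\item Truthfulness in expectation then forces expected resource-$1$ consumption above $T_i\propto 1/i$ in $\mathbf I(\mathbf t_{-k})$.
\item Ex-post SI for $G_1$ caps the expected amount of resource~$1$ outside $G_1$ at $(m-1)n_2/n$. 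Hence the bad set has harmonic mass at most $(m-1)\ln n_2/C$.
\item The harmonic weights are a probability distribution over which candidate is changed in each group. They are not fair-optimum shares, nor weights in a resource constraint as in your proposed fix $\sum_i w_iA_{ir}\le 1$.
\item A union bound over $k$ gives a tuple with no bad changed candidate. On that tuple $\mathbb E[\SW]\le 1+2(m-1)n_2/n+(m-1)U=1+o(1)$, while the fair benchmark gives each changed candidate utility $1-o(1)$.
\end{itemize}

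This also dissolves the nonlinearity you worried about. Because badness is defined through expected utilities, the bad sets are deterministic, and the probabilistic choice is over instances, independent of the mechanism's coins. Note also that the scarce resource is resource~$1$, which every non-$G_1$ agent demands with coefficient $1/2$. The auxiliary coordinates serve only to make the benchmark allocation envy-free.
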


\section{Conclusion}
\label{sec:con}

We developed the \ASF framework for two-resource allocation with Leontief utilities, which unifies the previous mechanisms within a common adaptive-speed formulation. Building on this framework, we designed fair, strategy-proof, and Pareto-optimal mechanisms with substantially improved \fratio guarantees. Our best mechanism achieves \fratio \(1.09384+O(1/n)\) and is close to optimal within the \ASF framework. In sharp contrast, for every \(m\ge 3\), we show that every mechanism satisfying SI and SP has \fratio exactly \(m\), and this impossibility persists even for randomized mechanisms satisfying ex-post SI and truthfulness in expectation. An important direction for future work is to extend the ideas developed here to richer allocation environments, including indivisible tasks, multiple machines, and dynamic settings.
More broadly, we believe that the principle underlying the \fratio benchmark, evaluating efficiency relative to the best allocation satisfying the required fairness constraints, may be useful in other mechanism-design problems where fairness and incentive compatibility must be achieved simultaneously.

\section*{AI Assistance Disclosure}
The \ASF framework, the sufficient condition for strategy-proofness, and \AOne were developed by the authors. 
Beyond these components, OpenAI's
ChatGPT was used extensively as a research assistant during the development of this work.
In particular, the fair-ratio analysis of \ATwo, the lower bound for the \ASF framework, and the \(m\)-resource impossibility result were developed through iterative interactions with ChatGPT.
All AI-assisted arguments and results were critically examined, independently verified, and rewritten by the authors for correctness and clarity.
The authors take full responsibility for all statements, proofs, and conclusions in the paper.

\subsubsection*{\ackname}
This research was supported by the National Natural Science Foundation of China
(No.~12301412) and the CCF--Huawei Populus Grove Fund
(No.~CCF-HuaweiLK2024007).

\bibliographystyle{splncs04}
\bibliography{DRF}

@article{Bonald2014,
  title={Enhanced cluster computing performance through proportional fairness},
  author={Bonald, Thomas and Roberts, James},
  journal={Performance Evaluation},
  volume={79},
  pages={134--145},
  year={2014},
  publisher={Elsevier}
}

@inproceedings{Bonald2015,
  title={Multi-resource fairness: Objectives, algorithms and performance},
  author={Bonald, Thomas and Roberts, James},
  booktitle={Proceedings of the 2015 ACM SIGMETRICS International Conference on Measurement and Modeling of Computer Systems},
  pages={31--42},
  year={2015}
}

@article{reiss2011google,
  title={Google cluster-usage traces: format + schema},
  author={Reiss, Charles and Wilkes, John and Hellerstein, Joseph L},
  journal={Google Inc., White Paper},
  pages={1--14},
  year={2011}
}

@inproceedings{Friedman2014,
  title={Strategyproof allocation of discrete jobs on multiple machines},
  author={Friedman, Eric and Ghodsi, Ali and Psomas, Christos-Alexandros},
  booktitle={Proceedings of the 15th ACM conference on Economics and Computation (EC)},
  pages={529--546},
  year={2014}
}

@misc{Friedman2011,
  title={Strategyproofness, Leontief economies and the Kalai-Smorodinsky solution},
  author={Friedman, Eric J and Ghodsi, Ali and Shenker, Scott and Stoica, Ion},
  year={2011},
  booktitle={Manuscript},
  publisher={Citeseer}
}

@inproceedings{Ghodsi2011,
  title={Dominant Resource Fairness: Fair Allocation of Multiple Resource Types.},
  author={Ghodsi, Ali and Zaharia, Matei and Hindman, Benjamin and Konwinski, Andy and Shenker, Scott and Stoica, Ion},
  booktitle= {Proceedings of the 8th USENIX conference on Networked Systems Design and Implementation (NSDI)},
  pages = {323--336},
  year = {2011}
}

@article{Grandl2014,
  title={Multi-resource packing for cluster schedulers},
  author={Grandl, Robert and Ananthanarayanan, Ganesh and Kandula, Srikanth and Rao, Sriram and Akella, Aditya},
  journal={ACM SIGCOMM Computer Communication Review},
  volume={44},
  number={4},
  pages={455--466},
  year={2014},
  publisher={ACM New York, NY, USA}
}

@article{Jiang2021,
  title={Multi-resource allocation in cloud data centers: A trade-off on fairness and efficiency},
  author={Jiang, Suhan and Wu, Jie},
  journal={Concurrency and Computation: Practice and Experience},
  volume={33},
  number={6},
  pages={e6061},
  year={2021},
  publisher={Wiley Online Library}
}

@inproceedings{Jin2016,
  title={Efficiency comparison between proportional fairness and dominant resource fairness with two different type resources},
  author={Jin, Youngmi and Hayashi, Michiaki},
  booktitle={2016 Annual Conference on Information Science and Systems (CISS)},
  pages={643--648},
  year={2016}
}

@article{Joe-Wong2013,
  title={Multiresource allocation: Fairness--efficiency tradeoffs in a unifying framework},
  author={Joe-Wong, Carlee and Sen, Soumya and Lan, Tian and Chiang, Mung},
  journal={IEEE/ACM Transactions on Networking},
  volume={21},
  number={6},
  pages={1785--1798},
  year={2013},
  publisher={IEEE}
}

@article{Kash2014,
  title={No agent left behind: Dynamic fair division of multiple resources},
  author={Kash, Ian and Procaccia, Ariel D and Shah, Nisarg},
  journal={Journal of Artificial Intelligence Research},
  volume={51},
  pages={579--603},
  year={2014}
}

@article{Li2013,
  title={Egalitarian division under Leontief preferences},
  author={Li, Jin and Xue, Jingyi},
  journal={Economic Theory},
  volume={54},
  number={3},
  pages={597--622},
  year={2013},
  publisher={Springer}
}

@inproceedings{Li2017,
  title={Multi-resource fair allocation with bounded number of tasks in cloud computing systems},
  author={Li, Weidong and Liu, Xi and Zhang, Xiaolu and Zhang, Xuejie},
  booktitle={National Conference of Theoretical Computer Science (NCTCS)},
  pages={3--17},
  year={2017}
}

@inproceedings{Narayana2021,
  title={Fair and Efficient Allocations with Limited Demands},
  author={Narayana, Sushirdeep and Kash, Ian A},
  booktitle={Proceedings of the 35th AAAI Conference on Artificial Intelligence (AAAI)},
  volume={35},
  number={6},
  pages={5620--5627},
  year={2021}
}

@article{Nicolo2004,
  title={Efficiency and truthfulness with Leontief preferences. A note on two-agent, two-good economies},
  author={Nicol{\'o}, Antonio},
  journal={Review of Economic Design},
  volume={8},
  number={4},
  pages={373--382},
  year={2004},
  publisher={Springer}
}

@article{Parkes2015,
  title={Beyond dominant resource fairness: Extensions, limitations, and indivisibilities},
  author={Parkes, David C and Procaccia, Ariel D and Shah, Nisarg},
  journal={ACM Transactions on Economics and Computation (TEAC)},
  volume={3},
  number={1},
  pages={1--22},
  year={2015},
  publisher={ACM New York, NY, USA}
}

@inproceedings{Tahir2015,
  title={UDRF: Multi-resource fairness for complex jobs with placement constraints},
  author={Tahir, Yad and Yang, Shusen and Koliousis, Alexandros and McCann, Julie},
  booktitle={2015 IEEE Global Communications Conference (GLOBECOM)},
  pages={1--7},
  year={2015}
}

@inproceedings{Tang2016,
  title={Elastic multi-resource fairness: balancing fairness and efficiency in coupled CPU-GPU architectures},
  author={Tang, Shanjiang and He, BingSheng and Zhang, Shuhao and Niu, Zhaojie},
  booktitle={SC'16: Proceedings of the International Conference for High Performance Computing, Networking, Storage and Analysis},
  pages={875--886},
  year={2016}
}

@article{Tang2020,
  title={Fairness-efficiency scheduling for cloud computing with soft fairness guarantees},
  author={Tang, Shanjiang and Yu, Ce and Li, Yusen},
  journal={IEEE Transactions on Cloud Computing},
  pages={1-1},
  year={2020}
}

@inproceedings{Wang2014,
  title={Dominant resource fairness in cloud computing systems with heterogeneous servers},
  author={Wang, Wei and Li, Baochun and Liang, Ben},
  booktitle={IEEE INFOCOM 2014-IEEE Conference on Computer Communications},
  pages={583--591},
  year={2014}
}

@inproceedings{Wang2016,
  title={Multi-resource fair sharing for datacenter jobs with placement constraints},
  author={Wang, Wei and Li, Baochun and Liang, Ben and Li, Jun},
  booktitle={SC'16: Proceedings of the International Conference for High Performance Computing, Networking, Storage and Analysis},
  pages={1003--1014},
  year={2016}
}

@inproceedings{Dolev2012,
title = {No justified complaints: on fair sharing of multiple resources},
author = {Dolev, Danny and Feitelson, Dror G. and Halpern, Joseph Y. and Kupferman, Raz and Linial, Nathan},
booktitle={Innovations in Theoretical Computer Science 2012},
pages = {68--75},
year = {2012}
}

@inproceedings{Gutman2012,
title = {{Fair allocation without trade}},
author = {Gutman, Avital and Nisan, Noam},
booktitle = {Proceedings of the 11th International Conference on Autonomous Agents and Multiagent Systems (AAMAS)},
pages = {816--823},
year = {2012}
}

@inproceedings{ColeGG13,
  author       = {Richard Cole and
                  Vasilis Gkatzelis and
                  Gagan Goel},
  editor       = {Michael J. Kearns and
                  R. Preston McAfee and
                  {\'{E}}va Tardos},
  title        = {Mechanism design for fair division: allocating divisible items without
                  payments},
  booktitle    = {Proceedings of the fourteenth {ACM} Conference on Electronic Commerce,
                  {EC} 2013},
  pages        = {251--268},
  year         = {2013}
}

@inproceedings{fikioris2024incentives,
  title={Incentives in dominant resource fair allocation under dynamic demands},
  author={Fikioris, Giannis and Agarwal, Rachit and Tardos, {\'{E}}va},
  booktitle={International Symposium on Algorithmic Game Theory (SAGT)},
  pages={108--125},
  year={2024}
}

@inproceedings{BeiLL22,
  author       = {Xiaohui Bei and
                  Zihao Li and
                  Junjie Luo},
  title        = {Fair and Efficient Multi-resource Allocation for Cloud Computing},
  booktitle    = {International Conference on Web and Internet Economics {(WINE)}},
  pages        = {169--186},
  year         = {2022}
}

@inproceedings{GuoC10,
  author       = {Mingyu Guo and
                  Vincent Conitzer},
  editor       = {Wiebe van der Hoek and
                  Gal A. Kaminka and
                  Yves Lesp{\'{e}}rance and
                  Michael Luck and
                  Sandip Sen},
  title        = {Strategy-proof allocation of multiple items between two agents without
                  payments or priors},
  booktitle    = {9th International Conference on Autonomous Agents and Multiagent Systems
                  {(AAMAS} 2010), Toronto, Canada, May 10-14, 2010, Volume 1-3},
  pages        = {881--888},
  publisher    = {{IFAAMAS}},
  year         = {2010}
}

@inproceedings{Cheung16,
  author       = {Yun Kuen Cheung},
  editor       = {Subbarao Kambhampati},
  title        = {Better Strategyproof Mechanisms without Payments or Prior - An Analytic
                  Approach},
  booktitle    = {Proceedings of the Twenty-Fifth International Joint Conference on
                  Artificial Intelligence, {IJCAI} 2016, New York, NY, USA, 9-15 July
                  2016},
  pages        = {194--200},
  publisher    = {{IJCAI/AAAI} Press},
  year         = {2016}
}

@inproceedings{HanSTZ11,
  author       = {Li Han and
                  Chunzhi Su and
                  Linpeng Tang and
                  Hongyang Zhang},
  editor       = {Ning Chen and
                  Edith Elkind and
                  Elias Koutsoupias},
  title        = {On Strategy-Proof Allocation without Payments or Priors},
  booktitle    = {Internet and Network Economics - 7th International Workshop, {WINE}
                  2011, Singapore, December 11-14, 2011. Proceedings},
  series       = {Lecture Notes in Computer Science},
  volume       = {7090},
  pages        = {182--193},
  publisher    = {Springer},
  year         = {2011}
}

@article{ProcacciaT13,
  author       = {Ariel D. Procaccia and
                  Moshe Tennenholtz},
  title        = {Approximate Mechanism Design without Money},
  journal      = {{ACM} Trans. Economics and Comput.},
  volume       = {1},
  number       = {4},
  pages        = {18:1--18:26},
  year         = {2013}
}

@article{Friedman2017a,
author = {Friedman, Eric J. and Psomas, Christos Alexandros and Vardi, Shai},
isbn = {9781450345279},
journal = {EC 2017 - Proceedings of the 2017 ACM Conference on Economics and Computation},
pages = {461--478},
title = {{Controlled dynamic fair division}},
year = {2017}
}

@inproceedings{Amanatidis2017Truthful,
  author    = {Amanatidis, Georgios and Birmpas, Georgios and Christodoulou, George and Markakis, Evangelos},
  title     = {Truthful Allocation Mechanisms without Payments: Characterization and Implications on Fairness},
  booktitle = {Proceedings of the 2017 ACM Conference on Economics and Computation},
  series    = {EC '17},
  year      = {2017},
  pages     = {545--562},
  publisher = {ACM},
  address   = {Cambridge, MA, USA}
}

@inproceedings{BabaioffMorag2025Truthful,
  author    = {Babaioff, Moshe and Manaker Morag, Noam},
  title     = {On Truthful Mechanisms without Pareto-efficiency: Characterizations and Fairness},
  booktitle = {Proceedings of the 26th ACM Conference on Economics and Computation},
  series    = {EC '25},
  year      = {2025},
  pages     = {447--},
  publisher = {ACM}
}

@article{BeiLL26,
  author  = {Xiaohui Bei and Zihao Li and Junjie Luo},
  title   = {Fair and Efficient Multi-resource Allocation for Cloud Computing: Beyond Dominant Resource Fairness},
  journal = {Mathematics of Operations Research},
  year    = {2026},
  note    = {Published online February 11, 2026}
}

@inproceedings{LuSunWangZhu10,
  author    = {Pinyan Lu and Xiaorui Sun and Yajun Wang and Zeyuan Allen Zhu},
  title     = {Asymptotically Optimal Strategy-Proof Mechanisms for Two-Facility Games},
  booktitle = {Proceedings of the 11th ACM Conference on Electronic Commerce (EC)},
  pages     = {315--324},
  year      = {2010}
}

@inproceedings{Barak26,
  author    = {Zohar Barak},
  title     = {Facility Location Mechanism Design: Breaking the Deterministic Barrier},
  booktitle = {Proceedings of the 27th ACM Conference on Economics and Computation (EC)},
  year      = {2026},
  note      = {To appear}
}

@techreport{fithiansolving,
  title={Solving Open Problems in Operations Research Using AI},
  author={Fithian, Eric and Niazadeh, Rad and Nuti, Pranav},
  year={2026},
  type= {Working Draft}
}

\appendix

\section{Missing Proofs for Section 3}
\label{app:missing-proofs-section3}





\subsection*{Proof of Lemma~\ref{lem:the-optimal}}
\phantomsection\label{app:lem-the-optimal}

\begin{proof}
Let \(\mathbf A^*\) be an optimal SI--EF allocation. We may assume without loss of
generality that \(\mathbf A^*\) is non-wasteful: replacing each bundle
\(\mathbf A_i^*\) by \(u_i(\mathbf A_i^*)\mathbf d_i\) preserves every agent's
utility, preserves SI and EF, and weakly reduces resource consumption.

Thus we can write
\[
    \mathbf A_i^*=\left(\frac1n+z_i^*\right)\mathbf d_i
    \qquad\text{with } z_i^*\ge0 .
\]
Let
\[
    \Delta S_k^*=\sum_{i\in G_k} z_i^*
    \qquad (k=1,2)
\]
be the total residual dominant-share increase assigned to group \(G_k\).
If both quantities are positive, set \(\theta^*=\Delta S_1^*/\Delta S_2^*\);
if one of them is zero, use the corresponding boundary speed ratio.

Run the two-step water-filling rule with speed ratio \(\theta^*\). Because the
aggregate growth ratio is fixed at \(\theta^*\), the two groups reach total
increments \((\Delta S_1^*,\Delta S_2^*)\) simultaneously. It remains to show
that the process is feasible up to that point.

Fix a group \(G_k\). Among all within-group allocations that satisfy EF and give
group \(G_k\) total residual dominant-share increase \(\Delta S_k^*\), the
water-filling allocation uses the least possible amount of the non-dominant
resource. This follows from the standard exchange argument: if an allocation
gives additional dominant share to an agent with a larger current
non-dominant allocation while another agent in the same group has a smaller
one, shifting an infinitesimal amount of growth to the latter preserves the
group total, maintains the EF order, and weakly reduces non-dominant-resource
consumption. Iterating this exchange yields exactly the water-filling order.

Therefore, for each group, the water-filling allocation with total increase
\(\Delta S_k^*\) uses no more non-dominant resource than the residual part of
\(\mathbf A^*\). It uses exactly \(\Delta S_k^*\) units of the dominant resource.
Since \(\mathbf A^*\) is feasible, the two-step water-filling process is feasible
until the group increments reach \((\Delta S_1^*,\Delta S_2^*)\).

At that point, the constructed allocation has social welfare
\(1+\Delta S_1^*+\Delta S_2^*=\SW(\mathbf A^*)\) and satisfies SI and EF by the
same water-filling argument as for \(\ASF\). Hence it is also an optimal SI--EF
allocation. If the two-step rule could continue further while preserving SI and
EF, it would produce strictly larger welfare than \(\mathbf A^*\), contradicting
optimality. Thus the corresponding two-step rule returns an optimal allocation.
\end{proof}

\subsection*{Proof of Lemma~\ref{lem:S2/S2*}}
\phantomsection\label{app:lem-s2-s2-star}

\begin{proof}
If the SI allocation exhausts a resource, both allocations give every agent utility \(1/n\); choosing \(\theta^*=\theta\) makes the claims immediate. Otherwise, both Step~2 increments of \(\ASF\) are positive.
Consider the case in which resource~1 is exhausted under the \(\ASF\) allocation.
By Lemma~\ref{lem:the-optimal}, the two allocations follow the same within-group water-filling rules.
If \(\Delta S_1^*>\Delta S_1\), the optimal allocation must take at least \(\Delta S_1^*-\Delta S_1\) units of resource~1 away from \(G_2\). Moving back along its water-filling trajectory weakly decreases every agent's allocation. Since \(d_{j,1}<1\) for every \(j\in G_2\), the resulting utility loss in \(G_2\) exceeds the gain in \(G_1\), contradicting optimality.
Thus \(\Delta S_1^*\le\Delta S_1\). Optimality then gives \(\Delta S_2^*\ge\Delta S_2\), and hence \(\theta^*\le\theta\).

During Step~2, the active subset of group~2 expands in water-filling order, so agents with weakly larger non-dominant demand \(d_{j,1}\) enter later.
Consequently, the cumulative ratio of resource~1 to resource~2 used by group~2 is nondecreasing.
Put \(\beta=(R_1-\Delta S_1)/\Delta S_2\), the ratio under \(\ASF\).
Since \(\Delta S_2^*\ge\Delta S_2\), group~2 uses at least \(\beta\Delta S_2^*\) units of resource~1 under the optimal allocation. Feasibility therefore gives
\[
\beta\Delta S_2^*\le R_1-\Delta S_1^*\le R_1\le R_1^*.
\]
Since \(\Delta S_2^*\le R_2\le R_2^*\) and \(\theta R_2^*=qR_1^*\), it follows that
\[
(\theta+\beta)\Delta S_2^*\le(1+q)R_1^*.
\]
Every unit of dominant-share growth in \(G_2\) uses at least \(\min_{j\in G_2}d_{j,1}\) of resource~1. Thus \(R_1^*-R_1\le\beta/n\). Resource~1 balance now gives
\[
\Delta S_2
=\frac{R_1}{\theta+\beta}
\ge\frac{R_1^*}{\theta+\beta}-\frac1n
\ge\frac{\Delta S_2^*}{1+q}-\frac1n.
\]
The resource~2 case is symmetric. For the utility comparison, if Step~2 is nontrivial, the smallest non-dominant demand in \(G_1\) is strictly below one. An agent with this demand is active whenever the group's water-filling allocation grows, so taking resource~2 away from \(G_1\) again causes a strictly larger utility loss.
\end{proof}

\subsection*{Proof of Lemma~\ref{lem:general-ASF-upper-bound}}
\phantomsection\label{app:lem:general-ASF-upper-bound}

\begin{proof}
If the mechanism is fair-optimal, its welfare ratio is one and the bound is immediate. Otherwise, suppose first that resource~1 is exhausted.
By Lemma~\ref{lem:S2/S2*}, we have $\Delta S_1 \ge \Delta S_1^*$ and $\Delta S_2 \ge \frac{\Delta S_2^*}{1+q}-\frac1n$.
Together with \(\Delta S_2^*\le R_2\le 1-\alpha\), this gives
\begin{equation*}
    \begin{aligned}
\frac{1+\Delta S^*_{1}+\Delta S^*_{2}}{1+\Delta S_{1}+\Delta S_{2}}
\le \frac{1+\Delta S^*_{2}}{1+\Delta S_{2}}
&\le \frac{1+\Delta S^*_{2}}{1+\frac{1}{1+q}\Delta S^*_{2}}+\frac2n
\le\frac{2-\alpha}{1+ \frac{1}{1+q}(1-\alpha)}+\frac2n.
    \end{aligned}
\end{equation*}
If instead resource~2 is exhausted, an analogous argument yields
\[
\frac{1 + \Delta S_1^* + \Delta S_2^*}
     {1 + \Delta S_1 + \Delta S_2}
\le
\frac{1+\Delta S_1^*}
     {1+\frac{q}{1+q}\Delta S_1^*}+\frac2n
\le \frac{1+\alpha}{1+ \frac{q}{1+q}\alpha}+\frac2n.
\]
In both cases, the error is at most \(2/n\) because the numerator is at most two and both denominators are at least one.
\end{proof}

\section{Missing Proofs for Section 4}

\subsection*{Proof of Theorem~\ref{thm:ASF1-ratio}}
\label{app:thm:ASF1-ratio}

\begin{proof}
SI, EF, and PO follow from Lemma~\ref{lem:ASF-3-properties}.
For SP, the \FTwos speed rule \(\Theta(\alpha,R_1^*,R_2^*)=r\) satisfies the two conditions in Lemma~\ref{lem:sufficient-SP}. Since \(f_1(\alpha)>0\) for \(0<\alpha\le1/2\), Corollary~\ref{cor:population-factor-sp} implies SP for \AOne.

We next prove the upper bound. Fix an instance $\mathbf I$ and let $\mathbf A^*$ be a fair-optimal allocation for this instance. For \AOne, the normalized speed in Lemma~\ref{lem:general-ASF-upper-bound} is
\[
    q=f_1(\alpha)=\frac{2\alpha-\alpha^2}{1-\alpha^2}.
\]
Substituting this value into the two cases of Lemma~\ref{lem:general-ASF-upper-bound} gives the same expression.
Indeed,
\[
    \frac{2-\alpha}
         {1+\frac{1-\alpha}{1+f_1(\alpha)}}
    =
    \frac{1+2\alpha-2\alpha^2}{1+\alpha-\alpha^2},
    \qquad
    \frac{1+\alpha}
         {1+\frac{f_1(\alpha)}{1+f_1(\alpha)}\alpha}
    =
    \frac{1+2\alpha-2\alpha^2}{1+\alpha-\alpha^2}.
\]
Therefore
\[
    \frac{\SW(\mathbf A^*)}{\SW(\AOne(\mathbf I))}
    \le
    \frac{1+2\alpha-2\alpha^2}{1+\alpha-\alpha^2}
    +O\!\left(\frac1n\right).
\]
Finally, the function
\[
    \phi(\alpha)
    =
    \frac{1+2\alpha-2\alpha^2}{1+\alpha-\alpha^2}
\]
is increasing on \(0<\alpha\le1/2\), since
\[
    \phi'(\alpha)
    =
    \frac{1-2\alpha}{(1+\alpha-\alpha^2)^2}
    \ge 0.
\]
Thus \(\phi(\alpha)\le \phi(1/2)=6/5\), proving the claimed upper bound.

To show tightness, fix a rational \(0<\alpha\le1/2\) and choose population sizes with \(n\alpha\) integral.
Group \(G_1\) contains \(n(1-\alpha)\) agents with demand \((1,\varepsilon)\), where \(\varepsilon=1/n^2\).
Group \(G_2\) contains one special agent \(i^*\) with demand
\((1/[n(1-\alpha)],1)\); all other agents in \(G_2\) have demand \((1-\varepsilon,1)\).
Under \(\AOne\), the special agent can receive at most \((1-\alpha)/(1+f_1(\alpha))+o(1)\) of resource~2, whereas a fair allocation can give it almost \(1-\alpha\).

For a feasible SI--EF comparison, give every agent except \(i^*\) its baseline bundle and give \(i^*\) utility \((1-\varepsilon)(1-\alpha)\) according to its demand.
Before allocating to \(i^*\), the remaining resources are
\[
R_1^0=\alpha-\frac{(n\alpha-1)(1-\varepsilon)}n\ge\frac1n,
\qquad
R_2^0=(1-\alpha)(1-\varepsilon)+\frac1n.
\]
The special agent consumes \((1-\varepsilon)/n\) of resource~1 and \((1-\varepsilon)(1-\alpha)\) of resource~2, so the allocation is feasible and satisfies SI for sufficiently large \(n\).
The ordinary \(G_2\) agents and \(i^*\) have the same resource~1 allocation, preventing envy toward \(i^*\) within \(G_2\); \(i^*\) does not envy the baseline agents.
Cross-group EF follows because non-dominant allocations are at most \(1/n\).
Its welfare is at least
\[
1-\frac1n+(1-\varepsilon)(1-\alpha)\ge2-\alpha-\frac2n.
\]

For the actual mechanism,
\[
\begin{aligned}
R_1&=\alpha-\frac{1/[n(1-\alpha)]+(n\alpha-1)(1-\varepsilon)}n
=\frac1n+O(n^{-2}),\\
R_2&=(1-\alpha)(1-\varepsilon),\\
R_1^*&=R_1+\frac1{n^2(1-\alpha)},
\qquad R_2^*=R_2+\frac1{n^3}.
\end{aligned}
\]
Thus \(\theta=f_1(\alpha)R_1^*/R_2^*
=f_1(\alpha)/[n(1-\alpha)]+O(n^{-2})\).
The special agent has the smallest non-dominant coefficient in \(G_2\) for sufficiently large \(n\). Resource~1 feasibility and the fixed speed therefore give
\[
\left(\theta+\frac1{n(1-\alpha)}\right)\Delta S_2\le R_1.
\]
Consequently,
\[
\SW(\AOne(\mathbf I))
=1+(1+\theta)\Delta S_2
\le1+\frac{(1+\theta)R_1}{\theta+1/[n(1-\alpha)]}
=1+\frac{1-\alpha}{1+f_1(\alpha)}+O(1/n).
\]
Combining the two welfare estimates yields the limiting lower bound
\[
\frac{2-\alpha}{1+(1-\alpha)/(1+f_1(\alpha))}
=\frac{1+2\alpha-2\alpha^2}{1+\alpha-\alpha^2}.
\]
Taking \(\alpha=\lfloor n/2\rfloor/n\) gives \(6/5-O(1/n)\); the estimates above are uniform for \(\alpha\) near \(1/2\).
Together with the upper bound, this proves \(\AR(\AOne)=6/5+O(1/n)\).
\end{proof}

\subsection*{Proof of Lemma~\ref{lem:asf2-sp-condition}}
\label{app:proof-asf2-sp-condition}

\begin{proof}
The speed is positive whenever \(R_1^*,R_2^*>0\). The state satisfies \(1/n\le R_1^*\le\alpha\) and \(1/n\le R_2^*\le1-\alpha\).
Since \(c_1,c_2\ge1\) and \(2-1/c\le c\) for \(c\ge1\),
\[
1\le 2-\frac1{c_k}\le\min\{c_k,2\}\le\bar c,
\qquad k=1,2.
\]
We first verify monotonicity. On \(r\le1\),
\[
\log\Theta_2
=(2-1/c_1)\log r
=\left(1+\frac{R_2^*-1/n}{1-\alpha}\right)\log r.
\]
Hence
\[
\frac{\partial\log\Theta_2}{\partial R_1^*}
=\frac{2-1/c_1}{R_1^*}>0,
\]
and
\[
\frac{\partial\log\Theta_2}{\partial R_2^*}
=\frac{\log r}{1-\alpha}-\frac{2-1/c_1}{R_2^*}<0.
\]
The \(r\ge1\) branch follows by exchanging the groups and resources and taking the reciprocal of the speed. The two branches agree at \(r=1\), where \(\Theta_2=1\), proving monotonicity.

We next prove the two conditions in Lemma~\ref{lem:sufficient-SP}. 
We provide the proof for the first condition. Exchanging the groups and resources replaces \((\alpha,R_1^*,R_2^*,c_1,c_2)\) by \((1-\alpha,R_2^*,R_1^*,c_2,c_1)\) and replaces \(\Theta_2\) by its reciprocal, proving the second condition.
For the first condition, consider a feasible decrease from \(R_2^*\) to \(R_2^*-\delta\), with \(\bar c\delta<R_2^*\).
Bernoulli's inequality gives
\begin{equation}
\label{eq:asf2-bernoulli}
\left(\frac{R_2^*}{R_2^*-\delta}\right)^{\bar c}
\le\frac1{1-\bar c\delta/R_2^*}.
\end{equation}
There are three cases.

\emph{Both states have \(r\le1\).}
Here \(R_1^*\le R_2^*-\delta\le R_2^*\).
Let \(c_1'\) be the post-deviation value of \(c_1\). Then
\(1/c_1'=1/c_1+\delta/(1-\alpha)\), so
\[
\frac{\Theta_2(\alpha,R_1^*,R_2^*-\delta)}
{\Theta_2(\alpha,R_1^*,R_2^*)}
=
\left(\frac{R_2^*}{R_2^*-\delta}\right)^{2-1/c_1'}
\left(\frac{R_2^*}{R_1^*}\right)^{1/c_1'-1/c_1}.
\]
Since \(2-1/c_1'\le2-1/c_1\le\bar c\), the first factor is at most the right-hand side of~\eqref{eq:asf2-bernoulli}.
Put \(\lambda=1/c_1'-1/c_1=\delta/(1-\alpha)\in[0,1]\).
Concavity gives
\[
\begin{aligned}
\left(\frac{R_2^*}{R_1^*}\right)^\lambda
&\le1+\lambda\left(\frac{R_2^*}{R_1^*}-1\right)\\
&=1+\frac{\delta(R_2^*-R_1^*)}{(1-\alpha)R_1^*}
\le1+\frac{c_1\delta}{R_1^*},
\end{aligned}
\]
where the last inequality uses
\(R_2^*-R_1^*\le R_2^*-1/n=(1-\alpha)(1-1/c_1)\le c_1(1-\alpha)\).
Multiplying the two bounds proves the first condition.

\emph{Both states have \(r\ge1\).}
Here \(R_2^*-\delta\le R_2^*\le R_1^*\).
The exponent \(2-1/c_2\) is unchanged, and
\[
\frac{\Theta_2(\alpha,R_1^*,R_2^*-\delta)}
{\Theta_2(\alpha,R_1^*,R_2^*)}
=
\left(\frac{R_2^*}{R_2^*-\delta}\right)^{2-1/c_2}
\le\frac1{1-\bar c\delta/R_2^*}
\le\frac{1+c_1\delta/R_1^*}{1-\bar c\delta/R_2^*}.
\]

\emph{The deviation crosses \(r=1\).}
Here \(R_2^*-\delta\le R_1^*\le R_2^*\).
Both bases below are at least one, and therefore
\[
\begin{aligned}
\frac{\Theta_2(\alpha,R_1^*,R_2^*-\delta)}
{\Theta_2(\alpha,R_1^*,R_2^*)}
&=
\left(\frac{R_1^*}{R_2^*-\delta}\right)^{2-1/c_2}
\left(\frac{R_2^*}{R_1^*}\right)^{2-1/c_1}\\
&\le\left(\frac{R_2^*}{R_2^*-\delta}\right)^{\bar c}
\le\frac1{1-\bar c\delta/R_2^*}.
\end{aligned}
\]
This also implies the required condition.
The calculation holds for every \(0<\alpha<1\).
\end{proof}

\subsection{Fair-Ratio Bound for \(\ATwo\)}
\label{app:atwo-fair-ratio}

To analyze the \fratio of \(\ATwo\), we use the following refinement of the resource-exhaustion comparison. The bound applies to every \(\ASF\) mechanism.

\begin{lemma}[Refined upper bound]
\label{lem:asf-resource-exhaustion-comparison}
If resource~1 is exhausted by the \ASF allocation, then
\begin{equation}
\label{eq:asf-r1-ratio-bound}
\frac{1+\Delta S_1^*+\Delta S_2^*}
{1+\Delta S_1+\Delta S_2}
\le
\max_{0\le \beta\le 1}
\frac{
1+R_2^*\cdot
\min\left\{r+1-\beta,\frac{r}{\beta}\right\}
}{
1+R_2^*\cdot
\frac{r(1+rq)}
{rq+\beta}
}+\frac2n,
\end{equation}
where \(r/\beta=+\infty\) when \(\beta=0\).
If resource~2 is exhausted by the \ASF allocation, then
\begin{equation}
\label{eq:asf-r2-ratio-bound}
\frac{1+\Delta S_1^*+\Delta S_2^*}
{1+\Delta S_1+\Delta S_2}
\le
\max_{0\le \beta\le 1}
\frac{
1+R_2^*\cdot
\min\left\{1+r(1-\beta),\frac{1}{\beta}\right\}
}{
1+R_2^*\cdot
\frac{1+rq}
{1+\beta rq}
}+\frac2n,
\end{equation}
where \(1/\beta=+\infty\) when \(\beta=0\).
\end{lemma}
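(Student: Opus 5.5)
The plan is to redo the resource-accounting argument from the proof of Lemma~\ref{lem:S2/S2*}, but to keep the unknown consumption ratio of the growing group as a free parameter \(\beta\) instead of discarding it. Since \(\beta\) is unknown to the analysis, we maximize over \(\beta\in[0,1]\). I treat the case where resource~1 is exhausted; the resource~2 case is the mirror image.

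\emph{Resource~1 exhausted.} Degenerate instances, where Step~1 already exhausts a resource, give ratio one, so assume both Step~2 increments are positive. Set \(\beta=(R_1-\Delta S_1)/\Delta S_2\), the average amount of resource~1 that \(G_2\) uses per unit of Step~2 dominant-share growth under \ASF. Every \(j\in G_2\) has \(d_{j,1}<1\), so \(\beta\in[0,1]\).

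\emph{Denominator.} Exhaustion of resource~1 together with \(\Delta S_1=\theta\Delta S_2\) gives \((\theta+\beta)\Delta S_2=R_1\). As in Lemma~\ref{lem:S2/S2*}, \(R_1^*-R_1\le \beta/n\le 1/n\). Using \(\theta=rq\) and \(R_1^*=rR_2^*\), we obtain
\[
\Delta S_1+\Delta S_2=\frac{(1+\theta)R_1}{\theta+\beta}\ \ge\ R_2^*\,\frac{r(1+rq)}{rq+\beta}-O(1/n).
\]

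\emph{Numerator.} Lemma~\ref{lem:S2/S2*} gives \(\Delta S_2^*\ge\Delta S_2\). Along \(G_2\)'s water-filling trajectory the cumulative ratio of resource~1 to resource~2 is nondecreasing, so \(\opt\) spends at least \(\beta\Delta S_2^*\) of resource~1 on \(G_2\). Feasibility then gives
\[
\Delta S_1^*+\beta\Delta S_2^*\le R_1\le R_1^*=rR_2^*.
\]
Combining this with \(\Delta S_2^*\le R_2^*\) yields \(\Delta S_1^*+\Delta S_2^*\le R_1^*+(1-\beta)\Delta S_2^*\le R_2^*(r+1-\beta)\). Since \(\beta\le1\), we also have \(\Delta S_1^*+\Delta S_2^*\le(\Delta S_1^*+\beta\Delta S_2^*)/\beta\le R_2^*r/\beta\), which is vacuous when \(\beta=0\). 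Together these give the \(\min\{r+1-\beta,\,r/\beta\}\) term.

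\emph{Combining.} The welfare ratio has the form \((1+X)/(1+Y)\), which is increasing in \(X\) and decreasing in \(Y\). Substituting the two bounds gives the expression in \eqref{eq:asf-r1-ratio-bound} at the realized \(\beta\), and hence at most its maximum over \(\beta\in[0,1]\). The \(O(1/n)\) loss in \(Y\) costs at most \(2/n\), because the numerator is at most two and the denominator is at least one, exactly as in Lemma~\ref{lem:general-ASF-upper-bound}.

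\emph{Resource~2 exhausted.} Now let \(\beta\) be the average amount of resource~2 that \(G_1\) uses per unit of growth, so \(\beta\le1\). Resource~2 exhaustion reads \(\Delta S_2+\beta\Delta S_1=R_2\approx R_2^*\). With \(\Delta S_1=rq\,\Delta S_2\) this gives \(\Delta S_1+\Delta S_2\approx R_2^*(1+rq)/(1+\beta rq)\). For \(\opt\), Lemma~\ref{lem:S2/S2*} gives \(\Delta S_1^*\ge\Delta S_1\), so the same monotonicity yields \(\Delta S_2^*+\beta\Delta S_1^*\le R_2^*\), and also \(\Delta S_1^*\le R_1^*=rR_2^*\). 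These give the two upper bounds \(R_2^*(1+r(1-\beta))\) and \(R_2^*/\beta\), which is \eqref{eq:asf-r2-ratio-bound}.

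\emph{Main obstacle.} The delicate step is the monotone cumulative-ratio claim. It is what transfers the realized \(\beta\) from \ASF to \(\opt\), and it needs \(\opt\) to follow the same within-group water-filling with a larger increment in the growing group. That is supplied by Lemma~\ref{lem:the-optimal} and Lemma~\ref{lem:S2/S2*}. The rest is bookkeeping of the \(R_k\) versus \(R_k^*\) discrepancy, which is \(O(1/n)\).
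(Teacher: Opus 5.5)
Your proposal is correct and follows the paper's proof essentially step for step. It uses the same \(\beta=C_2(\Delta S_2)/\Delta S_2\), the same exhaustion identity \(R_1=(rq+\beta)\Delta S_2\) for the denominator, the same transfer of \(\beta\) to the fair optimum (your monotone cumulative ratio is the paper's convexity of \(C_2\), combined with Lemma~\ref{lem:S2/S2*}), the same two numerator bounds, the same \(2/n\) accounting, and the mirror argument for resource~2.

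The one step you should write out is the size of the denominator loss. It equals \(\beta(1+rq)/\bigl(n(rq+\beta)\bigr)\), which is at most \(1/n\) because \(\beta\le1\). This genuinely requires \(R_1^*-R_1\le\beta/n\) rather than the weaker \(1/n\), since \((1+rq)/(rq+\beta)\) is unbounded as \(rq,\beta\to0\).
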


\begin{proof}
For group \(G_2\), let \(C_2(y)\) be the amount of resource~1 consumed by its within-group water-filling allocation in Step~2 when its total dominant-share increment is \(y\).
The water-filling order implies that \(C_2\) is convex and piecewise linear, with \(C_2(0)=0\) and \(0\le C_2(y)\le y\).
Define \(C_1(y)\) analogously for group \(G_1\).

If the mechanism is fair-optimal, its welfare ratio is one. In either bound, the fractional expression equals one at \(\beta=1\), so the claim follows.

We may therefore assume \(\Delta S_1^*+\Delta S_2^*>\Delta S_1+\Delta S_2\).
Suppose first that resource~1 is exhausted by the \ASF allocation. Denote 
\[\beta=\frac{C_2(\Delta S_2)}{\Delta S_2} \in [0,1].\]
Thus group \(G_2\) consumes \(C_2(\Delta S_2)=\beta\Delta S_2\) units of resource~1 in Step~2.
Since \(\Delta S_1=rq\Delta S_2\), resource~1 balance gives
\[
    R_1
    =
    \Delta S_1+\beta\Delta S_2
    =
    (rq+\beta)\Delta S_2 .
\]
As in the proof of Lemma~\ref{lem:S2/S2*}, \(R_1^*-R_1\le\beta/n\). Since \(\beta\le1\), the \ASF welfare denominator satisfies
\[
    \Delta S_1+\Delta S_2
    =\frac{(1+rq)R_1}{rq+\beta}
    \ge\frac{r(1+rq)}{rq+\beta}\,R_2^*
    -\frac{\beta(1+rq)}{n(rq+\beta)}
    \ge\frac{r(1+rq)}{rq+\beta}\,R_2^*-\frac1n.
\]

It remains to bound the fair optimum's welfare.
Since resource~1 is exhausted, Lemma~\ref{lem:S2/S2*} gives \(\Delta S_2^* \ge \Delta S_2\).
Convexity and \(C_2(0)=0\) then imply
\[
    \frac{C_2(\Delta S_2^*)}{\Delta S_2^*}
    \ge
    \frac{C_2(\Delta S_2)}{\Delta S_2}
    =
    \beta,
\]
and therefore
\[
    \Delta S_2^*-C_2(\Delta S_2^*)
    \le
    (1-\beta)\Delta S_2^*.
\]
Feasibility gives \(\Delta S_2^*\le R_2^*\), and the inequalities \(C_2(\Delta S_2^*)\le R_1^*\) and \(C_2(\Delta S_2^*)\ge\beta \Delta S_2^*\) give \(\Delta S_2^*\le R_1^*/\beta\) when \(\beta>0\).
Thus
\[
\Delta S_2^* \le \min\left\{R_2^*,\frac{R_1^*}{\beta}\right\}.
\]
Since \(\Delta S_1^*+C_2(\Delta S_2^*)\le R_1^*\), it follows that 
\[
    \Delta S_1^*+\Delta S_2^*
    \le
    R_1^*+(1-\beta)
    \min\left\{R_2^*,\frac{R_1^*}{\beta}\right\}.
\]
Substituting \(R_1^*=rR_2^*\) in the numerator bound gives
\[
    R_1^*+(1-\beta)
    \min\left\{R_2^*,\frac{R_1^*}{\beta}\right\}
    =
    R_2^*\cdot\min\left\{r+1-\beta,\frac{r}{\beta}\right\}.
\]
The resulting numerator bound is at most two, since \(R_1^*+R_2^*\le1\). Both the actual welfare denominator and the denominator in~\eqref{eq:asf-r1-ratio-bound} are at least one, and the latter exceeds the former by at most \(1/n\). The welfare ratio is therefore at most the displayed fractional expression plus \(2/n\), proving~\eqref{eq:asf-r1-ratio-bound}.

The case in which resource~2 is exhausted follows by the same argument with the two resources and the two groups exchanged. Denote
\[\beta=\frac{C_1(\Delta S_1)}{\Delta S_1}\in[0,1].\]
Then resource~2 balance gives
\[
    R_2
    =
    \beta\Delta S_1+\Delta S_2
    =
    (1+\beta rq)\Delta S_2 .
\]
Now \(R_2^*-R_2\le\beta/n\), so the \ASF welfare denominator satisfies
\[
    \Delta S_1+\Delta S_2
    =\frac{(1+rq)R_2}{1+\beta rq}
    \ge\frac{1+rq}{1+\beta rq}\,R_2^*
    -\frac{\beta(1+rq)}{n(1+\beta rq)}
    \ge\frac{1+rq}{1+\beta rq}\,R_2^*-\frac1n.
\]
By the symmetric envelope argument applied to \(C_1\), and using resource~2 feasibility of the fair optimum,
\[
    \Delta S_1^*+\Delta S_2^*
    \le
    R_2^*+(1-\beta)
    \min\left\{R_1^*,\frac{R_2^*}{\beta}\right\}.
\]
Using \(R_1^*=rR_2^*\), the numerator bound becomes
\[
    R_2^*+(1-\beta)
    \min\left\{R_1^*,\frac{R_2^*}{\beta}\right\}
    =
    R_2^*\cdot\min\left\{1+r(1-\beta),\frac{1}{\beta}\right\}.
\]
The same denominator comparison adds at most \(2/n\), proving~\eqref{eq:asf-r2-ratio-bound}.
\end{proof}

Put
$X=R_1^*$, $Y=R_2^*$, $h=1/n$, and $r=X/Y$.
The speed rule is symmetric under exchanging the groups and resources, so it suffices to consider $0<r\le1$; the argument below does not require $\alpha\le1/2$.
In this case, the normalized speed is
\[
q=\frac{\Theta_2}{r}=r^t,
\qquad
t=1-\frac1{c_1}=\frac{Y-h}{1-\alpha}\in[0,1].
\]
Since $rY=X\le\alpha$ and $Y=t(1-\alpha)+h$,
\begin{equation}
\label{eq:asf2-new-envelope}
Y\le\frac{t+h}{1+rt}
\le\frac{t}{1+rt}+h.
\end{equation}
Also, $X+Y\le1$ gives $Y\le1/(1+r)$.

The two bounds in Lemma~\ref{lem:asf-resource-exhaustion-comparison} correspond to different resource-exhaustion cases. The following scalar estimates bound their fractional expressions by $12/11$ and $2/(2\sqrt2-1)$, respectively.

We use the Bernstein basis
\[
B_i^d(x)=\binom di x^i(1-x)^{d-i}.
\]
If a polynomial on \([0,1]^3\) has only nonnegative Bernstein coefficients, then it is nonnegative throughout the cube. The exact coefficient data used below are collected at the end of this appendix; see Section~\ref{app:asf2-bernstein-certificates}.

\begin{lemma}
\label{lem:asf2-new-scalar}
Let $0<r\le1$, $0\le t\le1$, $q=r^t$, and $0\le\beta\le1$.
If $0\le Y\le t/(1+rt)$, then
\[
\Phi_1(r,q,Y,\beta):=
\frac{1+Y\min\{r+1-\beta,r/\beta\}}
     {1+Yr(1+rq)/(rq+\beta)}
\le\frac{12}{11}.
\]
If $0\le Y\le1/(1+r)$, then
\[
\Phi_2(r,q,Y,\beta):=
\frac{1+Y(1+r(1-\beta))}
     {1+Y(1+rq)/(1+\beta rq)}
\le\frac{2}{2\sqrt2-1}.
\]
Here $r/\beta=+\infty$ when $\beta=0$.
\end{lemma}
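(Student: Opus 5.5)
The plan is to reduce each inequality to a polynomial inequality on a box, using monotonicity in each variable, and then certify the residual polynomial inequality with the Bernstein-coefficient test described just above. Both $\Phi_1$ and $\Phi_2$ have the form $(1+aY)/(1+bY)$ with $a,b\ge0$ independent of $Y$. Such a ratio is monotone in $Y$: it increases when $a>b$, and when $a\le b$ it is at most one. So in each case I replace $Y$ by its upper bound.

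\emph{The bound on $\Phi_2$.}
\begin{enumerate}
\item The denominator coefficient $(1+rq)/(1+\beta rq)$ is nondecreasing in $q$ for $\beta\le1$. Since $q=r^t\ge r$ when $0<r\le1$ and $0\le t\le1$, I may set $q=r$.
\item Substituting $Y=1/(1+r)$ and clearing denominators gives
\[
\Phi_2\le\frac{(2+2r-r\beta)(1+\beta r^2)}{(1+r)(1+\beta r^2)+1+r^2}.
\]
\item I will show that the right-hand side is maximized at $\beta=0$. The difference between its value at $\beta=0$ and at general $\beta$ has a numerator that factors as $\beta$ times a polynomial in $(r,\beta)$. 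I expect that polynomial to be nonnegative on $[0,1]^2$, which can be checked directly or by Bernstein coefficients.
\item At $\beta=0$ the bound is $g(r)=2(1+r)/(2+r+r^2)$. Setting $g'(r)=0$ gives $r^2+2r-1=0$, so $r=\sqrt2-1$. At this point $g=2/(2\sqrt2-1)$, which matches the extremal family described before Theorem~\ref{thm:ASF2-summary}.
\end{enumerate}

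\emph{The bound on $\Phi_1$.} Here $q$ and the cap on $Y$ are coupled through $t$, so more care is needed.
\begin{enumerate}
\item The coefficient $r(1+rq)/(rq+\beta)$ is nondecreasing in $q$, since $\beta\le1$. So I replace $q$ by a rational lower bound. Bernoulli's inequality for the exponent $1-t\in[0,1]$ gives $r^{1-t}\le r+t(1-r)$, and hence
\[
q=r^t\ge\frac{r}{r+t(1-r)}.
\]
\item I substitute this bound and $Y=t/(1+rt)$.
\item I split according to which term attains $\min\{r+1-\beta,\,r/\beta\}$. On each side of the breakpoint $\beta^2-(1+r)\beta+r=0$, that is on $\beta\le r$ versus $\beta\ge r$, I use the corresponding branch. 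On $\beta\le r$ the term is $r+1-\beta$; on $\beta\ge r$ it is $r/\beta$. For the second branch I rescale $\beta=r+(1-r)s$ so that the region again becomes a unit cube.
\item In each branch, the inequality $11\cdot\text{numerator}\le12\cdot\text{denominator}$, after multiplying out all positive denominators, becomes $P(r,t,\beta)\ge0$ for a polynomial $P$ on $[0,1]^3$.
\item I expand $P$ in the Bernstein basis $B_i^d$ and check that all coefficients are nonnegative. If some coefficient is negative, I subdivide the cube, for example at $r=1/2$ or $t=1/2$, and recheck.
\end{enumerate}

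The main obstacle is this last certification step for $\Phi_1$. The rational bound on $r^t$ loses some accuracy, and it is not clear in advance that the loss leaves room under $12/11$. If it fails, I will sharpen the bound by applying Bernoulli's inequality on subintervals of $t$, or use a tangent-line bound on $\ln r$ on subintervals of $r$. Degeneracies at $r\to0$ and $t\to0$, where $Y\to0$ and the ratio tends to one, should make the certificate tight only at the boundary, which Bernstein coefficients handle.
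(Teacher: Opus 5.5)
Your strategy of monotone reductions followed by polynomial certificates is the paper's strategy, but two of your reduction steps fail.

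\textbf{The $\Phi_1$ bound uses the wrong direction in $q$.} The coefficient $r(1+rq)/(rq+\beta)$ is nonincreasing in $q$, not nondecreasing. Its derivative is $r^2(\beta-1)/(rq+\beta)^2\le 0$; at $\beta=0$ the coefficient is simply $r+1/q$. Because it sits in the denominator, $\Phi_1$ is nondecreasing in $q$. Substituting your valid Bernoulli lower bound $q\ge r/(r+t(1-r))$ therefore only decreases $\Phi_1$, and a certificate for the resulting polynomial says nothing about $\Phi_1$. What you need is an upper bound on $r^t$, which comes from convexity of $t\mapsto r^t$ via chords. The single chord $r^t\le 1-t+tr$ is too loose. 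At $r=1/4$, $t=3/5$, $\beta=r$, $Y=t/(1+rt)$ it yields about $1.100>12/11$, whereas the true value is about $1.085$. The paper instead uses the two-piece chord through $(1/2,\sqrt r)$, substitutes $s=\sqrt r$ and $v=\beta/r$, and certifies the two resulting polynomials by Bernstein coefficients on a $4\times4$ subdivision in $(s,u)$. It also avoids your second branch. For $\beta\ge r$ one has $\Phi_1-1=Yr^2q(1-\beta)/[\beta(rq+\beta+Yr(1+rq))]$, which is nonincreasing in $\beta$, so only $\beta\le r$ needs certification.

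\textbf{Your Step~3 for $\Phi_2$ is false.} The reduced expression is not maximized at $\beta=0$ when $r$ is near $1$. At $r=1$ it equals $(4-\beta)(1+\beta)/(4+2\beta)$, which is $1$ at $\beta=0$ but $21/20$ at $\beta=1/2$. So at $r=1$ the cofactor of $\beta$ that you expect to be nonnegative is negative for every $\beta\in(0,1)$, and no Bernstein check will succeed. With $z=r\beta\in[0,r]$, the expression minus one is $(r-z)(1-r+rz)/(2+r+r^2+rz(1+r))$. Its numerator is decreasing in $z$ when $r^2+r\le 1$, so your reduction to $g(r)$ and your calculus at $r=\sqrt2-1$ are valid for $r\le(\sqrt5-1)/2$, but not for larger $r$ (the maximum is already interior at $r=0.7$). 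The final bound still holds, because the interior maxima for large $r$ stay below $2/(2\sqrt2-1)$, but this needs a separate argument. The paper splits at $r=3/5$:
\begin{itemize}
\item For $r\le 3/5$, with $\kappa=(4\sqrt2-5)/7$, it writes $\kappa$ times the denominator minus the numerator as $(1+\kappa)(r-\sqrt2+1)^2+[1-(1-\kappa)r(1+r)]z+rz^2$, and each term is nonnegative.
\item For $r\ge 3/5$, it proves the stronger constant $3/32$ by completing the square in $z$ and certifying $4rg(r)-b(r)^2>0$ with Bernstein coefficients.
\end{itemize}
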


\begin{proof}
For either ratio, a value at most one needs no further bound. Otherwise, the ratio is increasing in $Y$, so we may use the corresponding upper bound on $Y$.

\emph{Resource 1.}
The ratio $\Phi_1$ is nondecreasing in $q$, since
\[
\frac{\partial}{\partial q}\frac{r(1+rq)}{rq+\beta}
=\frac{r^2(\beta-1)}{(rq+\beta)^2}\le0.
\]
For $\beta\ge r$,
\[
\Phi_1-1
=\frac{Yr^2q(1-\beta)}
       {\beta\bigl(rq+\beta+Yr(1+rq)\bigr)},
\]
which is nonincreasing in $\beta$. It therefore suffices to consider
$0\le\beta\le r$ and $Y=t/(1+rt)$.

Set $s=\sqrt r$ and $v=\beta/r$.
Convexity of $r^t$ in $t$ gives the following upper chords:
\[
r^t\le
\begin{cases}
1-2t+2ts, & 0\le t\le1/2,\\
(2-2t)s+(2t-1)s^2, & 1/2\le t\le1.
\end{cases}
\]
Write $t_0=u/2$, $t_1=(1+u)/2$, $Q_0=1-u+us$, and
$Q_1=(1-u)s+us^2$, where $u\in[0,1]$.
After substituting $q=Q_e$, $t=t_e$, and $\beta=rv$, and clearing positive denominators, the inequality $\Phi_1\le12/11$ is equivalent to
\begin{equation}
\label{eq:asf2-polynomial}
P_e(s,u,v):=
(1+s^2t_e)(Q_e+v)
+t_e\bigl[12(1+s^2Q_e)
-11(1+s^2-s^2v)(Q_e+v)\bigr]\ge0,
\end{equation}
for $e=0,1$ on $[0,1]^3$.
The coefficient certificate in Section~\ref{app:asf2-bernstein-certificates} proves~\eqref{eq:asf2-polynomial}, and hence the first bound.

\emph{Resource 2.}
The ratio $\Phi_2$ is nonincreasing in $q$, since
\[
\frac{\partial}{\partial q}\frac{1+rq}{1+\beta rq}
=\frac{r(1-\beta)}{(1+\beta rq)^2}\ge0.
\]
As $q=r^t\ge r$, it suffices to take $q=r$ and $Y=1/(1+r)$.
Put $z=r\beta\in[0,r]$. Then
\[
\Phi_2(r,r,1/(1+r),\beta)-1
=\frac{(r-z)(1-r+rz)}
       {2+r+r^2+rz(1+r)}.
\]
Let $\kappa=(4\sqrt2-5)/7$, so $1+\kappa=2/(2\sqrt2-1)$.
For $0<r\le3/5$, subtracting the numerator from $\kappa$ times the denominator gives
\[
(1+\kappa)(r-\sqrt2+1)^2
+\bigl[1-(1-\kappa)r(1+r)\bigr]z+rz^2\ge0,
\]
because $r(1+r)\le24/25<1$.

For $3/5\le r\le1$, the stronger bound with $\kappa_0=3/32<\kappa$ holds.
Write
\[
g(r)=\kappa_0(2+r+r^2)-r+r^2,
\qquad
b(r)=1-(1-\kappa_0)r(1+r).
\]
The corresponding difference is $rz^2+b(r)z+g(r)$.
The coefficient certificate in Section~\ref{app:asf2-bernstein-certificates} gives $4rg(r)-b(r)^2>0$ on this interval. Therefore
\[
rz^2+b(r)z+g(r)
=r\left(z+\frac{b(r)}{2r}\right)^2
+\frac{4rg(r)-b(r)^2}{4r}\ge0.
\]
This proves the second bound.
\end{proof}



\subsection*{Proof of Theorem~\ref{thm:ASF2-summary}}
\label{app:thm:ASF2-summary}

\begin{proof}
SI, EF, and PO follow from Lemma~\ref{lem:ASF-3-properties}. For SP, Lemma~\ref{lem:asf2-sp-condition} verifies the two conditions in Lemma~\ref{lem:sufficient-SP} for the positive speed rule \(\Theta_2\), and Lemma~\ref{lem:sufficient-SP} then applies. We next prove the fair-ratio bound.

Consider first $r\le1$.
If resource~1 is exhausted, use~\eqref{eq:asf-r1-ratio-bound}.
For fixed $r,q,\beta$, write its fractional expression as
$(1+YA)/(1+YB)$, where $A=\min\{r+1-\beta,r/\beta\}\le2$ and $B\ge0$.
Its derivative with respect to $Y$ is $(A-B)/(1+YB)^2\le2$.
Thus~\eqref{eq:asf2-new-envelope} and Lemma~\ref{lem:asf2-new-scalar} give
\[
\Phi_1(r,q,Y,\beta)
\le\frac{12}{11}+\frac2n.
\]
If resource~2 is exhausted, use~\eqref{eq:asf-r2-ratio-bound}.
Since $r\le1$ and $\beta\le1$, its numerator is $1+Y(1+r(1-\beta))$.
The bound $Y\le1/(1+r)$ and Lemma~\ref{lem:asf2-new-scalar} give
$\Phi_2\le2/(2\sqrt2-1)$.
Including the additional $2/n$ in~\eqref{eq:asf-r1-ratio-bound} and~\eqref{eq:asf-r2-ratio-bound}, and using $12/11<2/(2\sqrt2-1)$, both cases have the required upper bound.
Exchanging the resources proves the same bound when $r\ge1$.

For tightness, take $n_2/n=\alpha=1-1/\sqrt2+O(1/n)$, give every agent in $G_1$ demand $(1,\varepsilon)$ and every agent in $G_2$ demand $(\varepsilon,1)$, and let $\varepsilon=n^{-2}$.
Then $r=\sqrt2-1+O(1/n)$, $t=1+O(1/n)$, and
$\Theta_2=r^2+O(1/n)$.
The mechanism exhausts resource~2 first, with welfare
\[
1+(1-\alpha)(1+r^2)+O(1/n)
=\frac{2+r+r^2}{1+r}+O(1/n).
\]
An allocation giving each group total dominant share $1/(1+\varepsilon)$ exhausts both resources, satisfies SI and EF for all sufficiently large $n$, and has welfare $2/(1+\varepsilon)$.
Summing the two resource constraints shows that no allocation has greater welfare.
The resulting ratio is therefore
\[
\frac{2(1+r)}{2+r+r^2}+O(1/n)
=\frac{2}{2\sqrt2-1}+O(1/n),
\]
as claimed.

For Example~2, \(q\to0\). If the mechanism exhausts resource~1, Lemma~\ref{lem:S2/S2*} gives \(\Delta S_2\ge\Delta S_2^*/(1+q)-1/n\), capturing a \(1-o(1)\) fraction of the fair optimum's group-\(G_2\) increment; if it exhausts resource~2, group~\(G_1\)'s non-dominant consumption is negligible, giving the same conclusion. Thus \(\ATwo\) gives almost all remaining resource~2 to the special agent.
\end{proof}

\subsection*{Proof of Theorem~\ref{thm:asymptotic-asf-lower-bound}}
\label{app:asymptotic-asf-lower-bound}
\begin{proof}
Let \(n\ge6\) be divisible by three and put \(\varepsilon_n=n^{-2}\).
We construct two instances \(I_n\) and \(J_n\) with identical state parameters after Step~1.
In both instances, \(G_1\) contains \(2n/3\) agents with demand \((1,\varepsilon_n)\), and \(G_2\) contains \(n/3\) agents.
In \(I_n\), every \(G_2\) agent has demand \((1/3,1)\).
In \(J_n\), one special \(G_2\) agent has demand \((\varepsilon_n,1)\), and all other \(G_2\) agents have demand \((1/3,1)\).
The remaining amounts of resource~1 are \(2/9\) in \(I_n\) and
\(2/9+1/(3n)-\varepsilon_n/n\) in \(J_n\).
The minimum non-dominant coefficient in \(G_2\) is \(1/3\) in \(I_n\) and
\(\varepsilon_n\) in \(J_n\).
Both instances therefore have exactly the same state
\[
\alpha=\frac13,\qquad
R_1^*=\frac29+\frac1{3n},\qquad
R_2^*=\frac23(1-\varepsilon_n)+\frac{\varepsilon_n}{n}.
\]
Therefore, any \ASF{} rule depending only on \((\alpha,R_1^*,R_2^*)\) must choose the same Step~2 speed \(\theta>0\) on \(I_n\) and \(J_n\).
Write \(\SW^*\) for the optimal SI--EF welfare.

We first consider instance \(I_n\).
Give every \(G_1\) agent utility \(1/n\) and divide total dominant share \(1-2\varepsilon_n/3\) equally among \(G_2\).
The resulting loads are \(1-2\varepsilon_n/9\) and \(1\), and welfare is \(5/3-2\varepsilon_n/3\).
It satisfies SI and within-group EF immediately.
Moreover, every \(G_2\) bundle contains at most \(1/n\) of resource~1,
whereas every \(G_1\) bundle contains only \(\varepsilon_n/n\) of
resource~2, so there is no cross-group envy.
Thus $\SW^*(I_n)\ge \frac53-\frac{2\varepsilon_n}{3}$.

For the mechanism, resource~1 feasibility gives
\[
\left(\theta+\frac13\right)\Delta S_2\le\frac29,
\]
and hence
\[
\SW(\ASF(I_n))
\le1+\frac{(1+\theta)(2/9)}{\theta+1/3}.
\]
Therefore,
\[
\frac{\SW^*(I_n)}{\SW(\ASF(I_n))}
\ge\frac{15\theta+5}{11\theta+5}-\frac23\varepsilon_n.
\]

We next consider instance \(J_n\).
Give every \(G_1\) agent utility \(4/(3n)\), every ordinary \(G_2\) agent utility \(1/n\), and the special agent utility
\[
\frac23+\frac1n-\frac89\varepsilon_n.
\]
This allocation exhausts resource~2. Its load on resource~1 is
\[
1-\frac1{3n}
+\varepsilon_n\left(\frac23+\frac1n-\frac89\varepsilon_n\right)\le1,
\]
since the special agent's utility is at most one and \(\varepsilon_n\le1/(3n)\).
Thus the allocation is feasible and satisfies SI for \(n\ge6\).
It is also EF.
An ordinary \(G_2\) agent evaluates the special bundle at at most \(3\varepsilon_n\le1/n\), while the special agent does not envy the ordinary agents' baseline bundles.
Across groups, every \(G_2\) bundle contains at most \(1/(3n)\) of resource~1, and the \(G_1\) bundles contain only \(4\varepsilon_n/(3n)\) of resource~2.
Hence, $\SW^*(J_n)\ge \frac{17}{9}-\frac{8\varepsilon_n}{9}$.

For the mechanism, all \(G_1\) agents have non-dominant coefficient \(\varepsilon_n\).
Resource~2 feasibility therefore gives
\[
(1+\varepsilon_n\theta)\Delta S_2
\le\frac23(1-\varepsilon_n).
\]
In particular,
\[
\SW(\ASF(J_n))
\le1+\frac{(1+\theta)(2/3)(1-\varepsilon_n)}
{1+\varepsilon_n\theta}
\le\frac53+\frac23\theta.
\]
Consequently,
\[
\frac{\SW^*(J_n)}{\SW(\ASF(J_n))}
\ge\frac{17}{15+6\theta}-\frac89\varepsilon_n.
\]

The function \((15\theta+5)/(11\theta+5)\) is increasing, whereas
\(17/(15+6\theta)\) is decreasing.
They intersect at the positive root of
\[
45\theta^2+34\theta-5=0,
\qquad
\theta=\frac{\sqrt{514}-17}{45}.
\]
Their common value is \((191-2\sqrt{514})/135\).
For every \(n\) divisible by three, at least one of \(I_n,J_n\) therefore has welfare ratio at least
\[
\frac{191-2\sqrt{514}}{135}-\frac89\varepsilon_n.
\]
Since \(\varepsilon_n=n^{-2}\) and \(n\) can be arbitrarily large, the claimed lower bound follows.
\end{proof}

\subsection{Exact Bernstein certificate data}
\label{app:asf2-bernstein-certificates}

For the polynomials $P_e$ in~\eqref{eq:asf2-polynomial}, partition the $s$ and $u$ intervals into four equal parts. For
$i,j\in\{0,1,2,3\}$, expand
\[
P_e\left(\frac{i+x}{4},\frac{j+y}{4},z\right)
=\sum_{k=0}^4\sum_{\ell=0}^2\sum_{m=0}^4
 A^{eij}_{k\ell m}x^ky^\ell z^m.
\]
Its Bernstein coefficient of index $(a,b,c)$ and degree $(4,2,4)$ is
\[
 B^{eij}_{abc}
 =\sum_{k=0}^a\sum_{\ell=0}^b\sum_{m=0}^c
 A^{eij}_{k\ell m}
 \frac{\binom ak\binom b\ell\binom cm}
      {\binom4k\binom2\ell\binom4m}.
\]
The following matrices list $\min_{a,b,c}B^{eij}_{abc}$, with row $i$ and column $j$:
\[
\begin{gathered}
e=0:\quad
\begin{pmatrix}
429/512&565/2048&115/1024&337/2048\\
1865/2048&153/512&209/2048&83/1536\\
1&39/64&71/256&11/64\\
1&1097/1024&7799/8192&435/512
\end{pmatrix},\\[2mm]
e=1:\quad
\begin{pmatrix}
65/128&4701/8192&1555/2048&8717/8192\\
37/512&161/2048&17/128&147/512\\
51/512&11/128&55/512&415/2048\\
26001/32768&1599/2048&6543/8192&28353/32768
\end{pmatrix}.
\end{gathered}
\]
All entries are positive. Since the Bernstein basis functions are nonnegative and sum to one, this proves~\eqref{eq:asf2-polynomial} on every subrectangle.

For $g(r)$ and $b(r)$ in the resource~2 case of Lemma~\ref{lem:asf2-new-scalar}, substitute $r=(3+2x)/5$. The polynomial $4rg(r)-b(r)^2$ has degree-four Bernstein coefficients
\[
\frac{731}{10000},\quad
\frac{3799}{16000},\quad
\frac{5317}{12800},\quad
\frac{317}{512},\quad
\frac{215}{256}.
\]
All five coefficients are positive, so $4rg(r)-b(r)^2>0$ for $3/5\le r\le1$.

\section{Missing Proofs for Section 5}

\subsection*{Proof of Theorem~\ref{thm:PoSP}}
\label{app:thm:PoSP}

\begin{proof}
The upper bound follows from SI: every SI mechanism has $\SW\ge 1$, while every feasible allocation has $\SW\le m$.
We prove the matching lower bound. In fact, only SI and SP are used.

\paragraph{Instance Construction.}
Fix $m\ge 3$. Let \(n_2\) be a sufficiently large integer and set
\[
n=n_2^2,
\qquad
n_1=n-(m-1)n_2.
\]
There are $m$ groups $G_1,\ldots,G_m$, where $|G_1|=n_1$ and $|G_k|=n_2$ for every $k\ge 2$.
Let
\[
\epsilon_1=\frac1n,
\qquad
\epsilon_2=n^{-2n_2-3},
\qquad
\beta=n^2,
\qquad
 a_i=\epsilon_2\beta^i \quad (i\in[n_2]).
\]
Then $a_i\le n^{-3}$ and $\sum_{i=1}^{n_2}a_i\le n_2/n^3$.
Agents in $G_1$ have demand vector $\mathbf{d}=(1,\epsilon_1,\ldots,\epsilon_1)$.
For every $k\in\{2,\ldots,m\}$, the $i$-th agent in $G_k$ has demand vector
\[
 \mathbf{d}_{k,i}
 =
 \left(
 \frac12,
 a_i,\ldots,a_i,
 \underbrace{1}_{\text{resource }k},
 a_i,\ldots,a_i
 \right).
\]
To exploit SP, we will select one candidate agent from each group \(G_k\), \(k\ge2\), and modify only that agent's demand.
The modification keeps resource \(k\) as the dominant resource while scaling
all other coordinates by a factor 
\[
\gamma_i=\frac{n_2}{in} \quad (i\in[n_2]),
\]
which is smaller than $1$ since $n_2 \le n$.
The changed demand of candidate $i\in G_k$ is
\[
 \mathbf{d}'_{k,i}
 =
 \left(
 \frac{\gamma_i}{2},
 \gamma_i a_i,\ldots,\gamma_i a_i,
 \underbrace{1}_{\text{resource }k},
 \gamma_i a_i,\ldots,\gamma_i a_i
 \right).
\]
For a tuple $\mathbf{t}=(i_2,\ldots,i_m)\in[n_2]^{m-1}$, let $\mathbf{I}(\mathbf{t})$ be the instance in which candidate $i_k$ in each group $G_k$, $k=2,\ldots,m$, is changed.
For fixed $k$, let $\mathbf{t}_{-k}$ denote the tuple with $i_k$ removed; $\mathbf{I}(\mathbf{t}_{-k})$ is the instance where all groups except $G_k$ are changed according to $\mathbf{t}_{-k}$, while $G_k$ remains original; and $\mathbf{I}(\mathbf{t}_{-k},i)$ is obtained by further changing candidate $i$ in $G_k$.

\paragraph{Lower bound for the optimal fair allocation.}
We first show that every $\mathbf{I}(\mathbf{t})$ admits an SI and EF allocation with welfare $m-o(1)$.
Fix $\mathbf{t}=(i_2,\ldots,i_m)$ and put
\[
\eta=m\frac{n_2}{n}+\epsilon_1+(m-2)\sum_{i=1}^{n_2}a_i,
\qquad
s=1-\eta.
\]
Since \(m\) is fixed and \(\sum_{i=1}^{n_2}a_i\le n_2/n^3\), we have
$\eta=o(1)$ and $s=1-o(1)$.
Give every agent in $G_1$ utility $1/n$.
Now fix $G_k$ with $k\ge 2$ and let $i_k$ be its changed candidate.
Set
\[
 x_k=\max\left\{\frac1n,\gamma_{i_k}s\right\}.
\]
The changed candidate receives utility $s$, with $x_k/2$ units of resource $1$, $s$ units of resource $k$, and $\gamma_{i_k}a_{i_k}s$ units of every other resource.
Each original agent $\ell<i_k$ in $G_k$ receives utility $x_k$ according to its original demand vector.
Each original agent $\ell>i_k$ receives its SI allocation.

This allocation is feasible. For resource $1$, group $G_1$ uses $n_1/n$.
Each group $G_k$, $k\ge 2$, uses
\[
\frac{i_kx_k}{2}+\frac{n_2-i_k}{2n}
\le \frac{n_2}{n}
\]
of resource $1$, because either $x_k=1/n$ or $x_k=\gamma_{i_k}s=n_2s/(i_kn)$.
Hence resource $1$ is not overused.
For a resource $r\ge 2$, agents in $G_1$ use at most $\epsilon_1$.
Group $G_r$ uses at most $s+mn_2/n$ of resource $r$, and every other group uses at most $\sum_i a_i$ of resource $r$.
By the definition of $s$, the total usage is at most $1$.

The allocation satisfies SI by construction. We next verify EF.
Agents in $G_1$ do not envy any non-$G_1$ agent: for any allocation in a group $G_k$, there is an auxiliary resource $r\in\{2,\ldots,m\}\setminus\{k\}$, and the allocated amount of that resource is at most $n^{-3}$; an agent in $G_1$ therefore obtains utility at most $n^{-2}<1/n$ from that bundle.
No agent in $G_k$, $k\ge 2$, envies an agent outside $G_k$: bundles of agents in $G_1$ contain only $n^{-2}$ of resource $k$, and bundles from other groups contain at most $n^{-3}$ of resource $k$.

It remains to check envy within a fixed group $G_k$.
A prefix agent $\ell<i_k$ does not envy any prefix agent or the changed candidate, since all these bundles contain the same amount $x_k/2$ of resource $1$, which caps its utility at $x_k$.
It also does not envy any suffix agent, whose resource-$k$ allocation is only $1/n\le x_k$.
The changed candidate does not envy a prefix agent $p<i_k$: using any auxiliary resource $r\ne k$, its utility from $p$'s bundle is at most
\[
 \frac{a_px_k}{\gamma_{i_k}a_{i_k}}
 =\frac{x_k}{\gamma_{i_k}}\beta^{p-i_k}
 \le s,
\]
where the inequality is immediate if $x_k=\gamma_{i_k}s$, and if $x_k=1/n$ it follows from $x_k/\gamma_{i_k}=i_k/n_2\le 1$ and $\beta^{p-i_k}\le 1/\beta$.
It does not envy suffix agents because they receive only $1/n<s$ of resource $k$.
Finally, a suffix agent $\ell>i_k$ does not envy the changed candidate or any prefix agent: using an auxiliary resource $r\ne k$, the utilities are at most
\[
\frac{\gamma_{i_k}a_{i_k}s}{a_\ell}
=\gamma_{i_k}s\beta^{i_k-\ell}<\frac1n,
\qquad
\frac{a_px_k}{a_\ell}=x_k\beta^{p-\ell}<\frac1n.
\]
Suffix agents also do not envy one another because their SI bundles have the same resource-$1$ amount.
Thus the allocation is SI and EF.
Its welfare is at least
\[
\frac{n-(m-1)}{n}+(m-1)s
\ge
m
-
 m(m-1)\frac{n_2}{n}
-
\frac{2(m-1)}{n}
-
(m-1)(m-2)\frac{n_2}{n^3}
= m-o(1).
\]
Therefore $\SW^*(\mathbf{I}(\mathbf{t}))\ge m-o(1)$ for every tuple $\mathbf{t}$.

\paragraph{Upper bound for any SI and SP mechanism.}
We now upper bound any SI and SP mechanism $M$.
Fix a constant $C>(m-1)^2$ and define
\[
T_i=\frac{C}{\ln n_2}\cdot\frac1i\cdot\frac{n_2}{n},
\qquad
U=\frac{2C}{\ln n_2}.
\]
Fix $k\in\{2,\ldots,m\}$ and a partial tuple $\mathbf{t}_{-k}$.
Call an index $i\in[n_2]$ bad for $\mathbf{t}_{-k}$ if, in $\mathbf{I}(\mathbf{t}_{-k},i)$, candidate $i$ obtains utility larger than $U$.
Let $S_k(\mathbf{t}_{-k})$ be the set of bad indices.

If $i\in S_k(\mathbf{t}_{-k})$, then in $\mathbf{I}(\mathbf{t}_{-k})$ the same agent can misreport as $\mathbf{d}'_{k,i}$ and induce exactly the instance $\mathbf{I}(\mathbf{t}_{-k},i)$.
Since $\mathbf{d}'_{k,i}\ge \gamma_i \mathbf{d}_{k,i}$ coordinate-wise, any allocation giving utility $y$ to the changed demand gives utility at least $\gamma_i y$ to the original demand.
Thus the misreport gives true utility greater than
\[
\gamma_i U
=
\frac{n_2}{in}\cdot\frac{2C}{\ln n_2}
=2T_i.
\]
By SP, the truthful utility of candidate $i$ in $\mathbf{I}(\mathbf{t}_{-k})$ is also greater than $2T_i$.
Because its original demand uses $1/2$ of resource $1$ per unit utility, it must receive more than $T_i$ of resource $1$ in $\mathbf{I}(\mathbf{t}_{-k})$.

In $\mathbf{I}(\mathbf{t}_{-k})$, SI forces agents in $G_1$ to consume at least $n_1/n$ of resource $1$.
Hence all agents outside $G_1$ receive total resource $1$ at most $(m-1)n_2/n$.
Therefore
\[
\sum_{i\in S_k(\mathbf{t}_{-k})}T_i
\le
(m-1)\frac{n_2}{n},
\qquad\text{and hence}\qquad
\sum_{i\in S_k(\mathbf{t}_{-k})}\frac1i
\le
\frac{(m-1)\ln n_2}{C}.
\]


We next use a probabilistic argument with \emph{harmonic} weights to show that there exists a tuple \(\mathbf t\) for which no changed candidate is bad.
This the main difference from the previous lower-bound proof in \cite{BeiLL26}, which uses identical weights and can only establish a weaker lower bound of $2$ for $m=3$.
Let $H_{n_2}=\sum_{i=1}^{n_2}\frac1i$.
Choose each coordinate \(i_k\), \(k=2,\ldots,m\), independently according to
\[
\Pr[i_k=i]=\frac{1/i}{H_{n_2}},
\qquad i\in[n_2].
\]
Fix \(k\) and condition on \(\mathbf t_{-k}\).
Then
\[
\Pr[i_k\in S_k(\mathbf t_{-k})\mid \mathbf t_{-k}]
=
\frac{\sum_{i\in S_k(\mathbf t_{-k})}1/i}{H_{n_2}}
\le
\frac{(m-1)\ln n_2}{C H_{n_2}}
<
\frac{m-1}{C},
\]
where the last inequality uses \(H_{n_2}>\ln n_2\).
Hence the unconditional probability that the changed candidate in \(G_k\) is bad is also less than \((m-1)/C\).
Taking a union bound over \(k=2,\ldots,m\), the probability that some changed candidate is bad is
\[
\Pr[\text{some changed candidate is bad}]
<
\frac{(m-1)^2}{C}
<1.
\]
Therefore, with positive probability no changed candidate is bad, and hence there exists a tuple \(\mathbf t\) with this property.

For this tuple, agents in $G_1$ have total utility at most their total allocation of resource $1$, hence at most $1$.
Every unchanged agent outside $G_1$ has demand $1/2$ for resource $1$, so their total utility is at most twice their total allocation of resource $1$, which is at most $2(m-1)n_2/n$.
No changed candidate is bad, so each of the $m-1$ changed candidates has utility at most $U$.
Therefore
\[
\SW(M(\mathbf{I}(\mathbf{t})))
\le
1+2(m-1)\frac{n_2}{n}+(m-1)\frac{2C}{\ln n_2}
=1+o(1).
\]
Combining this with $\SW^*(\mathbf{I}(\mathbf{t}))\ge m-o(1)$ gives $\AR(M)\ge m-o(1)$.
Letting \(n_2\to\infty\) yields \(\FR(M)\ge m\).
Together with the upper bound $m$, this proves the theorem.
\end{proof}

\subsection*{Proof of Theorem~\ref{thm:randomized-PoSP}}
\label{app:randomized-lower-bound}

\begin{proof}
Let $M$ be any randomized mechanism satisfying ex-post SI and truthfulness in expectation.
The upper bound follows immediately from ex-post SI: every realized allocation has social welfare at least \(1\), while every feasible allocation has welfare at most \(m\). Hence \(\FR(M)\le m\).

For the lower bound, use the same instance family \(\mathbf I(\mathbf t)\) as in the proof of Theorem~\ref{thm:PoSP}.
The SI--EF comparison constructed there is deterministic, so
\[\SW^*(\mathbf I(\mathbf t))\ge m-o(1)\]
for every tuple \(\mathbf t\).
It remains to show that some tuple has expected mechanism welfare $1+o(1)$.

Fix \(C>(m-1)^2\), and define \(T_i\) and \(U\) as in the proof of Theorem~\ref{thm:PoSP}. 
For fixed \(k\) and \(\mathbf t_{-k}\), call \(i\in[n_2]\) \emph{bad} if the changed candidate has expected utility greater than \(U\) in \(\mathbf I(\mathbf t_{-k},i)\). 
Suppose \(i\) is bad. In \(\mathbf I(\mathbf t_{-k})\), candidate \(i\) can misreport as \(\mathbf d'_{k,i}\), thereby inducing \(\mathbf I(\mathbf t_{-k},i)\). 
Since $\mathbf d'_{k,i}\ge \gamma_i\mathbf d_{k,i}$ coordinate-wise, this misreport gives expected utility, evaluated under the true demand, greater than $\gamma_i U=2T_i$.
Truthfulness in expectation therefore implies that candidate \(i\)'s truthful expected utility in \(\mathbf I(\mathbf t_{-k})\) is also greater than \(2T_i\). 
Since its true demand uses \(1/2\) unit of resource~1 per unit of utility, its expected allocation of resource~1 is greater than \(T_i\).

By ex-post SI, agents in \(G_1\) consume at least \(n_1/n\) units of resource~1 in every realization. 
Hence the expected total amount of resource~1 allocated outside \(G_1\) is at most 
$\frac{(m-1)n_2}{n}$.
Consequently, if \(S_k(\mathbf t_{-k})\) denotes the set of bad indices, 
\[
\sum_{i\in S_k(\mathbf{t}_{-k})}T_i
\le
(m-1)\frac{n_2}{n},
\qquad\text{and hence}\qquad
\sum_{i\in S_k(\mathbf{t}_{-k})}\frac1i
\le
\frac{(m-1)\ln n_2}{C}.
\]
The same harmonic-probability argument as in Theorem~\ref{thm:PoSP} now gives a tuple \(\mathbf t\) for which no changed candidate is bad.  

For this tuple, the expected total utility of agents in \(G_1\) is at most \(1\). 
Every unchanged agent outside \(G_1\) uses \(1/2\) unit of resource~1 per unit of utility, so their expected total utility is at most $2\frac{(m-1)n_2}{n}$. 
Finally, the \(m-1\) changed candidates contribute at most \((m-1)U\) in expectation. 
Thus \[ \mathbb E[\SW(M(\mathbf I(\mathbf t)))] \le 1+ 2\frac{(m-1)n_2}{n} +(m-1)U = 1+o(1). \] 
Combining this with \(\SW^*(\mathbf I(\mathbf t))\ge m-o(1)\) gives $\FR(M)\ge m-o(1)$. 
Letting \(n_2\to\infty\) yields \(\FR(M)\ge m\), completing the proof. 
\end{proof}

\end{document}